        \def\fixme#1{\typeout{FIXME in page \thepage : {#1}}\bgroup \color{red}{[{#1}]}\egroup}
        \def\note#1{\typeout{NOTE in page \thepage : {#1}}\bgroup \color{blue}{[{#1}]}\egroup}
        \def\fixme#1{}{}
        \def\note#1{}{}
\renewcommand{\L}{\mathcal{L}}
\newcommand{\xor}{\mathrm{\scriptsize\bf xor}}
\newcommand{\Q}{\mathbb{Q}}
\newcommand{\Z}{\mathbb{Z}}
\newtheorem{proposition}{Proposition}
\newtheorem{definition}[proposition]{Definition}
\newtheorem{notation}[proposition]{Notation}
\newtheorem{theorem}[proposition]{Theorem}
\newtheorem{lemma}[proposition]{Lemma}
\newtheorem{remark}[proposition]{Remark}
\newtheorem{example}[proposition]{Example}
\begin{document}

\title{Symmetric Interconnection Networks from Cubic Crystal Lattices}
\author{Crist\'obal Camarero, Carmen Mart\'inez and Ram\'on Beivide
    \\Electronics and Computers Department. University of Cantabria}
\date{}
\maketitle

\begin{abstract}
Torus networks of moderate degree have been widely used in the supercomputer industry. Tori are superb when used for executing applications that require near-neighbor communications. Nevertheless, they are not so good when dealing with global communications. Hence, typical 3D implementations have evolved to 5D networks, among other reasons, to reduce network distances. Most of these big systems are mixed-radix tori which are not the best option for minimizing distances and efficiently using network resources. This paper is focused on improving the topological properties of these networks.

By using integral matrices to deal with Cayley graphs over Abelian groups, we have been able to propose and analyze a family of high-dimensional grid-based interconnection networks. As they are built over $n$-dimensional grids that induce a regular tiling of the space, these topologies have been denoted \textsl{lattice graphs}. We will focus on cubic crystal lattices for modeling symmetric 3D networks. Other higher dimensional networks can be composed over these graphs, as illustrated in this research. Easy network partitioning can also take advantage of this network composition operation.
Minimal routing algorithms are also provided for these new topologies. Finally, some practical issues such as implementability and preliminary performance evaluations have been addressed.
\end{abstract}


\section{Introduction}

Interconnection networks are critical subsystems in modern supercomputers. Currently, the top 5 supercomputers composed of Cray XK7, IBM BluGene/Q and K computers, use moderate degree networks. The Cray employs a 3D torus whereas BlueGene uses a 5D one, \cite{Cray,BGQnetwork}. The K computer employs small 3D meshes (that can also be seen as $4 \times 3$ tori) connected by a bigger 3D torus \cite{Tofu}. All these topologies are mixed-radix torus, as they have dimensions of different sizes. For example, a configuration for a Cray Jaguar can be $25\times 32\times 16$ and a BlueGene configuration $16 \times 16 \times 16 \times 12 \times 2$. The $88,128$-node K computer installed at Riken, is compatible with a $17 \times 18 \times 24$ torus connecting 3D meshes of 12 nodes. Mixed-radix tori are not edge-symmetric, which can lead to unbalanced use of their network links. However, these big systems are typically divided into smaller partitions which enables them to be used by multiple users. Hence, providing symmetry, at least, in typical network partitions is an advisable design goal.\par

Tori are not well suited to support global and remote communications. Their relatively long paths among nodes, especially their diameter and average distance, incur high latencies and reduced throughput. Thus, reducing  topological distances in the network should be pursued. The way to achieve network distance reductions is by changing the topology. Topological changes depend on the router degree. If the router degree must be kept within the current values, it would be interesting to preserve the good topological properties of tori such as grid locality, easy partitioning and simple routing. Hence, practicable topological changes should not be radical. A typical technique employed to this end has been twisting the wrap-around links of tori, \cite{IlliacIV,Sequin,Beivide,Martin}. Interestingly, this twisting also allows for edge-symmetric networks of sizes for which their corresponding tori are asymmetric, \cite{Camara,CamareroTC}. Twisting 2D tori is nearly as old as the history of supercomputers. The Illiac IV developed in 1971 already employed a twisted network. Many works dealing with twisted 2D tori have been published since then. However, when scaling dimensions, the problem of finding a good twisting scheme becomes harder. Very few solutions are known for 3D, with the one presented in \cite{Camara} being a practicable example. Exploring the effect of twists in higher dimensions remains, to our knowledge, an unexplored domain. If the router degree can be increased, a radically different solution for reducing network diameter can be used in high-degree hierarchical networks, \cite{Cascade}. These direct networks employing high-degree routers are beyond the scope of this paper.\par

It has been recognized for a long time that Cayley graphs are well suited to interconnection networks. Actually, the widely used rings and tori are Cayley graphs. Nowadays, rings are common in on-chip networks \cite{XeonRing} and, as stated previously, tori dominate high-end supercomputing. In \cite{Fiol}, Fiol introduced multidimensional circulant graphs as a new algebraic representation for Cayley graphs over Abelian groups. This representation has proved its suitability for studying and characterizing 2D grid-based networks in \cite{CamareroTC}. In this paper, \textsl{lattice networks} are introduced as multidimensional circulants with orthonormal adjacencies, that is, multidimensional grids plus additional wrap-around links which complete their regular adjacency. Therefore, this work is devoted to the study of high dimensional twisted tori topologies. Although special attention will be devoted to symmetric 3D networks, higher dimensional topologies which embed these symmetric 3D networks will also be considered. Specifically, the main contributions of this paper are:

\begin{itemize}
\item A characterization of 3D symmetric networks which correspond to cubic crystal lattices.
\item A general method for lifting crystal graphs which leads to higher dimensional lattice networks that embed crystal networks.
\item A minimal routing mechanism that performs over any lattice network.
\item A first approach to practical issues such as implementability and a preliminary performance evaluation of these networks, which includes both topological models and empirical simulations.
\end{itemize}

The remainder of this paper is organized as follows. Section \ref{sec:lattice} defines lattice graphs, introduces the concepts of graph lift and projection and provides some network examples. Section \ref{sec:cubic} focuses on 3D networks, describes symmetric cubic crystal graphs and performs a topological comparison of these networks with standard mixed-radix tori. Section \ref{sec:lifting} introduces two methods for scaling crystal networks to higher dimensions and presents some examples. Section \ref{sec:routing} presents minimal routing algorithms for lattice networks. Section \ref{sec:practical} discusses implementability and performance issues. Finally, Section \ref{sec:conclu} concludes the paper summarizing its main findings.

\section{Lattice Graphs}\label{sec:lattice}

In this section we introduce \textsl{lattice graphs} which will be used to model interconnection networks of any finite dimension. The lattice graph is not a new concept, in fact, it has many different uses. The most extensively used, which is the one used in this paper, is as a graph built over an $n$-dimensional grid which induces a regular tiling of the space. On the other hand, lattice graphs also appear in the literature under other names for example, tiling graphs. Moreover, in \cite{Fiol}, multidimensional circulants were defined as lattice graphs but for any set of adjacencies (not only the orthonormal adjacencies considered in this work), which \textsl{a priori} can seem to be a wider family of graphs. However, it can be seen that any multidimensional circulant can be transformed into a lattice graph. Hence, the study presented in this section is devoted, in fact, to the family of Cayley graphs over finite Abelian groups. Later on in this section, the concepts of \textsl{projection} and \textsl{lift} of a lattice graph will be stated. Projecting a lattice graph allows the study of the different lattice graphs of smaller dimensions which are embedded in it, while lifting a lattice graph will be used for increasing its dimension.\par

Lattice graphs are defined over the integer lattice $\mathbb{Z}^{n}$. Hence, their nodes are labelled by means of $n$-dimensional (column) integral vectors. A lattice graph can be intuitively seen as a multidimensional grid with additional wrap-around links completing the regular adjacency. Before proceeding with their formal definition, first we introduce some notation.

\begin{notation} The following notation will be used throughout the article:
\begin{itemize}
\item Lower case letters denote integers: $a$, $b$, $\ldots$
\item Bold font denotes integer column vectors: $\mathbf{v}$, $\mathbf{w}$, $\ldots$
\item Capitals correspond to integral matrices: $M$, $P$, $\ldots$
\item $\mathbf{e}_i$ denotes the vector with a $1$ in its $i$-th component and $0$ elsewhere.
\item $\mathcal{B}_{n} = \{\mathbf{e}_i \ | \ i=1, \dotsc, n\}$ denotes the $n$-dimensional orthonormal basis.
\end{itemize}
\end{notation}

To define the finite set of nodes of these graphs and their wrap-around links, a modulo function using a square integer matrix will be used. Hence, congruences modulo matrices are introduced in the next definition.

\begin{definition} \cite{Fiol} Let $M \in\mathbb{Z}^{n\times n}$
be a non-singular square matrix of dimension $n$. Two vectors $\mathbf{v}, \mathbf{w} \in \mathbb{Z}^{n}$ are congruent modulo $M$ if and only if we have
$\mathbf u = \begin{pmatrix} u_1\\ u_2\\ \vdots \\ u_n \\ \end{pmatrix} \in \mathbb{Z}^{n}$ such that: $$\mathbf{v}-\mathbf{w} = u_1 \mathbf{m}_1 + u_2 \mathbf{m}_2 + \dotsb +u_n\mathbf{m}_n = M\mathbf u$$
where $\mathbf{m}_{j}$ denotes the $j$-th column of $M$. We will denote this congruence as $\mathbf{v} \equiv \mathbf{w} \pmod{M}$.
\end{definition}

The set of nodes of a lattice graph will be the elements of the quotient group $$\mathbb{Z}^{n}/M\mathbb{Z}^{n} = \{\mathbf{v} \pmod{M} \ | \ \mathbf{v} \in \mathbb{Z}^{n}\}$$ generated by the equivalence relation induced by $M$. As was proved in \cite{Fiol}, $\mathbb{Z}^{n}/M\mathbb{Z}^{n}$
has $| \det(M) |$ elements. Now, we can proceed with a formal definition of a lattice graph.

\begin{definition} Given a square non-singular integral matrix $M \in \mathbb{Z}^{n \times n}$, we define the \textsl{lattice graph} generated by $M$ as $\mathcal{G}(M)$, where:
\begin{enumerate}
\item The vertex set is $\mathbb{Z}^{n}/M\mathbb{Z}^{n} = \{\mathbf{v} \pmod{M} \ | \ \mathbf{v} \in \mathbb{Z}^{n}\}$.
\item Two nodes $\mathbf{v}$ and $\mathbf{w}$ are adjacent if and only if $\mathbf{v}-\mathbf{w} \equiv \pm \mathbf e_i \pmod{M}$ for some $i\in\{1, \dotsc, n\}$.
\end{enumerate}
\end{definition}

From here onwards, all matrices will be considered to be non-singular, unless the contrary is stated. Note that, since $\mathbb{Z}^{n}/M\mathbb{Z}^{n}$ has $|\det (M)|$ elements, this will be the number of nodes of $\mathcal{G}(M)$. Moreover, since any vertex $\mathbf{v}$ is adjacent to $\mathbf{v} \pm \mathbf{e}_i \pmod{M}$, the lattice graph $\mathcal{G}(M)$ is regular of degree $2n$, that is, any node has $2n$ different neighbours. As stated in the following two paragraphs, tori are lattice graphs.

\begin{definition} The $n$-dimensional torus graph of sides $a_1,\dotsc,a_n$,
denoted by $T(a_1,\dotsc,a_n)$ is defined as a graph with vertices $\mathbf{x} \in \Z ^n$ such that
$0\leq x_i<a_i$. Two vertices $\mathbf x$ and $\mathbf y$ are adjacent if and only if
they differ in exactly one coordinate, let us say $i$, for which $x_i\equiv y_i \pm 1 \pmod{a_i}$.
\end{definition}

\begin{theorem}\label{th:torus} The torus graph $T(a_1,\dotsc,a_n)$ is isomorphic
to the lattice graph $\mathcal G(\mathrm{diag}(a_1,\dotsc,a_n))$, where $\mathrm{diag}(a_1,\dotsc,a_n)$ denotes the square diagonal
matrix with diagonal equal to $a_1,\dotsc,a_n$.
\end{theorem}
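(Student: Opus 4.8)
The plan is to exhibit an explicit bijection between the two vertex sets and then check that it transports one adjacency relation onto the other. Write $M=\mathrm{diag}(a_1,\dotsc,a_n)$. The first observation is that congruence modulo this diagonal matrix decouples coordinatewise: since the $j$-th column of $M$ is $a_j\mathbf e_j$, the relation $\mathbf v\equiv\mathbf w\pmod M$ holds precisely when $v_j-w_j$ is a multiple of $a_j$ for every $j$, i.e. when $v_j\equiv w_j\pmod{a_j}$ for all $j$. Consequently each class of $\Z^n/M\Z^n$ contains exactly one representative $\mathbf x$ with $0\le x_j<a_j$ for all $j$, and this set of representatives is precisely the vertex set of $T(a_1,\dotsc,a_n)$. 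I would therefore define $\phi\colon T(a_1,\dotsc,a_n)\to\mathcal G(M)$ by $\phi(\mathbf x)=\mathbf x\pmod M$; it is a bijection by uniqueness of this representative, consistent with the count $|\det M|=a_1\cdots a_n$ guaranteed earlier in the excerpt.

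Next I would verify that $\phi$ preserves adjacency in both directions. Two torus vertices $\mathbf x,\mathbf y$ are adjacent iff they agree in every coordinate except one index $i$, where $x_i\equiv y_i\pm1\pmod{a_i}$. Unpacking the lattice-graph condition $\mathbf x-\mathbf y\equiv\pm\mathbf e_i\pmod M$ through the coordinatewise description above yields exactly $x_j\equiv y_j\pmod{a_j}$ for $j\ne i$ together with $x_i\equiv y_i\pm1\pmod{a_i}$. Because $\mathbf x$ and $\mathbf y$ are the canonical representatives, the congruences $x_j\equiv y_j\pmod{a_j}$ for $j\ne i$ force genuine equalities $x_j=y_j$. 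Hence lattice adjacency between $\phi(\mathbf x)$ and $\phi(\mathbf y)$ is equivalent to torus adjacency between $\mathbf x$ and $\mathbf y$, so $\phi$ is a graph isomorphism.

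The one point that needs care, and which I expect to be the main (if minor) obstacle, is the behaviour at short sides $a_i\in\{1,2\}$, where the generators $+\mathbf e_i$ and $-\mathbf e_i$ may fail to be distinct modulo $M$. When $a_i=1$ the relation $\mathbf e_i\equiv\mathbf 0\pmod M$ would formally produce a self-loop, and when $a_i=2$ the classes of $+\mathbf e_i$ and $-\mathbf e_i$ coincide, collapsing the two wrap-around links to a single edge. I would dispatch this by observing that the torus definition, with its requirement that adjacent vertices \emph{differ} in coordinate $i$, degenerates in exactly the same way: no edge is created along direction $i$ when $a_i=1$, and the $\pm$ edges coincide when $a_i=2$. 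Thus both constructions yield the same simple graph, and the equivalence of adjacency relations established above is unaffected. With the bijection and this matching of edges in hand, the isomorphism follows.
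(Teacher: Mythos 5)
Your proposal is correct and follows essentially the same route as the paper: identify the vertex set of $T(a_1,\dotsc,a_n)$ with the canonical residues of $\Z^n/M\Z^n$ for $M=\mathrm{diag}(a_1,\dotsc,a_n)$ and check that adjacency transfers. You are in fact somewhat more thorough than the paper's proof, which only verifies that torus adjacency implies lattice adjacency and does not discuss the degenerate sides $a_i\in\{1,2\}$; your converse direction and the loop/multi-edge remark fill in details the paper leaves implicit.
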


\begin{proof}
Clearly the vertex space of both graphs is the same.
In the following we check that adjacencies are preserved.
If $\mathbf x$ is connected to $\mathbf y$ then it holds that
	$\mathbf y-\mathbf x
	=(y_i-x_i)\mathbf e_i$.
Then for some integer $k$,
	$\mathbf y-\mathbf x
	=(\pm1 +ka_i)\mathbf e_i
	=\pm \mathbf e_i +ka_i\mathbf e_i
	=\pm \mathbf e_i +\mathrm{diag}(a_1,\dotsc,a_n)k\mathbf e_i
	$.
Hence $\mathbf y-\mathbf x\equiv \pm \mathbf e_i \pmod{\mathrm{diag}(a_1,\dotsc,a_n)}$.
\end{proof}

Next we recall some known results from \cite{Fiol} about right-equivalent matrices.

\begin{definition} \label{def:rightequivalent} $M_1$ is \textsl{right equivalent} to $M_2$, which is denoted by $M_1 \cong M_2$, if and only if there exists a unitary matrix $P \in \mathbb{Z}^{n \times n}$ such that
$M_1 = M_2 P$.
\end{definition}

As was proved in \cite{Fiol}, if $M_1 \cong M_2$ then the graphs $\mathcal{G}(M_1)$ and $\mathcal{G}(M_2)$ are isomorphic. As a consequence, performing Gaussian elimination by columns in the generating matrix gives isomorphic graphs. After one phase of Gaussian elimination we obtain:
$$M  \cong \begin{pmatrix}B& \mathbf{c} \\0&a\end{pmatrix}$$
where $B \in \mathbb{Z}^{n-1 \times n-1}$ is a matrix of smaller dimension, $\mathbf{c} \in \mathbb{Z}^{n-1}$ is a column vector and $a$ is a positive integer. As a consequence, we obtain that $|\det(M)|=|\det(B)|a$, that is, the order of $\mathcal{G}(M)$ can be expressed in terms of $\mathcal{G}(B)$ and the integer $a$. Moreover, the lattice graph $\mathcal{G}(B)$ is isomorphic to the subgraph of $\mathcal{G}(M)$ generated by $\{\pm \mathbf{e}_1,\pm \mathbf{e}_2,\dots,\pm \mathbf{e}_{n-1}\}$, which allows us to state the following definition.

\begin{definition}\label{def:projection} Let $M \in \mathbb{Z}^{n \times n}$ be non-singular and $\mathcal{G}(M)$ be its lattice graph. Let us consider $M \cong \begin{pmatrix}B& \mathbf{c} \\0&a\end{pmatrix}$
such that $a$ is a positive integer. Then, we will say that $a$ is the \textsl{side} of $\mathcal{G}(M)$ and $\mathcal{G}(B)$ its \textsl{projection} over $\mathbf{e}_{n}$. Moreover, we will call $\mathcal{G}(M)$ a \textsl{lift} of $\mathcal{G}(B)$.
\end{definition}

In particular, any lattice graph can be considered to be generated by its unique Hermite matrix, which may be convenient as Examples \ref{eje:1} and \ref{eje:2} attempt to demonstrate. Before stating the examples, we recall the Hermite normal form of a matrix.

\begin{definition}\label{def:hermite}
A matrix $H$ is said to be in \emph{Hermite normal form} if it is upper
triangular, has positive diagonal and each $H_{i,j}$ with $j>i$ lies in a
complete set of residues modulo $H_{i,i}$.
\end{definition}

Definitions \ref{def:projection} and \ref{def:hermite} allow us to consider a helpful graphical visualization of any lattice graph which will also be used for routing in Section \ref{sec:routing}. First, lattice graphs and their subgraphs can be seen as $n$-dimensional spaces whose dimensions are sized by the elements in the principal diagonal of $M$. Each column vector in $M$ represents a graph dimension, signaling the point in the space at which a new copy of the tile induced by $M$ is located; this is important as column vectors dictate the pattern of the wrap-around connections of each dimension.\par

Moreover, from the cardinal equality $|\mathcal{G}(M)|=|\mathcal{G}(B)|a$, the lattice graph $\mathcal{G}(M)$ can be seen as composed of $a$ disjoint copies of its projection $\mathcal{G}(B)$. One or several parallel cycles connect these disjoint copies completing the adjacency pattern. The length of these cycles can be computed as $ord(\mathbf{e}_{n})$, which is the order of the element $\mathbf{e}_{n}$ in the group $\mathbb{Z}^{n}/M\mathbb{Z}^{n}$. According to \cite{Fiol}, the order of any element $\mathbf{x}$ can be computed as $$\frac{\det(M)}{\gcd(\det(M),\gcd(\det(M) M^{-1}\mathbf{x}))}.$$
Note that the second $\gcd$ (greatest common divisor) in the fraction corresponds to the $\gcd$ of the elements of a vector. The number of vertices of each cycle lying in each copy of $\mathcal{G}(B)$ can be calculated as the length of the cycle over the side of the graph, that is $\frac{ord(\mathbf{e}_{n})}{a}.$

\begin{example}\label{eje:1} Let us consider the \textsl{rectangular twisted torus} of size $2a \times a$ and twist $a$, denoted as $RTT(a)$ in \cite{Camara}.
This graph can be seen to be generated by the matrix $H=\begin{pmatrix}2a&a\\0&a\end{pmatrix}$. Using $H$, the graph can be seen as a grid of $2a\times a$ ($h_{1,1}\times h_{2,2}$). Wrap-around links in $\mathbf{e}_1$ (first) dimension conserve their horizontality since $h_{2,1}=0$; wrap-around links in $\mathbf{e}_2$ (second) dimension do not conserve their verticality but suffer a twist of $a$ columns since $h_{1,2}=a$. According to Definition \ref{def:projection}, the projection over $\mathbf{e_2}$ of $RTT(a)$ is a cycle of $2a$ nodes each. As the side of $RTT(a)$ is $a$, it will have $a$ disjoint cycles of $2a$ nodes. As $ord(\mathbf{e}_{2})$ (the element representing a jump in $\mathbf{e}_2$ dimension) is $2a$, the graph will have $a$ parallel cycles of length $2a$ in that dimension. Each of these $a$ cycles contains two vertices of each projection. A graphical representation of $RTT(4)$ can be seen in Figure \ref{fig:RTTcycles}.\end{example}

\RTTCYCLES

\begin{example}\label{eje:2} Let us now consider the lattice graph $\mathcal{G}(M)$ with $M=\begin{pmatrix}4&0&0\\0&4&2\\0&0&4\end{pmatrix}$.
Note that $M$ is in Hermite form. $\mathcal{G}(M)$ can be seen as a $4\times 4\times 4$ cubic grid. Three sets of wrap-around links, each one connecting opposite faces, have to be added to the grid-based cube. Wrap-around links in $\mathbf{e}_1$ always remain horizontal by construction, as imposed by the $n-1$ zeros in the first column vector of any Hermite matrix. Wrap-around links in the $\mathbf{e}_2$ dimension remain vertical in this graph because $m_{1,2}=0$ but, in general, they can undergo only a twist over the $\mathbf{e}_1$ dimension of $m_{1,2}$ units. Finally, wrap-around links in the $\mathbf{e}_3$ dimension can undergo twists over both $\mathbf{e}_1$ and $\mathbf{e}_2$ dimensions. In the graph of this example, no twist is applied in $\mathbf{e}_3$ over $\mathbf{e}_1$ because $m_{1,3}=0$ and a twist of 2 units is applied over the $\mathbf{e}_2$ dimension as $m_{2,3}=2$. As can be seen in Figure \ref{fig:joiningcycle}, the projection of $\mathcal{G}(M)$ is $\mathcal{G}\begin{pmatrix}4&0\\0&4\end{pmatrix}$, a 2D torus $T(4, 4)$. Thus, the graph is composed of 4 disjoint copies of its projection, each of them connected by a cycle of length 8, as represented in the figure. Note that for every vertex in the graph there will be a similar cycle with the same pattern as the one represented in the figure. The cycle intersects in two vertices with each copy of the projection. For the sake of the clarity, only one cycle between copies of $\mathcal{G}\begin{pmatrix}4&0\\0&4\end{pmatrix}$ has been represented.\end{example}

\JOININGCYCLETRIDIMENSIONALSHADOWAXIS

Note that we can project over any $\mathbf{e}_i$, simply by swapping rows $i$ and $n$ (which gives an automorphic graph) and then, project over $\mathbf{e}_n$. Moreover, as we will see later, symmetries will make irrelevant over which dimension we project, so we will consider $\mathbf{e}_n$ by default. The resulting projection can again be projected over another vector, which results in a projection over a plane of the lattice graph.
Clearly, projecting over a pair of vectors $\{\mathbf e_i , \mathbf e_j \}$ can be done in any order, since projecting first over $\mathbf e_i$ and then over $\mathbf e_j$ results in the same graph as projecting first over $\mathbf e_j$ and then over $\mathbf e_i$. Following the same idea, we can project over several dimensions iteratively.
Therefore, we will call the result of projecting iteratively over the vectors
in the set $\{\mathbf e_{i_1}, \dotsc , \mathbf e_{i_r} \}$ the
\textsl{projection} of $\mathcal{G}(M)$ \textsl{over the set}.  In this case we
will call it a $r$-dimensional projection which turns into a lattice graph
generated by a $(n-r) \times (n-r)$ matrix.

\section{Cubic Crystal Graphs}\label{sec:cubic}

Symmetry is a desirable property for any network as it impacts on performance and routing efficiency. Many interconnection networks have been based
on vertex-symmetric graphs, but less attention has been devoted to edge-symmetric networks. Square and cubic tori have been the networks
of choice for many designs as they are symmetric (vertex and edge symmetric). For this reason, symmetric lattice graphs will be considered in this section. Hence, we next introduce the concept of a symmetric graph.\par

A graph $G=(V,E)$ is \emph{vertex-symmetric} (or \emph{vertex-transitive})
if for each pair of vertices $(x,y)\in V$, there is an automorphism $\phi$ of $G$ such that
$\phi(x)=y$. Also, $G$ is \emph{edge-symmetric} (or \emph{edge-transitive})
if for each pair of edges $(\{x_1,x_2\},\{y_1,y_2\})\in E$, there is an
automorphism $\phi$ of $G$ such that $\phi(\{x_1,x_2\})=\{\phi(x_1),\phi(x_2)\}=\{y_1,y_2\}$.
Finally, $G$ is said to be \emph{symmetric} when it is both vertex-symmetric and edge-symmetric. Since every Cayley graph is vertex-symmetric \cite{Krishna}, we will focus on edge-symmetry.
As was shown in \cite{IWONT10}, the consideration of non-linear automorphisms in the edge-symmetry characterization leads to marginal families of
graphs which do no exemplify the general behaviour. Hence, in this paper we will refer only to automorphisms which are
linear applications. Therefore, in an abuse of notation, symmetric graphs will refer to those in which there exist a linear automorphism fulfilling the previous definition.

\begin{theorem}\label{theo:isoprojection} The projections of a symmetric lattice graph are all isomorphic.
\end{theorem}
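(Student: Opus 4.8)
The plan is to reduce the claim to the existence, for each pair of coordinate directions $i$ and $j$, of a linear automorphism of $\mathcal{G}(M)$ that carries the adjacency $\mathbf{e}_i$ to $\pm\mathbf{e}_j$ and permutes the remaining adjacencies among themselves. The crucial preliminary observation, already recorded before Definition \ref{def:projection}, is that the projection of $\mathcal{G}(M)$ over $\mathbf{e}_i$ is isomorphic to the subgraph $\mathcal{G}_i$ of $\mathcal{G}(M)$ generated by the adjacency set $\{\pm\mathbf{e}_k \mid k \neq i\}$ (the case $i=n$ is exactly that statement, and the general case follows by first applying the row-swap automorphism mentioned after Example \ref{eje:2}). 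Thus it suffices to exhibit, for arbitrary $i$ and $j$, a graph isomorphism $\mathcal{G}_i \to \mathcal{G}_j$, and this isomorphism will be produced as the restriction of a suitable automorphism of the whole graph.

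First I would show that every linear automorphism $\phi$ of $\mathcal{G}(M)$ permutes the generator set $\{\pm\mathbf{e}_1,\dots,\pm\mathbf{e}_n\}$ modulo $M$. Since $\phi$ is linear it fixes the zero coset, and the neighbours of the origin in $\mathcal{G}(M)$ are precisely the classes $\pm\mathbf{e}_k \pmod M$; as $\phi$ maps neighbours of the origin to neighbours of the origin bijectively, it induces a signed permutation of these generators, so $\phi(\mathbf{e}_i)\equiv\pm\mathbf{e}_{\sigma(i)}\pmod M$ for some permutation $\sigma$. Next, invoking edge-symmetry (in the linear sense adopted in this section) together with vertex-transitivity of the Cayley graph, for any $i,j$ there is a linear automorphism $\phi$ fixing the origin with $\phi(\mathbf{e}_i)\equiv\pm\mathbf{e}_j\pmod M$. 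Combining these two facts, $\phi$ sends $\pm\mathbf{e}_i$ to $\pm\mathbf{e}_j$ and therefore maps the complementary set $\{\pm\mathbf{e}_k\mid k\neq i\}$ bijectively onto $\{\pm\mathbf{e}_l\mid l\neq j\}$. Consequently $\phi$ carries the subgroup generated by $\{\mathbf{e}_k\mid k\neq i\}$ onto the subgroup generated by $\{\mathbf{e}_l\mid l\neq j\}$ and takes edges in the directions $k\neq i$ to edges in the directions $l\neq j$, which is exactly an isomorphism $\mathcal{G}_i\to\mathcal{G}_j$, hence an isomorphism of the corresponding projections. As $i$ and $j$ are arbitrary, all projections are mutually isomorphic.

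The step I expect to be the main obstacle is the first one: rigorously arguing that a linear automorphism must permute the canonical generators rather than sending some $\mathbf{e}_i$ to a more complicated class that merely happens to be adjacent to the origin. This is where the precise definition of adjacency --- namely that $\mathbf{v}$ and $\mathbf{w}$ are adjacent exactly when $\mathbf{v}-\mathbf{w}\equiv\pm\mathbf{e}_i\pmod M$ --- does the work, since it forces the full neighbourhood of the origin to be $\{\pm\mathbf{e}_1,\dots,\pm\mathbf{e}_n\}\pmod M$ and nothing more. A minor point to check along the way is that the sign ambiguity ($\phi(\mathbf{e}_i)\equiv-\mathbf{e}_j$ rather than $+\mathbf{e}_j$) is harmless, because the direction class $\{\pm\mathbf{e}_j\}$ is what determines which projection is obtained, so either sign yields the same target subgraph $\mathcal{G}_j$.
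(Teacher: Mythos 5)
Your proposal is correct and follows essentially the same route as the paper's own proof: identify each projection with the subgraph generated by the remaining adjacencies, use symmetry to obtain a linear automorphism with $\phi(\mathbf e_i)=\pm\mathbf e_j$, and observe that it restricts to an isomorphism between the two subgraphs. Your extra step verifying that a linear automorphism induces a signed permutation of the generators merely makes explicit what the paper leaves implicit (it is recorded as a lemma in the paper's Appendix), so the two arguments coincide in substance.
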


\begin{proof}
Let us denote $proj_i(\mathcal G(M))$ to be the projection of $\mathcal G(M)$ over $\mathbf{e}_i$. We know $proj_i(\mathcal G(M))$ is isomorphic to the subgraph of $\mathcal G(M)$
generated by $\mathcal B_n\setminus \{\mathbf e_i\}$.
As $\mathcal G(M)$ is symmetric we know $\phi \in Aut(\mathcal G(M))$ such that
$\phi(\mathbf e_i)=\pm \mathbf e_j$.
As $\mathbf e_i$ is the only generator not in $proj_i(\mathcal G(M))$,
$\mathbf e_j$ is the only generator not in $\phi(proj_i(\mathcal G(M)))$.
Hence, as $\phi$ is an automorphism, we deduce that $proj_i(\mathcal G(M))\cong proj_j(\mathcal G(M))$.
\end{proof}

Now, we concentrate on 3D symmetric graphs. In Appendix I it is proved that the only symmetric 3D lattice graphs are the ones given by the matrices described in the next result.

\begin{theorem}\label{theo:simetricos} Let $M \in \mathbb{Z}^{3 \times 3}$. Then, the lattice graph $\mathcal{G}(M)$ is symmetric if and only if it is isomorphic to $\mathcal{G}(M')$
where: $$M' \in \left \{ \begin{pmatrix}a&c&b\\b&a&c\\c&b&a\end{pmatrix}, \begin{pmatrix}a&b&c\\a&c&-b-c\\a&-b-c&b\end{pmatrix} \right \}.$$
\end{theorem}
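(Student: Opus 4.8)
The plan is to exploit that $\mathcal G(M)$ is a Cayley graph of the Abelian group $\Z^3/M\Z^3$ with connection set $\pm\mathcal B_3=\{\pm\mathbf e_1,\pm\mathbf e_2,\pm\mathbf e_3\}$, so it is automatically vertex-symmetric and only edge-symmetry is at issue. Writing $L=M\Z^3$, I would first show that a linear automorphism fixing the origin is exactly the reduction modulo $L$ of an integer signed permutation matrix $P$ (one $\pm1$ per row and column) satisfying $PL=L$: such an automorphism must send each neighbour $\mathbf e_i$ of $0$ to another neighbour $\pm\mathbf e_{\sigma(i)}$, and since the $\mathbf e_i$ generate $\Z^3$ this forces the map to agree with the signed permutation $P$ determined by $\mathbf e_i\mapsto\pm\mathbf e_{\sigma(i)}$, which descends to the quotient precisely when $PL=L$. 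Because $-I$ always satisfies $(-I)L=L$, the $\pm$ pairs of edges at $0$ are already fused, so edge-transitivity is equivalent to the subgroup of realised permutations $\sigma$ being transitive on $\{1,2,3\}$; as a subgroup of $S_3$ is transitive exactly when it contains a $3$-cycle, $\mathcal G(M)$ is symmetric if and only if some such $P$ has underlying permutation a $3$-cycle.

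Next I would normalise the $3$-cycle. Conjugating $P$ by a diagonal sign matrix $D=\mathrm{diag}(\pm1,\pm1,\pm1)$ replaces $L$ by $DL$ and $P$ by $DPD^{-1}$; the map $\mathbf v\mapsto D\mathbf v$ is an isomorphism of lattice graphs since it preserves $\pm\mathcal B_3$, so this operation does not change the problem, and only the product of the three signs of $P$ around its cycle is a conjugacy invariant. If the product is $+1$ then $P$ has order $3$ and is conjugate to the plain cyclic shift $P_0$; if it is $-1$ then $P$ has order $6$, but $P^2$ is then an order-$3$ signed cycle of sign product $+1$, again conjugate to $P_0$. Therefore, after an isomorphism, it suffices to classify the finite-index lattices $L\subseteq\Z^3$ invariant under $P_0$, i.e. the finite-index ideals of the ring $R=\Z[x]/(x^3-1)$, where $x$ acts as $P_0$ and $\Z^3\cong R$ via $\mathbf e_1,\mathbf e_2,\mathbf e_3\mapsto 1,x,x^2$.

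The core of the argument is then the classification of these ideals up to the isomorphisms available (right-equivalence, i.e. change of $\Z$-basis, together with the $\pm\mathcal B_3$-preserving maps above and the ring automorphisms of $R$). Using $x^3-1=(x-1)(x^2+x+1)$ and the rational decomposition $R\otimes\Q\cong\Q\times\Q(\omega)$, with $\omega$ a primitive cube root of unity, I would split $\Z^3\otimes\Q=\langle(1,1,1)\rangle\oplus\{x+y+z=0\}$: the trace part is a rank-$1$ $\Z$-lattice and the trace-zero part is a lattice over the Eisenstein integers $\Z[\omega]$. Since $\Z[\omega]$ is a principal ideal domain, each piece is rigid up to scaling, and the only genuine freedom is the gluing (extension) between the two pieces, constrained by the conductor of the non-maximal order $R$ inside $\Z\times\Z[\omega]$. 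Reducing this gluing datum to a normal form should leave exactly two inequivalent classes: the trivial gluing, whose $\Z$-basis is a single $P_0$-orbit $\mathbf c,P_0\mathbf c,P_0^2\mathbf c$ and hence assembles into the circulant matrix $\left(\begin{smallmatrix}a&c&b\\b&a&c\\c&b&a\end{smallmatrix}\right)$ with $\mathbf c=(a,b,c)^\top$, and the one nontrivial gluing, which gives the second matrix. I expect this gluing/ideal-class count to be the main obstacle, since one must show that no further reduction merges or splits the classes and must pin down the precise entries of the second normal form.

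Finally, for the converse I would verify directly that both listed families are symmetric, which by the first step only requires exhibiting the automorphism $P_0$. For the circulant matrix this is immediate, since circulants commute with the shift: $P_0M=MP_0$, so $M^{-1}P_0M=P_0\in\mathrm{GL}_3(\Z)$. For the second matrix a direct check shows that $P_0$ fixes the first column and permutes the other two within $L$ (one maps to the other, while the remaining image equals minus their sum), so again $P_0L=L$. Combining both directions yields the stated characterisation.
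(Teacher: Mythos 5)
Your first step matches the paper's: the paper likewise reduces (linear) symmetry to the existence of a signed permutation automorphism of order $3$ (its Lemma on the orders of signed permutations), and your upfront normalisation by diagonal sign conjugation---reducing every signed $3$-cycle to the plain shift $P_0$ \emph{before} classifying lattices---is in fact tidier than the paper's after-the-fact Lemmas showing $P_1\sim P_2\sim P_3\sim P_4$ and that the four symbolic families $M_i$ (resp.\ $M_i'$) collapse to one. Where you genuinely diverge is the core classification. The paper classifies the integer matrices $Q$ in $PM=MQ$ up to unimodular similarity: using Newman's block-triangularisation theorem it bounds the number of similarity classes with characteristic polynomial $x^3-1$ by three (a determinant-$3$ computation), then shows by explicit matrix identities that exactly two classes $Q_1,Q_2$ survive, and finally solves $P_iM=MQ_j$ symbolically to get the two families. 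You propose the Latimer--MacDuffee-dual route: classify $P_0$-invariant lattices as modules over $R=\Z[x]/(x^3-1)$ via the splitting $\Q\times\Q(\omega)$ and conductor gluing. That route does work, but you have left precisely the decisive step unproven: ``should leave exactly two inequivalent classes'' and ``I expect this \dots\ to be the main obstacle'' announce the key lemma rather than prove it, and that lemma \emph{is} the mathematical content of the paper's appendix. To fill it you would invoke (or reprove) the Diederichsen--Reiner classification of $\Z C_3$-lattices: since $\Z[\omega]$ is a PID, a rank-$3$ lattice on which $x$ acts with minimal polynomial $x^3-1$ is either the split module $\Z\oplus\Z[\omega]$ or the unique nonsplit extension, namely $R$ itself, the nonzero classes in $\operatorname{Ext}^1_{\Z C_3}(\Z,\Z[\omega])\cong\Z[\omega]/(1-\omega)\cong\mathbb{F}_3$ being fused by automorphisms.

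There is also a concrete error inside the sketch: you have the two gluings swapped. A lattice with a single-orbit basis $\mathbf c,P_0\mathbf c,P_0^2\mathbf c$ is cyclic as an $R$-module, hence isomorphic to $R$, i.e.\ it realises the \emph{nontrivial} (nonsplit) gluing; the split module $\Z\oplus\Z[\omega]$ is never cyclic, since $R$ has index $3$ in the maximal order $\Z\times\Z[\omega]$ and so one generator only ever spans a proper sublattice. It is the \emph{split} class whose adapted basis $(a,a,a)^\top,\ \mathbf v,\ P_0^{-1}\mathbf v$ with trace-zero $\mathbf v=(b,c,-b-c)^\top$ yields the second matrix $M_1'$---consistent with the paper, where the block-diagonal $Q_1$ produces $M_1'$ while $Q_2\sim P_1$ produces the circulant $M_1$. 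Your final verification that $P_0$ preserves both families is correct (for $M_1'$ one indeed gets $P_0\mathbf m_3=\mathbf m_2$ and $P_0\mathbf m_2=-\mathbf m_2-\mathbf m_3$), and the mislabelling would surface the moment you executed the normal-form computation; but as written, the identification of the families with the module classes is wrong, and together with the unproven two-class count the proposal stops short of a proof.
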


The previous characterization gives us a broad family of symmetric graphs. However, there are matrices belonging to the first case that deserve special attention, such as the ones that generate the cubic crystal lattices \cite{Janssen}, which are:

\begin{itemize}
\item \textbf{Primitive Cubic Lattice:} $\begin{pmatrix}a&0&0\\0&a&0\\0&0&a\end{pmatrix}$.
\item \textbf{Face-centered Cubic Lattice:} $\begin{pmatrix}a&a&0\\a&0&a\\0&a&a\end{pmatrix}$.
\item \textbf{Body-centered Cubic Lattice:} $\begin{pmatrix}-a&a&a\\a&-a&a\\a&a&-a\end{pmatrix}$.
\end{itemize}

In the following subsections we will consider the lattice graphs defined by cubic crystal lattices, their isomorphisms with previously studied network topologies and a comparison among them in terms of their distance properties.

\CUBICGRAPHS

\subsection{Primitive Cubic lattice graph}

We define the \textbf{Primitive Cubic Lattice Graph} $PC(a)$ as the lattice graph generated by the matrix associated with the primitive cubic lattice, that is:
$$\begin{pmatrix}a&0&0\\0&a&0\\0&0&a\end{pmatrix}.$$

Clearly, the order of the graph is $a^3$, which is the determinant of the diagonal matrix. According to Theorem \ref{th:torus}, $PC(a)$ is isomorphic to the 3D torus of side $a$, or equivalently, the $a$-ary 3-cube.

\begin{lemma} The projection of $PC(a)$ is the 2D torus graph of side $a$ or $\mathcal{G}(\begin{pmatrix}a&0\\0&a\end{pmatrix})$.
\end{lemma}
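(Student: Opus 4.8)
The plan is to read the projection straight off Definition \ref{def:projection}, since the generating matrix of $PC(a)$ is already in the block shape that definition requires. Recall that $PC(a) = \mathcal{G}(\mathrm{diag}(a,a,a))$, so the generating matrix is
$$M = \begin{pmatrix}a&0&0\\0&a&0\\0&0&a\end{pmatrix}.$$
First I would partition $M$ as $\begin{pmatrix}B & \mathbf{c}\\ 0 & a\end{pmatrix}$ with $B = \begin{pmatrix}a&0\\0&a\end{pmatrix}$, $\mathbf{c} = \mathbf{0}$, and lower-right entry $a$, which is a positive integer. Because $M$ is already upper triangular with bottom row $(0,0,a)$, no column operations (Gaussian elimination) are needed: $M$ itself exhibits exactly the form demanded by Definition \ref{def:projection}.

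Applying that definition verbatim, the side of $PC(a)$ equals $a$ and its projection over $\mathbf{e}_3$ is $\mathcal{G}(B) = \mathcal{G}\begin{pmatrix}a&0\\0&a\end{pmatrix}$. Finally, Theorem \ref{th:torus} identifies $\mathcal{G}\begin{pmatrix}a&0\\0&a\end{pmatrix} = \mathcal{G}(\mathrm{diag}(a,a))$ with the $2$-dimensional torus $T(a,a)$ of side $a$, which is precisely the assertion of the lemma.

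Since there is essentially nothing to compute, there is no genuine obstacle here; the only subtlety worth flagging is that the projection is defined relative to a distinguished vector ($\mathbf{e}_n = \mathbf{e}_3$). However, $PC(a)$ is a symmetric lattice graph (it is a cubic torus), so Theorem \ref{theo:isoprojection} guarantees that the direction of projection is immaterial: all three projections are isomorphic, each to $T(a,a)$. Hence the conclusion holds regardless of which pair of opposite faces one collapses.
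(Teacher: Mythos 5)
Your proof is correct and matches the paper's treatment: the paper states this lemma without proof, regarding it as immediate since $\mathrm{diag}(a,a,a)$ is already in the block form $\begin{pmatrix}B&\mathbf{c}\\0&a\end{pmatrix}$ required by Definition \ref{def:projection}, exactly as you read off, with Theorem \ref{th:torus} identifying $\mathcal{G}(\mathrm{diag}(a,a))$ as $T(a,a)$. Your closing remark invoking Theorem \ref{theo:isoprojection} to show the choice of projection direction is immaterial is a sound (and welcome) extra precision beyond what the paper records.
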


\subsection{Face-centered Cubic lattice graph}

The \textbf{Face-centered Cubic lattice graph} $FCC(a)$ of side $a$ can be defined as
the lattice graph generated by the matrix associated with the face-centered cubic crystal lattice, that is:
$$\begin{pmatrix}a&a&0\\a&0&a\\0&a&a\end{pmatrix} \cong \begin{pmatrix}2a&a&a\\0&a&0\\0&0&a\end{pmatrix}.$$

The order of the graph is $|\det (M)| = 2|a|^3$.

\begin{lemma}\label{lem:FCC} The projection of $FCC(a)$ is the rectangular twisted torus graph of side $a$, $RTT(a)$. \end{lemma}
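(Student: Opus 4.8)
The plan is to read the projection straight off the upper-triangular representative of the generating matrix that is already recorded in the definition of $FCC(a)$, and then to recognise the resulting $2\times 2$ block as the generator of $RTT(a)$ from Example~\ref{eje:1}. Concretely, $FCC(a)=\mathcal G(M)$ with
$$M=\begin{pmatrix}a&a&0\\a&0&a\\0&a&a\end{pmatrix}\cong\begin{pmatrix}2a&a&a\\0&a&0\\0&0&a\end{pmatrix}.$$
Because right-equivalent matrices generate isomorphic lattice graphs (Definition~\ref{def:rightequivalent} and the ensuing remark of \cite{Fiol}), I may compute the projection over $\mathbf e_3$ using the upper-triangular representative on the right.

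Next I would fit this representative into the block shape $\begin{pmatrix}B&\mathbf c\\0&a\end{pmatrix}$ demanded by Definition~\ref{def:projection}. Its last row is $(0,0,a)$, so the side of $FCC(a)$ over $\mathbf e_3$ is the positive integer $a$, the column $\mathbf c$ equals $(a,0)^{\top}$, and
$$B=\begin{pmatrix}2a&a\\0&a\end{pmatrix}.$$
By Definition~\ref{def:projection} the projection of $FCC(a)$ over $\mathbf e_3$ is precisely $\mathcal G(B)$.

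It then remains only to identify $\mathcal G(B)$. In Example~\ref{eje:1} the rectangular twisted torus $RTT(a)$ is defined as the lattice graph generated by $H=\begin{pmatrix}2a&a\\0&a\end{pmatrix}$, which is verbatim the matrix $B$. Hence $\mathcal G(B)=RTT(a)$, proving the lemma. Once $FCC(a)$ is known to be symmetric, Theorem~\ref{theo:isoprojection} further guarantees that the same answer is obtained projecting over any $\mathbf e_i$, so the default choice $\mathbf e_3$ loses no generality.

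The single step carrying genuine content — and hence the one I would verify rather than assume — is the right-equivalence $M\cong\begin{pmatrix}2a&a&a\\0&a&0\\0&0&a\end{pmatrix}$. It suffices to exhibit a unimodular $P$ and check the product; for instance $P=\begin{pmatrix}1&1&0\\1&0&1\\-1&0&0\end{pmatrix}$ satisfies $\det P=-1$ and $MP=\begin{pmatrix}2a&a&a\\0&a&0\\0&0&a\end{pmatrix}$, the last identity being a short column computation. Everything after that is bookkeeping through Definition~\ref{def:projection} and the generator of $RTT(a)$ recalled in Example~\ref{eje:1}.
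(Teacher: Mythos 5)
Your proof is correct and follows essentially the same route as the paper's: reduce the generator of $FCC(a)$ to its upper-triangular (Hermite) form by right-equivalence, read off the projection block $B=\left(\begin{smallmatrix}2a&a\\0&a\end{smallmatrix}\right)$ via Definition~\ref{def:projection}, and identify $\mathcal G(B)$ with $RTT(a)$ as in Example~\ref{eje:1}. Your explicit unimodular matrix $P$ (which does check out, with $\det P=-1$ and $MP$ equal to the triangular form) makes concrete the Gaussian-elimination step the paper leaves implicit, but it is the same argument.
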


\begin{proof} After performing Gaussian elimination, on the right of the previous expression we obtained the Hermite form of the matrix.
It is easy to see that its projection is generated by $\begin{pmatrix}2a&a\\0&a\end{pmatrix}$. As we have seen before and was proved in \cite{CamareroTC}, this graph is isomorphic to the rectangular twisted torus $RTT(a)$ of side $a$ or the \textsl{Gaussian graph} generated by $a+ai$ \cite{IEEE_TC}.
\end{proof}

A $FCC(a)$ is isomorphic to the \textsl{prismatic doubly twisted torus} of side $a$ ($PDTT(a)$), introduced in \cite{Camara}, as the next proposition proves.

\begin{proposition} $FCC(a)$ is isomorphic to the prismatic doubly twisted torus of side $a$, $PDTT(a)$.
\end{proposition}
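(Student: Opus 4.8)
The plan is to realize $PDTT(a)$ as a lattice graph and to identify its generating matrix with the Hermite form of the $FCC(a)$ matrix displayed above. Recall from \cite{Camara} that the prismatic doubly twisted torus of side $a$ lives on a $2a\times a\times a$ grid: the wrap-around links in the long ($\mathbf{e}_1$) dimension stay straight, whereas the wrap-around links in each of the two short dimensions are twisted by $a$ positions along $\mathbf{e}_1$. This is precisely the data stored in the Hermite matrix $H=\begin{pmatrix}2a&a&a\\0&a&0\\0&0&a\end{pmatrix}$: the diagonal $2a,a,a$ fixes the side lengths of the prism, the off-diagonal entries $h_{1,2}=h_{1,3}=a$ record the two twists over $\mathbf{e}_1$, and $h_{2,3}=0$ says the two short dimensions are not twisted against each other.

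First I would make the two descriptions agree on vertex sets. Take the box $R=\{(x_1,x_2,x_3):0\le x_1<2a,\ 0\le x_2<a,\ 0\le x_3<a\}$ of grid points and send each $\mathbf{x}\in R$ to its class $\varphi(\mathbf{x})=\mathbf{x}+H\mathbb{Z}^3$ in $\mathbb{Z}^3/H\mathbb{Z}^3$. Because $H$ is upper triangular, solving $H\mathbf{u}=\mathbf{d}$ for a difference $\mathbf{d}$ of two points of $R$ forces $u_3=u_2=u_1=0$ coordinate by coordinate, so distinct points of $R$ are inequivalent; since $|R|=2a^3=|\det H|$, the map $\varphi$ is a bijection on vertices. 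Then I would check adjacency: an interior grid edge in direction $\mathbf{e}_i$ is exactly the adjacency $\pm\mathbf{e}_i$ in $\mathcal{G}(H)$, while the three families of wrap-around edges are governed by the columns of $H$. The first column $2a\mathbf{e}_1\equiv0$ gives a straight wrap in $\mathbf{e}_1$; the second column yields $a\mathbf{e}_2\equiv-a\mathbf{e}_1$, so crossing the $\mathbf{e}_2$ boundary moves $x_1$ by $-a$, i.e. the twist of $a$; and symmetrically the third column yields $a\mathbf{e}_3\equiv-a\mathbf{e}_1$. Reading off these three congruences from the columns of $H$ is the crux of the verification, after which $\varphi$ is a graph isomorphism.

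The main obstacle is bookkeeping rather than mathematics: one must confirm that the direction and magnitude of the twist in the definition of $PDTT(a)$ in \cite{Camara} match the column entries of $H$, using the freedom to replace a generator $\mathbf{e}_i$ by $-\mathbf{e}_i$ (harmless, since adjacency is defined with $\pm\mathbf{e}_i$) to absorb any sign discrepancy. Once the conventions are aligned, the explicit map can in fact be dispensed with: $PDTT(a)$ is by construction the lattice graph $\mathcal{G}(H)$, and since the excerpt already records the right-equivalence $\begin{pmatrix}a&a&0\\a&0&a\\0&a&a\end{pmatrix}\cong H$, the right-equivalence result from \cite{Fiol} yields $FCC(a)\cong\mathcal{G}(H)=PDTT(a)$.
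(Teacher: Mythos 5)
Your proof is correct, and it takes a genuinely different route from the paper's. The paper disposes of the proposition in three lines: since $PDTT(a)$ was defined in \cite{Camara} as a graph in which the connectivity of each plane is an $RTT(a)$, it invokes Lemma \ref{lem:FCC} (the projection of $FCC(a)$ is $RTT(a)$) together with Theorem \ref{theo:isoprojection} (all projections of a symmetric lattice graph are isomorphic) and declares the isomorphism immediate. You instead bypass symmetry entirely: you realize $PDTT(a)$ itself as the lattice graph $\mathcal{G}(H)$ with $H=\left(\begin{smallmatrix}2a&a&a\\0&a&0\\0&0&a\end{smallmatrix}\right)$, checking the vertex bijection between the $2a\times a\times a$ box and $\mathbb{Z}^3/H\mathbb{Z}^3$ (upper triangularity forces $u_3=u_2=u_1=0$ coordinate by coordinate, and $|R|=2a^3=|\det H|$), and reading the three wrap patterns off the columns: $2a\mathbf{e}_1\equiv 0$ gives the straight wrap, while $a\mathbf{e}_2\equiv -a\mathbf{e}_1$ and $a\mathbf{e}_3\equiv -a\mathbf{e}_1$ give the two twists, with $-a\equiv a\pmod{2a}$ correctly absorbing any sign-convention discrepancy; the right-equivalence $FCC(a)\cong\mathcal{G}(H)$ from \cite{Fiol} then finishes. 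Comparing the two: the paper's argument is shorter and reuses machinery already built, but it is also looser --- knowing that all planar projections are $RTT(a)$ does not by itself determine the 3D graph (the inter-plane wiring must also match), so the ``immediate'' step quietly leans on the exact adjacency in the definition from \cite{Camara}; your verification checks the full 3D adjacency explicitly, is independent of Theorem \ref{theo:isoprojection} and of the symmetry of $FCC(a)$, and produces an explicit isomorphism, at the cost of the coordinate bookkeeping you acknowledge.
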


\begin{proof}
The $PDTT(a)$ was defined in \cite{Camara} as a graph in which the connectivity of each plane
is a $RTT(a)$, hence the isomorphism is immediate once we have proved that all the projections
of $FCC(a)$ are isomorphic to $RTT(a)$. Note that this fact can be inferred from
Lemma \ref{lem:FCC} and Theorem \ref{theo:isoprojection}.
\end{proof}

\subsection{Body-centered Cubic lattice graph}

The \textbf{Body-centered Cubic lattice graph} $BCC(a)$ of side $a$ can be defined as
the lattice graph generated by the matrix:
$$\begin{pmatrix}-a&a&a\\a&-a&a\\a&a&-a\end{pmatrix} \cong \begin{pmatrix}2a&0&a\\0&2a&a\\0&0&a\end{pmatrix}.$$
The order of the graph is $4a^3$. As far as we know, this graph has not previously been considered for interconnection networks.
However, as we will see later, the graph not only meets the symmetry requirements but also has a good order/diameter correspondence. Moreover, it embeds 2D symmetric tori as is proved in:

\begin{lemma} The projection of $BCC(a)$ is the 2D torus graph $T(2a,2a)$
\end{lemma}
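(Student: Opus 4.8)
The plan is to mirror the argument used for Lemma \ref{lem:FCC}: reduce the defining matrix to its Hermite form, read off the upper-left block that governs the projection, and recognize it as a diagonal matrix to which Theorem \ref{th:torus} applies. Concretely, I would start from the right-equivalence already recorded in the definition of $BCC(a)$, namely $\begin{pmatrix}-a&a&a\\a&-a&a\\a&a&-a\end{pmatrix} \cong \begin{pmatrix}2a&0&a\\0&2a&a\\0&0&a\end{pmatrix}$, which is obtained by Gaussian elimination on columns and is therefore graph-isomorphism-preserving by the result of \cite{Fiol} recalled after Definition \ref{def:rightequivalent}. One should note in passing that the right matrix is genuinely in Hermite normal form (Definition \ref{def:hermite}): it is upper triangular with positive diagonal $2a,2a,a$, and each superdiagonal entry equal to $a$ lies in $\{0,\dots,2a-1\}$, a complete residue system modulo the corresponding diagonal entry $2a$.

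Next I would apply Definition \ref{def:projection} directly to this Hermite form. Writing it as $\begin{pmatrix}B&\mathbf c\\0&a\end{pmatrix}$ with $B=\begin{pmatrix}2a&0\\0&2a\end{pmatrix}=\mathrm{diag}(2a,2a)$, $\mathbf c=\begin{pmatrix}a\\a\end{pmatrix}$, and side $a$, the definition tells us that the projection of $BCC(a)$ over $\mathbf e_3$ is exactly $\mathcal G(B)=\mathcal G(\mathrm{diag}(2a,2a))$. The essential point is that the projection depends only on the block $B$ and not on the twist vector $\mathbf c$; the twist $\mathbf c=(a,a)^{\mathsf T}$ only dictates how the $a$ disjoint copies of the projection are stitched together, and plays no role in the projected graph itself.

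Finally, I would invoke Theorem \ref{th:torus} with $a_1=a_2=2a$, which gives $\mathcal G(\mathrm{diag}(2a,2a))\cong T(2a,2a)$, completing the identification. I expect no real obstacle here: the computation is entirely routine once the Hermite form is in hand, and indeed the only thing one must be slightly careful about is that the right-equivalence stated in the definition is correct (it is, since both matrices have determinant $4a^3$ and the reduction is a sequence of integer column operations with unit determinant). Thus the proof is essentially a two-line application of Definitions \ref{def:projection} and Theorem \ref{th:torus}, exactly parallel to the $FCC(a)$ case.
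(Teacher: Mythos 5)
Your proof is correct and follows essentially the same route as the paper: column Gaussian elimination to the triangular form $\begin{pmatrix}2a&0&a\\0&2a&a\\0&0&a\end{pmatrix}$, reading off the projection block $\mathrm{diag}(2a,2a)$ via Definition \ref{def:projection}, and identifying it as $T(2a,2a)$ by Theorem \ref{th:torus}. You merely spell out details the paper leaves implicit (the Hermite-form check and the explicit appeal to the torus isomorphism), which is fine.
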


\begin{proof} It can be verified that after performing Gaussian elimination over the original matrix, it is easy to see that its projection is generated by $\begin{pmatrix}2a&0\\0&2a\end{pmatrix}$ which is the 2D torus of side $2a$.
\end{proof}

\subsection{Cubic crystal lattice graph comparison} \label{sub:crystal_comp}

In previous subsections, three different 3D symmetric topologies based on cubic crystal lattices have been introduced. As we have seen, two of them --the 3D torus or $PC$ and the PDTT or $FCC$--, were previously known, and the last one, that is the $BCC$, is a new proposal introduced in this paper. In this subsection, our aim is to consider their distance properties and to perform a first comparison in terms of diameter, average distance and projections.\par

First of all, we would like to highlight that a cubic crystal lattice graph exists for any order that is a power of two. This is important because we can gracefully upgrade a network in three steps while conserving symmetry. If $t$ is a positive integer, then:

\begin{itemize}
\item There exists a primitive cubic lattice graph with $2^{3t}$ nodes.
\item There exists a face-centered cubic lattice graph with $2^{3t+1}$ nodes.
\item There exists a body-centered cubic lattice graph with $2^{3t+2}$ nodes.
\end{itemize}

Although this fact provides practical versatility, it complicates the comparison among networks. The exact expressions for average distance of the three crystals are given next:

$PC(a)$ has average distance:
    $$\bar k=\begin{cases}
        \frac{3a^4}{4(a^3-1)} & \text{if }2|a\\
        \frac{3a^4-3a^2}{4(a^3-1)} & \text{if }2\!\!\not|a\\
    \end{cases}$$

$FCC(a)$ has average distance:
    $$\bar k=\begin{cases}
        \frac{7a^4-2a^2}{4(2a^3-1)} & \text{if }2|a\\
        \frac{7a^4-2a^2-1}{4(2a^3-1)} & \text{if }2\!\!\not|a\\
    \end{cases}$$

$BCC(a)$ has average distance:
    $$\bar k=\begin{cases}
        \frac{35a^4-8a^2}{8(4a^3-1)} & \text{if }2|a\\
        \frac{35a^4-14a^2+30}{8(4a^3-1)} & \text{if }2\!\!\not|a\\
    \end{cases}$$

These expressions have been calculated under the assumption that the average distance fulfills a polynomial expression, which
is a reasonable hypothesis. Moreover, these values have been computationally checked for orders up to $40,000$. In Table \ref{table:3D} the distance properties for the three graphs are summarized. For an easier comparison, note that average distance values are given as approximations. Mixed-radix torus graphs which have the same number of nodes of the $FCC$ and $BCC$ crystals have been also added in the table. Clearly, the crystals have better distance properties than their corresponding torus networks. Moreover, $BCC$ is more dense than the other two cubic crystals since, for the same diameter, it attains a greater number of nodes. Finally, as we have seen in previous subsections, while $FCC$ has the twisted torus as its projection, both $PC$ and $BCC$ are lifts of a 2D symmetric torus graph.

\begin{table}
    \begin{center}
    \def\arraystretch{1.4}
    \begin{tabular}{|llll|}
    \hline
    Topology        &Nodes        &Diameter            &Average Distance\\
    \hline
    $PC(a)$               &$a^3$        &$3\left\lfloor\frac{a}{2}\right\rfloor$        &$\approx\frac{3}{4}a=0.75a$\\
    $T(2a,a,a)$           &$2a^3$        &$a+2\left\lfloor\frac{a}{2}\right\rfloor$        &$\approx a$\\
    $FCC(a)$        &$2a^3$        &$\left\lfloor\frac{3}{2}a\right\rfloor$        &$\approx\frac{7}{8}a=0.875a$\\
    $T(2a,2a,a)$       &$4a^3$        &$\left\lfloor\frac{5}{2}a\right\rfloor$        &$\approx \frac{5}{4}a=1.25a$\\
    $BCC(a)$        &$4a^3$        &$\left\lfloor\frac{3}{2}a\right\rfloor$        &$\approx \frac{35}{32}a=1.09375a$\\
    \hline
    \end{tabular}
    \end{center} \caption{Distance properties of cubic crystal lattice graphs}\label{table:3D}
\end{table}

Having considered distance-related parameters for comparing crystals, let us also take into account other topological parameters to complete the study. In networking literature, the \emph{bisection bandwidth} ($BB$) is used to obtain an upper bound for the network load under uniform random traffic. However, it was shown in \cite{Camara} that in rectangular twisted tori some minimal routes
between pairs of vertices in opposite network partitions could traverse the bisection twice. Hence, this work proved that $BB$ is not a tight bound for network throughput in twisted topologies. Indeed, the same happens with any non-torus lattice graph.\par

There is another way to accurately bound network throughput under uniform traffic under ideal conditions. Throughput is inversely proportional to average distance in symmetric networks. As, under uniform traffic at rate $l$, $l$ phits are injected into each node each cycle, we have a total of $lN\bar k$ links being used each cycle. As a link can only transfer 2 phits (one in each way) each cycle, we have $lN\bar k\leq 2|E|=\Delta N$, where $\Delta$ denotes the graph degree and $N$ and $E$ denote the order and the edge set respectively. Thus, network throughput is bounded by $\dfrac{\Delta}{\bar k}$; for lattice graphs, $\Delta=2n$ where $n$ is the number of dimensions. Hence, $FCC(a)$ maximum throughput will be bounded by $\dfrac{48}{7a}$ and $BCC(a)$ by $\dfrac{192}{35a}$. Nevertheless, the previous count cannot be applied to edge-asymmetric networks such as mixed-radix tori. In that case, it can be seen that throughput is inversely proportional to the maximum average distance per dimension, namely $\dfrac{\Delta}{n \bar k_{max}}$, as inferred from \cite{Camara}. Network throughput for both $T(2a,a,a)$ and $T(2a,2a,a)$ is bounded by $\dfrac{12}{3a} = \dfrac{4}{a}$ as $\bar k_{max} \approx \dfrac{a}{2}$, given that their longest dimensions are $2a$-node rings. This leads to an improvement in maximum throughput under uniform traffic of 71\% when comparing $FCC(a)$ to $T(2a,a,a)$ and 37\% for $BCC(a)$ versus $T(2a,2a,a)$.\par

Being symmetric has more positive impact when the number of nodes is $2a^3$. In $T(2a,a,a)$, when the links in the longest dimension are fully utilized, links in the other two shortest dimensions are used at 50\%. This is because, on average, the length of the paths in the longest dimension doubles the length of the shortest ones. When the number of nodes is $4a^3$, $T(2a,2a,a)$ uses its resources better as only links in one dimension operate at half rate.

\section{Higher Dimensions: Lifts and Hybrid Graphs}\label{sec:lifting}

In the previous section we characterized 3D symmetric topologies and detailed the special case of the cubic crystal graphs.
Symmetry could help when the application runs on the whole network. However, in big systems the user typically only has a partition of the complete machine assigned.
Therefore, looking for symmetry in higher dimensions cannot be prioritized. Nevertheless, reducing the distance properties of
the whole network would be still beneficial since applications and system software sometimes run over the
entire network. Consequently, what we look for are higher dimensional networks embedding the
previous crystal cubic lattice graphs.\par

In the next subsections we explore two different methods for upgrading the previous cubic crystal lattice graphs. In the first subsection, we consider the lifting of crystal graphs, which results in 4D topologies. Whenever possible, the lift is done in such a way that the resulting eight-degree topology preserves symmetry. To end this subsection, we will introduce a tree that represents the process of network upgrading, preserving symmetry. In the second subsection we present \textsl{common lifts} of lattice graphs. The ultimate aim of this new method is to build new lattice graphs which embed other lattice graphs, while minimizing the necessary network degree to obtain them. The resulting graphs have been denoted as \textsl{hybrid graphs} since several lattice graphs of different nature (symmetric or non-symmetric) and degrees are embedded on them.

\subsection{Symmetric Lifts of Cubic Crystal Graphs}

First, we consider the $PC$. There is a straightforward way of lifting a $PC(a)$ to 4D, which is the Cartesian product of the $PC$ by one cycle of length $a$, thus obtaining the generator matrix:

$$\begin{pmatrix}a&0&0&0\\0&a&0&0\\0&0&a&0\\0&0&0&a\end{pmatrix}.$$

The 4D torus generated by the previous matrix is completely symmetric. However, the lifting technique can be used to embed the completely symmetric 3D torus in a different lattice graph. We will denote the \textsl{body centered hypercube lattice graph} as $4D$-$BCC$, that is, the lattice graph generated by matrix:

$$\begin{pmatrix}2a&0&0&a\\0&2a&0&a\\0&0&2a&a\\0&0&0&a\end{pmatrix}.$$

\begin{proposition}\label{prop:4BCC} $4D$-$BCC(a)$ is a symmetric lattice graph of side $a$ and projection $PC(2a)$.
\end{proposition}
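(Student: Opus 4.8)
The plan is to settle the three assertions separately, with only the symmetry carrying any real content. For the side and the projection I would simply read them off the generating matrix
$$M=\begin{pmatrix}2a&0&0&a\\0&2a&0&a\\0&0&2a&a\\0&0&0&a\end{pmatrix},$$
which is already displayed in the block shape $\begin{pmatrix}B&\mathbf c\\0&a\end{pmatrix}$ demanded by Definition \ref{def:projection}, with $B=2aI_3$, $\mathbf c=(a,a,a)^{\top}$, and lower–right entry the positive integer $a$. Hence the side is $a$ and, by definition, the projection over $\mathbf e_4$ is $\mathcal G(2aI_3)=PC(2a)$. As a consistency check, $|\det M|=8a^{4}=|\det B|\cdot a$, so that $\mathcal G(M)$ indeed decomposes into $a$ copies of $PC(2a)$.

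For symmetry, I would first recall that every lattice graph is a Cayley graph over $\Z^{4}/M\Z^{4}$ and is therefore automatically vertex-symmetric; by the paper's convention it remains only to exhibit \emph{linear} automorphisms realizing edge-symmetry. Using vertex-transitivity, it suffices to find linear automorphisms fixing the origin that act transitively on the $2n=8$ edges incident to it, that is, on the generating set $\{\pm\mathbf e_1,\dots,\pm\mathbf e_4\}$. A matrix $A\in GL_4(\Z)$ induces such an automorphism exactly when it preserves the lattice $L=M\Z^{4}$ (so that it descends to a bijection of the quotient) and permutes $\{\pm\mathbf e_i\}$ (so that it preserves adjacency). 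Signed permutation matrices automatically meet the second requirement, so the whole question reduces to checking that every signed permutation preserves $L$.

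The key step, and the only genuine obstacle, is therefore to identify $L$ explicitly. I would show that
$$L=M\Z^{4}=\{a\mathbf x:\mathbf x\in\Z^{4},\ x_1,\dots,x_4\text{ all of the same parity}\},$$
by first observing that $2a\mathbf e_4=2a\mathbf 1-2a\mathbf e_1-2a\mathbf e_2-2a\mathbf e_3\in L$, where $\mathbf 1=(1,1,1,1)^{\top}$, so that $L=2a\Z^{4}+\Z\,(a\mathbf 1)$; its elements are thus exactly the vectors $a(2\mathbf v+k\mathbf 1)$, whose coordinates are all congruent to $ak\pmod{2a}$, i.e.\ all of one common parity. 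This description is manifestly invariant under every signed permutation, since permuting coordinates and flipping individual signs both preserve the common-parity condition. Consequently each signed permutation lies in $GL_4(\Z)$, preserves $L$, and permutes the generators, and so is a linear automorphism of $\mathcal G(M)$. Since coordinate transpositions already carry any $\mathbf e_i$ to any $\mathbf e_j$ and a single sign flip exchanges $\mathbf e_i$ with $-\mathbf e_i$, the signed permutations act transitively on $\{\pm\mathbf e_i\}$; this yields edge-symmetry and finishes the argument.

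I expect the delicate point to be purely the bookkeeping that $L$ really is the common-parity lattice of covolume $8a^{4}$; once that identification is in hand the signed-permutation symmetries are transparent, and the remaining claims follow immediately from the definitions and the Cayley-graph structure. It is worth noting that, unlike the $3$D case, there is no obvious coordinate-symmetric square generator of $L$ (the analogue $a(J-2I)$ would have determinant $-16a^{4}$ and span only an index-two sublattice), so working directly with the lattice rather than with a symmetric matrix is what makes the symmetry proof clean.
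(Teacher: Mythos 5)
Your proposal is correct, and for the symmetry claim it takes a genuinely different route from the paper. The paper's proof exhibits a single linear automorphism, the cyclic shift $\phi(\mathbf e_i)=\mathbf e_{i+1\pmod 4}$ with signed-permutation matrix $P$, verifies by direct computation that $Q=M^{-1}PM$ is an integer matrix (the integrality criterion it imports from the symmetry paper it cites), and then asserts that the group generated by $\phi$, together with translations, supplies edge-symmetry; the side and projection are read off the upper-triangular form exactly as you do. You instead identify the lattice intrinsically, $M\Z^4=a\bigl(2\Z^4+\Z\mathbf 1\bigr)$, i.e.\ the scaled common-parity (checkerboard-dual) lattice, observe that this description is manifestly invariant under all $4!\cdot2^4=384$ signed permutations, and conclude that the full hyperoctahedral group sits in the stabilizer of $0$ and acts transitively on $\{\pm\mathbf e_1,\dotsc,\pm\mathbf e_4\}$. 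Your identification of $L$ is right (the containment $2a\mathbf e_4\in L$ and the covolume check $8a^4=\lvert\det M\rvert$ close both inclusions), so the argument is sound. What each approach buys: the paper's is a one-line matrix verification that slots directly into its general $PM=MQ$ machinery and is reused verbatim for the $4D$-$FCC$ proposition, but it leaves the reader to check that the order-$4$ cyclic group really does act transitively enough on incident edges; yours proves strictly more (the whole signed-permutation group acts, not just a cyclic subgroup), makes edge-transitivity immediate, and gives a structural explanation of why the twist by $a\mathbf 1$ costs no symmetry --- at the price of the lattice bookkeeping you correctly flag as the delicate step. Your closing remark is also accurate: $a(J-2I)$ has determinant $-16a^4$ and generates only an index-two sublattice of $L$, so no coordinate-symmetric square generator is available and working with the lattice itself is the right move.
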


\begin{proof}
Let $\phi$ be defined by $\phi(\mathbf e_i)=\mathbf e_{i+1\pmod{n}}$.
$\phi$ has an associated matrix $P=\begin{pmatrix}0&0&0&1\\1&0&0&0\\0&1&0&0\\0&0&1&0\end{pmatrix}$.
As $Q=M^{-1}PM=\begin{pmatrix}0&0&-1&0\\1&0&-1&0\\0&1&-1&0\\0&0&2&1\end{pmatrix}$
is an integer matrix we conclude that $\phi$ is an automorphism of $4D$-$BCC$ \cite{IWONT10}.
In the group generated by $\phi$ there are enough automorphisms to provide the edge-symmetry.
It should be noted that the projection is straightforward as the 
matrix is triangular superior.
\end{proof}

Now, if we want to lift the $FCC$, there are two ways of doing so which make the lifted graph symmetric. The first one will be denoted as $4D$-$FCC$ (\textsl{4-dimensional face-centered cubic lattice graph}), that is, the lattice graph generated by matrix:

$$\begin{pmatrix}2a&a&a&a\\0&a&0&0\\0&0&a&0\\0&0&0&a\end{pmatrix}.$$

\begin{proposition}
$4D$-$FCC(a)$ is a symmetric lattice graph of side $a$ whose projection is a $FCC(a)$.
\end{proposition}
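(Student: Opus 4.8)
The plan is to follow the template of the proof of Proposition \ref{prop:4BCC} almost verbatim, since the generating matrix $M=\begin{pmatrix}2a&a&a&a\\0&a&0&0\\0&0&a&0\\0&0&0&a\end{pmatrix}$ has the same ``one distinguished column'' shape as the $4D$-$BCC$ matrix. First I would dispatch the easy assertions. Because $M$ is upper triangular with $a$ in the bottom-right corner and block decomposition $\begin{pmatrix}B&\mathbf c\\0&a\end{pmatrix}$ with $B=\begin{pmatrix}2a&a&a\\0&a&0\\0&0&a\end{pmatrix}$, Definition \ref{def:projection} immediately gives that the side is $a$ and that the projection over $\mathbf e_4$ is $\mathcal G(B)$. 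Since $B$ is exactly the Hermite form of the face-centered cubic matrix exhibited in Section~\ref{sec:cubic}, we have $\mathcal G(B)\cong FCC(a)$, which settles the projection claim.

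For symmetry I would reuse the cyclic automorphism employed for the $BCC$ lift: define $\phi$ by $\phi(\mathbf e_i)=\mathbf e_{i+1\pmod 4}$, whose associated matrix is the $4$-cycle permutation $P=\begin{pmatrix}0&0&0&1\\1&0&0&0\\0&1&0&0\\0&0&1&0\end{pmatrix}$. By the criterion of \cite{IWONT10} it suffices to verify that $Q=M^{-1}PM$ is an integer matrix; a direct computation gives $Q=\begin{pmatrix}-1&-1&-1&0\\2&1&1&1\\0&1&0&0\\0&0&1&0\end{pmatrix}\in\mathbb Z^{4\times4}$, so $\phi$ is a linear automorphism of $4D$-$FCC(a)$.

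Finally I would deduce edge-symmetry. Since $\phi$ maps $\mathbf e_i$-edges to $\mathbf e_{i+1}$-edges, the cyclic group $\langle\phi\rangle$ acts transitively on the generating set $\{\mathbf e_1,\mathbf e_2,\mathbf e_3,\mathbf e_4\}$; combined with the vertex-transitivity enjoyed by every Cayley graph, this produces an automorphism carrying any edge to any other, i.e. edge-symmetry (the central symmetry $\mathbf x\mapsto-\mathbf x$ handles the two orientations $\pm\mathbf e_i$ if one prefers to argue on oriented edges).

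The only genuine work, exactly as in Proposition \ref{prop:4BCC}, is the integrality check for $Q$; conceptually its content is that the cyclic shift of coordinates preserves the lattice $M\mathbb Z^4$, which is slightly non-obvious because the shift moves the distinguished first coordinate into the others. The point that makes it go through is that any $\mathbf x\in M\mathbb Z^4$ satisfies $x_2,x_3,x_4\equiv0\pmod a$ and $x_1\equiv x_2+x_3+x_4\pmod{2a}$, so in fact $x_1\equiv0\pmod a$ as well; this hidden divisibility is precisely what forces $Q$ to be integral, and it is the step I would double-check most carefully.
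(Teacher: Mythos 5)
Your proposal is correct and follows essentially the same route as the paper, whose proof of this proposition is literally a one-line reference back to Proposition~\ref{prop:4BCC}: take the cyclic-shift permutation matrix $P$, check that $Q=M^{-1}PM$ is integral (your explicit $Q$ is right, as one confirms from $PM=MQ$), and conclude edge-symmetry from the transitive action of $\langle\phi\rangle$ on the generators, with the projection read off from the upper-triangular block form. Your added details---the identification of $B$ with the Hermite form of the $FCC$ matrix and the congruence description of $M\mathbb Z^4$ explaining why the shift preserves the lattice---are sound elaborations of what the paper leaves implicit.
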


\begin{proof}
Exactly like the proof of Proposition \ref{prop:4BCC}, the matrix $Q=M^{-1}PM$
is different but still with integer entries.
\end{proof}

The second way to lift $FCC$ is introduced below.

\begin{proposition} The lattice graph generated by the matrix $\begin{pmatrix}
a&-a&-a&-a\\
a&a&-a&a\\
a&a&a&-a\\
a&-a&a&a\\
\end{pmatrix}$ is a symmetric lifting of the $FCC(2a)$.
\end{proposition}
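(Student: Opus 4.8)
The plan is to prove the two assertions separately: that $\mathcal G(M)$ is edge-symmetric, and that its projection over $\mathbf e_4$ is $FCC(2a)$. For symmetry I would proceed exactly as in the proof of Proposition \ref{prop:4BCC}, exhibiting the cyclic shift $\phi(\mathbf e_i)=\mathbf e_{i+1\pmod 4}$ as a linear automorphism. The clean way to set this up is to write $M=aN$ with $N=\tfrac1a M$; a direct inner-product check shows the four columns of $N$ are pairwise orthogonal, each of squared norm $4$, so that $N^\top N=4I$ and hence $M^{-1}=\tfrac{1}{4a}N^\top$. If $P$ is the permutation matrix of $\phi$, then $Q=M^{-1}PM=\tfrac14 N^\top P N$, and because $P$ merely cyclically shifts coordinates, each $P\mathbf n_j$ is again one of the columns of $N$ up to sign. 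Consequently $\tfrac14 N^\top P N$ is a signed permutation matrix, in particular integral, which by the criterion of \cite{IWONT10} makes $\phi$ an automorphism. Since $\langle\phi\rangle$ permutes the four edge-directions $\mathbf e_1,\dotsc,\mathbf e_4$ transitively, vertex-transitivity of the Cayley graph then yields edge-symmetry.

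For the lifting claim I would reduce $M$ by unimodular column operations to the block form $\begin{pmatrix}B&\mathbf c\\0&a\end{pmatrix}$ required by Definition \ref{def:projection}. Adding the first column to the second and subtracting it from the third and fourth clears the last row to $(a,0,0,0)$; the first column remains $(a,a,a,a)$, while the other three now have their top three entries in $\{0,\pm2a\}$. Swapping the first and fourth columns places the positive side $a$ in the bottom-right corner, and the resulting top-left $3\times3$ block is, column by column, a signed permutation of $\begin{pmatrix}2a&2a&0\\2a&0&2a\\0&2a&2a\end{pmatrix}$. As signed permutations are unimodular, this block $B$ is right-equivalent to the generator of $FCC(2a)$, so $\mathcal G(B)\cong FCC(2a)$ and $\mathcal G(M)$ is a lift of $FCC(2a)$ of side $a$. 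A determinant check, $|\det M|=16a^4=a\cdot 2(2a)^3$, confirms the bookkeeping $|\mathcal G(M)|=a\,|FCC(2a)|$.

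The only genuine obstacle is the integrality of $Q=M^{-1}PM$, since an arbitrary signed permutation $P$ need not preserve the lattice $M\mathbb Z^4$. The orthogonality relation $N^\top N=4I$ is precisely what makes this transparent: it supplies the explicit inverse $M^{-1}=\tfrac1{4a}N^\top$ and forces every entry of $N^\top P N$ to lie in $\{0,\pm4\}$, so that the factor $\tfrac14$ leaves an integer matrix. Once this is secured, both the automorphism verification and the column reduction are routine, and by Theorem \ref{theo:isoprojection} the particular projection direction is immaterial.
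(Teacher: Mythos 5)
Your proof is correct and takes essentially the same route as the paper: the paper likewise reduces $M$ by unimodular column operations to a block form $\left(\begin{smallmatrix}B&\mathbf c\\0&a\end{smallmatrix}\right)$ whose $3\times3$ block generates $FCC(2a)$ (displaying the right-equivalent matrix with rows $(2a,-2a,0,-a)$, $(0,2a,-2a,a)$, $(2a,0,2a,-a)$, $(0,0,0,a)$), and it settles symmetry by repeating the procedure of Proposition \ref{prop:4BCC}, i.e., verifying that $Q=M^{-1}PM$ is integral for the cyclic-shift matrix $P$. Your observation that $M=aN$ with $N^\top N=4I$, which makes the integrality of $Q$ transparent and even identifies $Q$ as a signed permutation, is a tidy refinement of the paper's direct computation, and your explicit signed-permutation matching of the projected block with the $FCC(2a)$ generator fills in a step the paper leaves implicit.
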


\begin{proof} The isomorphism is guaranteed since
$$
\begin{pmatrix}
a&-a&-a&-a\\
a&a&-a&a\\
a&a&a&-a\\
a&-a&a&a\\
\end{pmatrix}
\cong
\begin{pmatrix}
2a&-2a&0&-a\\
0&2a&-2a&a\\
2a&0&2a&-a\\
0&0&0&a\\
\end{pmatrix}$$

For symmetry, the procedure described in the proof of Proposition \ref{prop:4BCC} is repeated.
\end{proof}

This second lifting relates the graphs obtained to the family of Lipschitz graphs and quaternion algebras, introduced in \cite{IEEETITQuat},
for obtaining perfect codes over 4D spaces. This graph will be denoted as $Lip(a)$.

Finally, there are several ways of lifting the $BCC$, although none of them preserves the symmetry as proved in the next theorem.

\begin{theorem}
Any lift of $BCC$ yields a non-edge-symmetric graph.
\end{theorem}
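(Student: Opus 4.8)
The plan is to argue by contradiction, exploiting the fact that edge-symmetry forces all projections to agree (Theorem \ref{theo:isoprojection}). Suppose that some lift $\mathcal{G}(M)$, with $M\in\mathbb{Z}^{4\times4}$ and projection over $\mathbf{e}_4$ equal to $BCC(b)$, were symmetric. Since that projection is $BCC(b)$, Theorem \ref{theo:isoprojection} forces all four $3$-dimensional projections to be isomorphic to $BCC(b)$. Each of these is then a symmetric $3$-dimensional lattice graph, so applying Theorem \ref{theo:isoprojection} to it, together with the fact that the projection of $BCC(b)$ is $T(2b,2b)$, I obtain that every $2$-dimensional projection of $\mathcal{G}(M)$ is isomorphic to $T(2b,2b)$ and every $1$-dimensional projection is a cycle of length $2b$.

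Next I would normalise $M$ to Hermite form (Definition \ref{def:hermite}) with diagonal $(d_1,d_2,d_3,d_4)$. The nested projections obtained by successively discarding $\mathbf{e}_4,\mathbf{e}_3,\mathbf{e}_2$ are generated by the leading principal submatrices, whose orders are the partial products of the diagonal; matching these against the orders $2b$, $4b^2$ and $4b^3$ of the forced projections pins down $d_1=2b$, $d_2=2b$, $d_3=b$ and identifies the leading $3\times3$ block as a Hermite generator of $BCC(b)$, which (after reducing by the automorphisms of $BCC(b)$) I may take to be $\left(\begin{smallmatrix}2b&0&b\\0&2b&b\\0&0&b\end{smallmatrix}\right)$. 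This leaves only the last column $(c_1,c_2,c_3,a)^{\mathsf T}$, with $a=d_4$, still to be determined.

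The decisive step is to impose that an off-axis projection, say the one over $\mathbf{e}_1$ -- the subgraph generated by $\mathbf{e}_2,\mathbf{e}_3,\mathbf{e}_4$ -- is likewise isomorphic to $BCC(b)$. I would compute its generating matrix as the relation lattice $L=\{(x_2,x_3,x_4):(0,x_2,x_3,x_4)^{\mathsf T}\in M\mathbb{Z}^4\}$ and first require $\det L=4b^3$ so that the orders match; this already forces strong divisibility conditions on $c_1,c_2,c_3$ and $a$. I would then check the finitely many surviving parameter choices and show that each yields a plain or mixed torus (for instance $PC(2b)$ or $T(2b,2b,b)$) rather than the body-centred graph, whose strictly smaller diameter $\lfloor 3b/2\rfloor$ (Table \ref{table:3D}) distinguishes it -- contradicting the first step. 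Equivalently, one can phrase the contradiction through the integrality criterion $Q=M^{-1}PM\in\mathbb{Z}^{4\times4}$ used in Proposition \ref{prop:4BCC}: no matrix $P$ permuting the signed axes and sending the lift direction $\mathbf{e}_4$ to an in-plane generator keeps $Q$ integral.

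The main obstacle is precisely this last analysis: explicitly describing the off-axis projection lattice $L$ in terms of the twist $(c_1,c_2,c_3)$ and the side $a$, and then ruling out every admissible combination. The structural intuition guiding the computation is that the body-centering of $BCC$ is already maximal in three dimensions; the only symmetric four-dimensional lattice graphs that centre a cube have a plain torus $PC(2a)$ as their shadow (as in Proposition \ref{prop:4BCC}), never $BCC$, so no fourth mutually symmetric direction can reproduce the $T(2b,2b)$ shadow simultaneously along all four axes.
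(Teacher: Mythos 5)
Your plan ultimately converges on the paper's own method: the paper also writes the lift in Hermite form as $L=\left(\begin{smallmatrix}2a&0&a&x\\0&2a&a&y\\0&0&a&z\\0&0&0&t\end{smallmatrix}\right)$ and then applies exactly the integrality criterion you mention as an ``equivalent'' phrasing -- automorphisms are signed permutation matrices $P$ with $L^{-1}PL$ integral, per \cite{IWONT10} -- reducing the question to a finite computation that returns a negative result. Your front-end detour through Theorem~\ref{theo:isoprojection} (forcing all nested projections to be $BCC(b)$, $T(2b,2b)$ and $2b$-cycles, and matching orders against partial products of the diagonal) is workable but unnecessary: by Definition~\ref{def:projection}, a lift of $BCC(b)$ \emph{by definition} has the $BCC$ Hermite block as its leading $3\times3$ submatrix up to right equivalence, so the diagonal $(2b,2b,b,a)$ comes for free. (Incidentally, $PC(2b)$ has $8b^3$ vertices, so it could never arise as an order-$4b^3$ off-axis projection; only torus-like candidates of order $4b^3$ are in play there.)

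The genuine gap is in the decisive step, which you announce rather than execute, and whose feasibility you have not secured. You claim that determinant matching for the off-axis projection leaves ``finitely many surviving parameter choices,'' but the side $a=d_4$ of the lift is a priori unbounded, and the twist entries $c_1,c_2$ range over residues modulo $2b$ and $c_3$ modulo $b$ with $b$ symbolic, so there is no finite list to check as stated. The paper closes exactly this hole before invoking computation: edge-symmetry forces the $\gcd$ of every row of $L$ to be equal (map $\mathbf e_i$ to $\mathbf e_n$ and Gauss-reduce), hence the side $t$ divides every entry of $L$; since the integrality condition $L^{-1}PL\in\Z^{n\times n}$ is invariant under rescaling $L$, one may take $t=1$ without loss of generality, after which both the candidate lifts and the candidate signed permutation matrices form finite (symbolically checkable) sets. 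Your proposal has no substitute for this normalization, and without it neither your case analysis over $(c_1,c_2,c_3,a)$ nor your claimed verdict that ``no matrix $P$ sending $\mathbf e_4$ to an in-plane generator keeps $Q$ integral'' can actually be run; both remain assertions of the conclusion rather than proofs of it.
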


\begin{proof} Let $M=\begin{pmatrix}2a&0&a\\0&2a&a\\0&0&a\end{pmatrix}$, $BCC(a)\simeq\mathcal G(M)$.
Assume that exists a symmetric lift $\mathcal G(L)$ of $BCC(a)$.
	$$L=\begin{pmatrix}2a&0&a&x\\0&2a&a&y\\0&0&a&z\\0&0&0&t\end{pmatrix}.$$
In Hermite form we have $0\leq x,y<2a$ and $0\leq z<a$. For symmetry, the $\gcd$ of every row
must be the same (map $\mathbf e_i$ into $\mathbf e_n$ and Gauss-reduce), hence $t$ divides all the other entries of $L$ and without loss of generality
we assume $t=1$.
By \cite{IWONT10} we know that automorphisms are matrices $P$ satisfying the condition that
$M^{-1}PM$ is an integer matrix where $P$ is unitary and only has $\pm1$ entries.
Both, the sets of these matrices which would give edge-transitivity,
and the possible lifts, are finite. Hence
we can run a computation which gives the negative result.\end{proof}

\arbolsimetricos

As we have concluded before, there is no decisive interest in obtaining a symmetric graph in 4D such that its 3D partitions remain themselves symmetric. Therefore, we could explore which of the lattice graphs whose projection is a $BCC$ would be the most interesting.\par

Figure \ref{fig:arbol} summarizes how the previous constructions can be generalized to any number of dimensions. The procedure is represented in a tree. In this tree, nodes are the matrices of the lattice graphs. Note that, for an easier visualization, matrices have been normalized by dividing them by $a$. Hence, each child is a lift of its parent. Moreover, we have restricted lifts to those whose side is greater or equal to the half of the side of its projection, otherwise many more graphs would appear.\par

The root of the tree is the matrix associated with a cycle. The lifts of the cycle conserving symmetry, and fulfilling the restrictions mentioned above, are the torus and the twisted torus introduced in Section \ref{sec:lattice}. Then, as we have seen in Section \ref{sec:cubic}, the cubic crystal lattice graphs are lifts of these two. The two branches show that only two families are obtained. The left branch consists of the infinite family of symmetric tori or $n$-dimensional $PCs$;
and each $nD$-$PC$ has a $nD$-$BCC$ sibling which is a leaf, without any further symmetric lift. The right branch is the family of the $n$-dimensional $FCC$s;
the $nD$-$FCC$ always has the $(n+1)D$-$FCC$ as a symmetric lift. Moreover, there are some dimensions (4 and 6 in the figure) in which a different lift exists. Interestingly, two non right-equivalent matrices generate the same graph (denoted with $\simeq$). The two branches in the tree are really different and, as we show next, they can be used to obtain new hybrid lattice graphs.

\subsection{Hybrid Graphs: Common Lift of Crystal Graphs}

In this subsection a different approach for embedding crystal graphs is considered. Given a number of crystal graphs, the idea is to generate a lattice graph which has them as its projections. Let us introduce this concept in the next definition.

\begin{definition} The lattice graph $\mathcal{G}(M)$ is a \textsl{common lift} of $\mathcal{G}(M_1)$ and $\mathcal{G}(M_2)$ if both can be obtained as projections of $\mathcal{G}(M)$.
\end{definition}

\begin{remark} There are several ways of obtaining different common lifts of two given lattice graphs. A straightforward one is to consider the lattice graph $\mathcal{G}(M_1 \oplus M_2)$ generated by the direct sum of the matrices. As we state next, this option leads to the Cartesian product of the two given lattice graphs.
\end{remark}

\begin{lemma} $\mathcal{G}(M_1 \oplus M_2)$ is a common lift of $\mathcal{G}(M_1)$ and $\mathcal{G}(M_2)$ and $\mathcal{G}(M_1 \oplus M_2) \cong \mathcal{G}(M_1) \times \mathcal{G}(M_2)$, which denotes the Cartesian product of $\mathcal{G}(M_1)$ and $\mathcal{G}(M_2)$.
\end{lemma}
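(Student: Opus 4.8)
The plan is to exhibit an explicit isomorphism built from the block structure of $M_1 \oplus M_2$ and then read off the two projections directly. Throughout, let $M_1$ be $n_1 \times n_1$ and $M_2$ be $n_2 \times n_2$.

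First I would set up the coordinate splitting. Every vector $\mathbf{v} \in \mathbb{Z}^{n_1+n_2}$ decomposes uniquely as $\mathbf{v} = (\mathbf{v}_1, \mathbf{v}_2)$ with $\mathbf{v}_1 \in \mathbb{Z}^{n_1}$ and $\mathbf{v}_2 \in \mathbb{Z}^{n_2}$. Since $(M_1\oplus M_2)\mathbf{u} = (M_1\mathbf{u}_1, M_2\mathbf{u}_2)$, two vectors are congruent modulo $M_1 \oplus M_2$ exactly when $\mathbf{v}_1 \equiv \mathbf{w}_1 \pmod{M_1}$ and $\mathbf{v}_2 \equiv \mathbf{w}_2 \pmod{M_2}$. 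Hence the map
$$\psi:\ \mathbf{v} \pmod{M_1\oplus M_2}\ \longmapsto\ \bigl(\mathbf{v}_1 \pmod{M_1},\ \mathbf{v}_2 \pmod{M_2}\bigr)$$
is a well-defined bijection between the two vertex sets, both of which have $|\det M_1|\,|\det M_2| = |\det(M_1\oplus M_2)|$ elements.

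Next I would verify that $\psi$ is a graph isomorphism onto the Cartesian product. The generators of $\mathcal{G}(M_1\oplus M_2)$ are $\pm\mathbf{e}_1,\dots,\pm\mathbf{e}_{n_1+n_2}$. For $i \le n_1$ the generator $\pm\mathbf{e}_i$ alters only the first block, so under $\psi$ it moves the first coordinate by a generator of $\mathcal{G}(M_1)$ and fixes the second; for $i > n_1$ it fixes the first coordinate and moves the second by a generator of $\mathcal{G}(M_2)$. This is exactly the adjacency rule of $\mathcal{G}(M_1)\times\mathcal{G}(M_2)$, in which $(u_1,v_1)\sim(u_2,v_2)$ precisely when one coordinate is fixed while the other is adjacent. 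Thus $\psi$ both preserves and reflects adjacency, giving $\mathcal{G}(M_1\oplus M_2)\cong\mathcal{G}(M_1)\times\mathcal{G}(M_2)$.

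Finally, for the common-lift claim I would peel off one block at a time. Because $M_2$ is non-singular, one phase of Gaussian elimination by columns puts it in the form $\begin{pmatrix}B_2 & \mathbf{c}_2\\ 0 & a_2\end{pmatrix}$; extending this unitary column operation by the identity on the $M_1$-block keeps the matrix right-equivalent and block diagonal, so $M_1\oplus M_2 \cong M_1 \oplus \begin{pmatrix}B_2&\mathbf{c}_2\\0&a_2\end{pmatrix}$, whose last row is $(0,\dots,0,a_2)$. By Definition \ref{def:projection} its projection over $\mathbf{e}_{n_1+n_2}$ is $\mathcal{G}(M_1\oplus B_2)$. Iterating over $\mathbf{e}_{n_1+1},\dots,\mathbf{e}_{n_1+n_2}$ removes the entire second block and yields $\mathcal{G}(M_1)$; reordering coordinates (which produces an automorphic graph, as already noted for projecting over an arbitrary $\mathbf{e}_i$) and repeating with the first block gives $\mathcal{G}(M_2)$. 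Hence both $\mathcal{G}(M_1)$ and $\mathcal{G}(M_2)$ arise as projections, so $\mathcal{G}(M_1\oplus M_2)$ is a common lift. The bijection and the adjacency check are routine; the one point demanding care is this projection step, where one must confirm that the column operations exposing each successive pivot can be confined to the current $M_2$-block, so that the block-diagonal shape — and therefore the identification with $\mathcal{G}(M_1)$ — survives the iteration.
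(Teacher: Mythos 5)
Your proof is correct. The paper states this lemma without any proof at all---it is presented as immediate from the block structure of the direct sum---so there is no authorial argument to compare against; your write-up supplies exactly the reasoning the paper leaves implicit: congruence modulo $M_1\oplus M_2$ splits blockwise, which gives both the bijection $\psi$ and the identification of the generator action with the Cartesian-product adjacency rule, and your final cautionary point resolves itself correctly, since the column operations exposing each successive pivot recombine only columns whose first $n_1$ entries are zero, so the block-diagonal shape is preserved, each step is a right-equivalence of the current $M_2$-block alone, and iterated projection (using the paper's coordinate-swap remark for the symmetric case) yields $\mathcal{G}(M_1)$ and $\mathcal{G}(M_2)$ as required.
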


As we will see next, there exist other common lifts which obtain $\mathcal{G}(M_1)$ and $\mathcal{G}(M_2)$ as projections but generating a lattice graph of smaller dimension. Note that this would be beneficial for cost aspects, such as minimizing the degree of the network routers, and to provide a good relation between the size of the graph and its projections.\par

\begin{theorem}\label{theo:minimumlift} Given two lattice graphs $\mathcal{G}(M_1)$ and $\mathcal{G}(M_2)$, we consider the lattice graph $\mathcal{G}(M_1 \boxplus M_2)$ which is obtained as follows:
Let $M_1\cong H_1$ and $M_2\cong H_2$ with $H_1$ and $H_2$ in Hermite normal form.
Let $C$ be the submatrix with the first common columns of $H_1$ and $H_2$.
Then $H_1=\begin{pmatrix}C&R_A\\ 0& A\end{pmatrix}$
and $H_2=\begin{pmatrix}C&R_B\\ 0& B\end{pmatrix}$,
where $A$ and $B$ are square matrices.
Then
    $$M_1\boxplus M_2=
    \begin{pmatrix}
    C & R_A & R_B\\
    0 & A & 0\\
    0 & 0 & B\\
    \end{pmatrix}
    $$
It is obtained that:
\begin{enumerate}
\item $\mathcal{G}(M_1 \boxplus M_2)$ is a common lift of $\mathcal{G}(M_1)$ and $\mathcal{G}(M_2)$,
\item $\max(dim(\mathcal{G}(M_1)),dim(\mathcal{G}(M_2))) \leq dim(\mathcal G(M_1\boxplus M_2))
	\leq dim(\mathcal G(M_1\oplus M_2))$.
\end{enumerate}
\end{theorem}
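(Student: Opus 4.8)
The plan is to read off both claims from the block upper-triangular shape of $M_1\boxplus M_2$, reducing everything to the projection machinery of Definition~\ref{def:projection}. Write $C\in\Z^{s\times s}$, $A\in\Z^{p\times p}$ and $B\in\Z^{q\times q}$, so that $H_1$ is $(s+p)\times(s+p)$, $H_2$ is $(s+q)\times(s+q)$, and $M_1\boxplus M_2$ is $(s+p+q)\times(s+p+q)$. Since $H_1$ and $H_2$ are non-singular and block triangular, $\det C,\det A,\det B$ are all non-zero, so $\det(M_1\boxplus M_2)=\det C\,\det A\,\det B\neq0$ and the construction does produce a bona fide lattice graph.

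First I would obtain $\mathcal G(M_1)$ as a projection. Regrouping the blocks gives $M_1\boxplus M_2=\left(\begin{smallmatrix}H_1&R\\0&B\end{smallmatrix}\right)$ with $R=\left(\begin{smallmatrix}R_B\\0\end{smallmatrix}\right)$, so $M_1\boxplus M_2$ is block upper triangular with the Hermite block $B$ in the bottom-right corner. Because $B$ is upper triangular with positive diagonal, $M_1\boxplus M_2$ already has the shape $\left(\begin{smallmatrix}N&\mathbf c\\0&a\end{smallmatrix}\right)$ with pivot $a=B_{q,q}>0$ demanded by Definition~\ref{def:projection}; projecting over $\mathbf e_{s+p+q}$ deletes the last row and column and returns a matrix of the same form, with $B$ shrunk to its leading $(q-1)\times(q-1)$ Hermite submatrix. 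Iterating this peeling over the last $q$ coordinates leaves exactly $H_1$, so the $q$-dimensional projection of $\mathcal G(M_1\boxplus M_2)$ over $\{\mathbf e_{s+p+1},\dots,\mathbf e_{s+p+q}\}$ is $\mathcal G(H_1)\cong\mathcal G(M_1)$.

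For $\mathcal G(M_2)$ the wrinkle is that the $A$-block occupies the middle coordinates rather than the tail. I would first move it past the $B$-block by conjugating with the corresponding permutation matrix $P$, obtaining $M'=PMP^{-1}=\left(\begin{smallmatrix}C&R_B&R_A\\0&B&0\\0&0&A\end{smallmatrix}\right)$. Such a simultaneous permutation of rows and columns only relabels the generators: $\mathcal G(PM)\cong\mathcal G(M)$ because $\mathbf x\mapsto P\mathbf x$ sends each $\mathbf e_i$ to a generator, and $\mathcal G(PMP^{-1})\cong\mathcal G(PM)$ by right equivalence with the unitary $P^{-1}$ (this is precisely the ``swap rows then project'' device noted after Example~\ref{eje:2}, performed for the whole block at once). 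Now $A$ is Hermite and sits in the bottom-right corner of the block-triangular $M'$, so peeling off its $p$ coordinates as before leaves $\left(\begin{smallmatrix}C&R_B\\0&B\end{smallmatrix}\right)=H_2$, exhibiting $\mathcal G(M_2)$ as a projection. The two projections together establish item~(i).

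Item~(ii) is then a count of block sizes: $\dim\mathcal G(M_1)=s+p$, $\dim\mathcal G(M_2)=s+q$, $\dim\mathcal G(M_1\boxplus M_2)=s+p+q$ and $\dim\mathcal G(M_1\oplus M_2)=2s+p+q$, whence $\max(s+p,s+q)=s+\max(p,q)\le s+p+q\le 2s+p+q$ since $s,p,q\ge0$ (the left inequality is tight when $p$ or $q$ vanishes, the right one when $C$ is empty and $\boxplus$ collapses to $\oplus$). I expect the real work to be the middle-block case: justifying the coordinate reordering as a graph isomorphism and verifying that each single-coordinate projection keeps the matrix inside the hypotheses of Definition~\ref{def:projection} (positive pivot, preserved triangular shape), so that the iterated projection is legitimate and order-independent as claimed in Section~\ref{sec:lattice}.
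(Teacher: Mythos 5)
Your proof is correct and follows the same route as the paper, which disposes of item~(i) with the single phrase ``obtained by construction'' and of item~(ii) with exactly your dimension count $\dim(\mathcal G(M_1\boxplus M_2))=\dim\mathcal G(M_1)+\dim\mathcal G(M_2)-\dim\mathcal G(C)$. Your added detail for item~(i) --- peeling off the Hermite block $B$ coordinate by coordinate via Definition~\ref{def:projection}, and handling the middle $A$-block by a block permutation $PMP^{-1}$ justified through the ``swap rows, then right-multiply by the unitary $P^{-1}$'' device from Section~\ref{sec:lattice} --- is a sound and welcome elaboration of what the paper leaves implicit.
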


\begin{proof} The first item is obtained by construction. For the second one, consider
$\max(dim(\mathcal{G}(M_1)),dim(\mathcal{G}(M_2)))
	\leq \dim(\mathcal{G}(M_1 \boxplus M_2))
	=dim(\mathcal{G}(M_1)) + dim(\mathcal{G}(M_2))-dim(\mathcal G(C))
	\leq dim(\mathcal{G}(M_1)) + dim(\mathcal{G}(M_2))
	=dim(\mathcal G(M_1\oplus M_2))$
\end{proof}

Note that when the matrices $M_1$ and $M_2$ have no common columns, both  $\mathcal G(M_1\boxplus M_2)$ and $\mathcal G(M_1\oplus M_2)$ coincide. Moreover, by construction, the operation $\mathcal G(M_1\boxplus M_2)$ provides a lift which minimizes its dimension. As shown in the next example, to handle graphs using $\boxplus$ that belong to the same branch of the tree in Figure \ref{fig:arbol} have some advantages in this sense.\par

\begin{example} The first one is the hybrid graph obtained as a common lift of the $PC(2a)$ and $BCC(a)$.
The calculation described in the Theorem \ref{theo:minimumlift} leads to the matrix:

$$\begin{pmatrix}2a&0&0\\0&2a&0\\0&0&2a\end{pmatrix}
	\boxplus\begin{pmatrix}2a&0&a\\0&2a&a\\0&0&a\end{pmatrix}
	=\begin{pmatrix}2a&0&0&a\\0&2a&0&a\\0&0&2a&0\\0&0&0&a\end{pmatrix}$$

which corresponds to a 4D lattice graph. On the other hand, if we make the common lift of $PC(2a)$ and $FCC(a)$:

$$\begin{pmatrix}2a&0&0\\0&2a&0\\0&0&2a\end{pmatrix} \boxplus \begin{pmatrix}2a&a&a\\0&a&0\\0&0&a\end{pmatrix}
	=\begin{pmatrix}2a&0&0&a&a\\0&2a&0&0&0\\0&0&2a&0&0\\0&0&0&a&0\\0&0&0&0&a\end{pmatrix}$$

which generates a 5D lattice graph. In this case, the common lift has one extra dimension since the graphs
considered belong to different branches of the tree. The same happens with the mix of
$FCC(a)$ and $BCC(a)$, as shown next:

 $$\begin{pmatrix}2a&a&a\\0&a&0\\0&0&a\end{pmatrix}
	\boxplus\begin{pmatrix}2a&0&a\\0&2a&a\\0&0&a\end{pmatrix}
	=\begin{pmatrix}2a&a&a&0&a\\0&a&0&0&0\\0&0&a&0&0\\0&0&0&2a&a\\0&0&0&0&a\end{pmatrix}$$

\end{example}

Finally, to end the section we present in Table \ref{table:todasD} a selection of lattice graphs composed following the guidelines presented in this section. The table also includes their main topological characteristics. Depending on the focus some of them outperform the others.

\begin{table*}\label{table:hibridos}
    \begin{center}
 	\begin{tabular}{|llllll|}
     \hline
	Topology				        & Dimension  &Order		&Projection	  &Diameter			&Average Distance\\
	\hline
    $T(2a,2a)\boxplus RTT(a)$       &3	         &$4a^3$	&vary           &$2a$			&$\approx 1.14877a$\\
	$4D$-$FCC(a)$		                &4           &$2a^4$	&$FCC(a)$	    &$2a$			&$\approx 1.10396a$\\
    $4D$-$BCC(a)$		                &4           &$8a^4$	&$T(2a,2a,2a)$	&$2a$			&$\approx 1.5379a$\\
    $Lip(a)$		                &4           &$16a^4$	&$FCC(2a)$	    &$3a$			&$\approx 1.815a$\\
	$PC(2a)\boxplus BCC(a)$         &4		     &$8a^4$	&vary           &$2.5a$		    &$\approx 1.59715a$\\
	$PC(2a)\boxplus FCC(a)$	        &5	         &$8a^5$	&vary           &$3.5a$		    &$\approx 1.87856a$\\
	$BCC(a)\boxplus FCC(a)$         &5		     &$4a^5$	&vary           &$2.5a$	        &$\approx 1.52522a$\\
\hline
\end{tabular}
	\end{center} \caption{Distance properties of several lattice graphs}\label{table:todasD}
\end{table*}

\section{Routing in Lattice Graphs}\label{sec:routing}

Most interconnection networks use routing tables but their size can compromise system scalability. In this section routing algorithms for lattice graphs are presented. In this way, algorithmic routing can be used to avoid the need of tables. If tables are to be used, the algorithms presented can be employed to fill the routing tables.\par

Our routing algorithm is based on the hierarchy induced by the projecting operation. Routing in a lattice graph can be done by routing in its projection and in the ring defined by its side. In a first subsection we state the node labelling adopted and present a hierarchical routing. In a second subsection, we solve the routing problem in cubic crystal graphs; this is a basic contribution since we are considering cubic crystal graphs as the basic building blocks of the networks proposed in this paper. Finally, complexity and implementation aspects are considered in the last subsection.

\subsection{Hierarchical Routing}

For solving the routing problem over lattice graphs we need first to state which labelling set will be applied. A labelling set is the set which contains the labels for the vertices of the graph. There are many choices for the labelling set. In the 2D case, several approaches to the routing problem have been made in \cite{Flahive,Robic,CamareroTC}. In those articles, several labellings such as the one given by the fundamental parallelogram of the lattice, the set of integers modulo $N$ or the set of minimum norm residues have been considered. Anyway, for labelling a lattice graph of dimension $n$, a subset of $\Z^n$ will be needed. In particular, we define it as follows.

\begin{definition} Given a lattice graph $\mathcal{G}(M)$ of dimension $n$ a \textsl{labelling set} of the graph
is $\mathcal L \subset \Z^n$ such that $|\mathcal L|=|\det M|$ and for every pair $\mathbf l_1, \mathbf l_2 \in \mathcal L$ we have $\mathbf l_1 \not \equiv \mathbf l_2 \pmod{M}$.
\end{definition}

If  $\mathbf v_s, \mathbf v_d \in \mathcal L$, where $\mathbf v_s$ labels the source node and $\mathbf v_d$ labels the destination node,
we will call any vector $\mathbf r \in \mathbb{Z}^{n}$ a \textbf{routing record} when

$$\mathbf v_d - \mathbf v_s \equiv \mathbf r \pmod{M}.$$

With $\mathbf v_d-\mathbf v_s \in \Z^n$ such that:

	$$
	\mathbf v_d-\mathbf v_s\in\L-\L=\{\mathbf x-\mathbf y \ | \ \mathbf x,\mathbf y\in\L\}
	$$

From a design perspective, it is convenient to label the graph nodes according to their positive coordinates. Hence, we will consider the labelling given by the Hermite normal form of the generating matrix. Therefore, let us assume that $H$ is the Hermite normal form of $M$ and
$$
	\L=\{\mathbf x \in \mathbb{Z}^{n} \ | \ 0\leq x_i<H_{i,i}\}.
	$$
The differences set that will be the input for any of the considered routing algorithms will be:
$$
	\L-\L=\{\mathbf x \ | \ -H_{i,i}<x_i<H_{i,i}\}.
	$$

Each component of a routing record indicates the number of hops in the corresponding dimension and its sign, the direction of the hops. The length of a path associated with a routing record is given by its \textsl{Minkowski norm}:
$$
	|\mathbf r|=\sum_i|r_i|
$$
As minimal routing requires shortest paths, minimum norm routing records should be obtained. Hence, the \textbf{routing problem} over $\mathcal{G}(M)$ can be stated as follows:

\begin{center}
\begin{tabular}{rl}
\textbf{input}:		&$\mathbf v$:=$\mathbf v_{d}-\mathbf v_{s}\in\L-\L$\\
\textbf{output}:	&$\underset{\mathbf r\equiv \mathbf v \pmod{M}}{\mathrm{argmin}}(|\mathbf r|)$\\
\end{tabular}
\end{center}
where $\mathrm{argmin}$ states for the element in the set $\{\mathbf r \in \mathbb{Z}^{n} \ | \ \mathbf r\equiv \mathbf v \pmod{M}\}$ minimizing $|\mathbf r|$.\par

Our routing approach takes advantage of the hierarchical nature of lattice graphs. The idea is that routing in a lifted graph can be done by routing in its projection and in the cycle that joins the disjoint projections. Remember that the lattice graph $\mathcal{G}(M)$ with $M \cong \begin{pmatrix}B& \mathbf{c} \\0& a \end{pmatrix}$, has $a$ disjoint copies of its projection $\mathcal{G}(B)$ embedded, which are connected by $\dfrac{|\det M|}{ord(\mathbf e_n)}$ parallel cycles. The cycles have length $ord(\mathbf{e}_{n})$. The number of vertices belonging to a cycle which lie in the same copy of $\mathcal{G}(B)$ is $\dfrac{ord(\mathbf e_n)}{a}$. Hence, we can separately consider the elements of the routing record in the following way:

\begin{proposition} Let $M \cong \begin{pmatrix}B& \mathbf{c} \\0& a\end{pmatrix}$. Then, if ${\mathcal L}_{M}$ denotes the labelling set $\mathcal{G}(M)$ and ${\mathcal L}_{B}$ denotes the labelling set of its projection $\mathcal{G}(B)$ we deduce that:
$${\mathcal L}_{M} = \left \{ \begin{pmatrix} \mathbf{x} \\ y \end{pmatrix} \ | \ \mathbf{x} \in {\mathcal L}_{B}, 0 \leq y <a \right \}$$
\end{proposition}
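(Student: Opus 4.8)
The plan is to reduce everything to Hermite normal form and then observe that the ``box'' defining the labelling set factors coordinate by coordinate. First I would note that congruence modulo $M$ depends only on the sublattice $M\mathbb{Z}^n$, and that right-equivalence leaves this sublattice unchanged: if $M_1 = M_2 P$ with $P$ unitary, then $M_1\mathbb{Z}^n = M_2 P\mathbb{Z}^n = M_2\mathbb{Z}^n$ since $P\mathbb{Z}^n=\mathbb{Z}^n$. Hence a labelling set for $\mathcal{G}(M)$ is the same thing as a labelling set for any right-equivalent matrix, and I may assume $M$ equals its Hermite normal form $H = \begin{pmatrix} B' & \mathbf{c}' \\ 0 & a\end{pmatrix}$. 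By the convention fixed just before the statement, $\mathcal{L}_M$ is then the box $\mathcal{L}_M = \{ \mathbf{v}\in\mathbb{Z}^n : 0\le v_i < H_{i,i}\}$, and likewise $\mathcal{L}_B$ is the box cut out by the Hermite form $H_B$ of the projection matrix $B$.

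The key step is to identify the diagonal of $H$. I would argue that $H_{i,i} = (H_B)_{i,i}$ for $i<n$ and $H_{n,n}=a$. Because the last row of $M$ is $(0,\dots,0,a)$, the integer column operations that reduce the first $n-1$ columns among themselves never disturb that row, so they carry $B$ to $H_B$ inside the top-left block and leave the $(n,n)$ entry fixed; reducing the remaining off-diagonal entries of the last column modulo the diagonal then alters neither the diagonal nor $a$. The corner entry is moreover pinned down by cardinalities: using the identity $|\det M| = |\det B|\,a$ recorded earlier in Section~\ref{sec:lattice}, together with $\prod_{i<n} H_{i,i} = \det H_B = |\det B|$ and $\prod_i H_{i,i} = |\det M|$, we get $H_{n,n}=a$ unambiguously. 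The main obstacle is exactly this bookkeeping, but the cardinality identity makes the outcome forced, so no delicate computation is needed.

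Granting this, the conclusion is immediate. The box $\mathcal{L}_M$ splits as the Cartesian product of the box on the first $n-1$ coordinates, which is precisely $\mathcal{L}_B = \{\mathbf{x} : 0\le x_i < (H_B)_{i,i}\}$, with the range $0\le y<a$ on the last coordinate, giving the asserted equality $\mathcal{L}_M = \{(\mathbf{x},y)^{\top} : \mathbf{x}\in\mathcal{L}_B,\ 0\le y<a\}$.

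As a cross-check, and to cover the reading in which one does not insist on the Hermite convention, I would verify directly that the product set on the right is a valid labelling set. Its cardinality is $|\mathcal{L}_B|\cdot a = |\det B|\,a = |\det M|$, as required. For pairwise incongruence, suppose $(\mathbf{x}_1,y_1)\equiv(\mathbf{x}_2,y_2)\pmod M$, so that $\begin{pmatrix}\mathbf{x}_1-\mathbf{x}_2\\ y_1-y_2\end{pmatrix} = \begin{pmatrix} B & \mathbf{c}\\ 0 & a\end{pmatrix}\mathbf{u}$ for some $\mathbf{u}\in\mathbb{Z}^n$. The last coordinate reads $y_1-y_2 = a u_n$ with $|y_1-y_2|<a$, forcing $u_n=0$ and $y_1=y_2$; the first $n-1$ coordinates then give $\mathbf{x}_1-\mathbf{x}_2 = B\mathbf{u}'$, i.e. $\mathbf{x}_1\equiv\mathbf{x}_2\pmod B$, and since both points lie in the labelling set $\mathcal{L}_B$ this forces $\mathbf{x}_1=\mathbf{x}_2$. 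Thus the product set is indeed a labelling set, confirming the structural decomposition.
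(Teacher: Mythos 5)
Your proof is correct. Note that the paper offers no proof of this proposition at all: it is stated with ``we deduce that'' as an immediate consequence of the labelling convention $\mathcal{L}=\{\mathbf{x}\in\mathbb{Z}^n \mid 0\le x_i<H_{i,i}\}$ fixed just before it, via the Hermite normal form. Your main argument is exactly the natural filling-in of that implicit reasoning: right-equivalence preserves the lattice $M\mathbb{Z}^n$, the Hermite form of $\begin{pmatrix}B&\mathbf{c}\\0&a\end{pmatrix}$ has diagonal $\bigl((H_B)_{1,1},\dotsc,(H_B)_{n-1,n-1},a\bigr)$ because the unimodular operations taking $B$ to $H_B$ and the subsequent reduction of the last column never touch the last row, and hence the box splits as a Cartesian product. (You implicitly use uniqueness of the Hermite form to conclude that the matrix you construct \emph{is} $H$; the paper asserts this uniqueness, so that is fine.) Your final paragraph is a genuine addition rather than a restatement: it verifies directly that the product set has cardinality $|\det B|\,a=|\det M|$ and is pairwise incongruent modulo $M$ --- the last coordinate of the block system forces $u_n=0$, whence $y_1=y_2$ and $\mathbf{x}_1\equiv\mathbf{x}_2\pmod{B}$ --- which shows the right-hand side is a valid labelling set independently of any Hermite bookkeeping, and thus covers the weaker reading of the statement in which one does not insist on the boxed convention. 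This self-contained check is a worthwhile complement to the convention-based argument the paper relies on.
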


\begin{example} The labelling of the $BCC(a)$ is:
	$$\left\{\begin{pmatrix}x\\y\\z\end{pmatrix} \mid 0\leq x<2a,\ 0\leq y<2a,\ 0\leq z<a\right\}$$
Note that it can be obtained from the labelling of the torus $T(2a,2a)$ by adding the last component $z$ fulfilling $0\leq z<a$.
\end{example}

\begin{algorithm}[t]
\SetLine
\KwIn{$\mathbf{v}_{s}$ source, $\mathbf{v}_{d}$ destination}
\KwOut{$\mathbf{r}$ minimum routing record from $\mathbf v_s$ to $\mathbf v_d$}
\SetKwFunction{Function}{Function}

Let $y$ be the last component of $\mathbf{v}_{d}$\;
$\mathbf{v}_{s} + \mathcal{C}$ is the cycle translated to $\mathbf{v}_{s}$\;
For all the vertices $\mathbf{c}_i$ of the cycle in the copy $[\mathcal{G}(B)]_{y}$ do\;
$r_{i}^{\mathcal{C}}$: Route in the cycle from $\mathbf{v}_s$ to vertex $\mathbf{c}_{i}$\;
$\mathbf{r}_{i}^{\mathcal{G}(B)}$: Route in $[\mathcal{G}(B)]_{y}$ from $\mathbf{c}_{i}$ to $\mathbf{v}_{d}$\;
Return the routing record which minimizes the weight of $\begin{pmatrix}\mathbf{r}_{i}^{\mathcal{G}(B)} \\ r_{i}^{\mathcal{C}} \end{pmatrix}$ \;
\caption{Hierarchical Routing in Lattice Graphs}\label{algo:gen-routing}
\end{algorithm}

Now, we can state the following main result:

\begin{theorem} If $[\mathcal{G}(B)]_{y}$ is the projection $\mathcal{G}(B)$ of $\mathcal{G}(M)$ which contains $y\mathbf{e}_{n}$, $\mathcal{C}$ denotes the cycle generated by $\mathbf{e}_{n}$ and, given a vertex $\mathbf{v} \in \mathbb{Z}^{n}$, $\mathbf{v} + \mathcal{C}$ denotes the translation of the cycle to this vertex. Algorithm \ref{algo:gen-routing} gives minimum routing records in any lattice graph.
\end{theorem}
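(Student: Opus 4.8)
The plan is to exploit the block structure $M \cong \begin{pmatrix} B & \mathbf c \\ 0 & a\end{pmatrix}$ to split every routing record into a ``cycle part'' and a ``projection part'', and then to use the additive structure of the Minkowski norm to optimise these parts almost independently. First I would decompose $\mathbf r = \begin{pmatrix}\mathbf r' \\ r_n\end{pmatrix}$ and $\mathbf v_d-\mathbf v_s=\begin{pmatrix}\mathbf d'\\ d_n\end{pmatrix}$ with $\mathbf r',\mathbf d'\in\mathbb Z^{n-1}$. From $\mathbf v_d - \mathbf v_s - \mathbf r = M\mathbf u$ with $\mathbf u=\begin{pmatrix}\mathbf u'\\ u_n\end{pmatrix}$, the last coordinate yields $r_n \equiv d_n \pmod a$, while the first $n-1$ coordinates yield $\mathbf r' \equiv \mathbf d' - u_n\mathbf c \pmod B$, where $u_n=(d_n-r_n)/a$. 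Because $|\mathbf r| = |\mathbf r'| + |r_n|$, once $r_n$ is fixed the best $\mathbf r'$ is exactly a minimum routing record in $\mathcal G(B)$ for the shifted difference $\mathbf d'-u_n\mathbf c$; this is precisely the projection subproblem solved inside the loop of Algorithm~\ref{algo:gen-routing}.

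The crucial step, and the one I expect to be the main obstacle, is bounding the search over $r_n$. A priori $r_n$ ranges over the infinite progression of integers $\equiv d_n\pmod a$, so I must argue that it suffices to consider only the finitely many vertices the cycle $\mathcal C$ meets inside the destination copy $[\mathcal G(B)]_y$. The key identity is that one full turn of the cycle induces the trivial displacement in the projection: writing $ord(\mathbf e_n)\,\mathbf e_n = M\mathbf w$ and reading off coordinates gives $w_n = ord(\mathbf e_n)/a$ and $B\mathbf w' + w_n\mathbf c = \mathbf 0$, hence $\frac{ord(\mathbf e_n)}{a}\,\mathbf c \equiv \mathbf 0 \pmod B$. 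Therefore replacing $r_n$ by $r_n+ord(\mathbf e_n)$ leaves the coset of $\mathbf r'$ modulo $B$ unchanged while strictly increasing $|r_n|$; extra turns never help, and $r_n$ may be restricted to one residue system modulo $ord(\mathbf e_n)$. These representatives are in bijection with the cycle vertices $\mathbf c_i=\mathbf v_s+k_i\mathbf e_n$ lying in $[\mathcal G(B)]_y$, and for each the cheapest admissible $|r_n|$ is the cycle distance $|r_i^{\mathcal C}|=\min(k_i,\,ord(\mathbf e_n)-k_i)$.

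With these two facts the equivalence with the algorithm becomes a matching argument. For achievability I would check that each concatenation $\begin{pmatrix}\mathbf r_i^{\mathcal G(B)}\\ r_i^{\mathcal C}\end{pmatrix}$ is a genuine routing record for $\mathbf v_d-\mathbf v_s$: moving $r_i^{\mathcal C}$ steps along $\mathbf e_n$ reaches $\mathbf c_i$, and since the columns $\mathbf e_1,\dots,\mathbf e_{n-1}$ of the Hermite form are zero in their last row, routing inside $[\mathcal G(B)]_y$ preserves the last coordinate, so its record has zero last component and the two parts simply add. For optimality, given an arbitrary record $\mathbf r$ I would set $\mathbf c_j=\mathbf v_s+r_n\mathbf e_n\pmod M$, which lies in $[\mathcal G(B)]_y$ since $(\mathbf v_s)_n+r_n\equiv (\mathbf v_s)_n+d_n=(\mathbf v_d)_n=y\pmod a$; then $\mathbf r'$ is an admissible projection record from $\mathbf c_j$ to $\mathbf v_d$, giving $|\mathbf r'|\ge|\mathbf r_j^{\mathcal G(B)}|$, while $r_n$ is admissible in the cycle, giving $|r_n|\ge|r_j^{\mathcal C}|$. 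Summing shows $|\mathbf r|$ is at least the cost of option $j$, hence at least the value the algorithm returns; as that value is attained by an admissible record, it is optimal. The only standing hypotheses are that the inner calls return true minima in $\mathcal G(B)$ and in $\mathcal C$: the cycle case is immediate (two directions on a ring) and the projection case follows by recursion on the dimension.
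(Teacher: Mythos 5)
Your proof is correct and follows essentially the same route as the paper's: decompose any routing record into its cycle component $r_n$ and projection component $\mathbf r'$, note that the cycle component reaches one of the intersection vertices $\mathbf c_i$ in $[\mathcal G(B)]_y$ while the projection component (having zero last coordinate) is a record within that copy, and conclude by matching an arbitrary minimal record against the corresponding option enumerated by the algorithm. You simply make explicit several steps the paper leaves terse, such as the congruences $r_n\equiv d_n\pmod a$ and $\mathbf r'\equiv\mathbf d'-u_n\mathbf c\pmod B$ from the block Hermite form and the periodicity identity $\frac{ord(\mathbf e_n)}{a}\mathbf c\equiv\mathbf 0\pmod B$.
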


\begin{proof} Since the algorithm composes routing records from two subgraphs, then the result is indeed a routing record.
We need to see that a minimum one is found.\par

Let $\mathbf r^{\min}$ be one of the routing records with minimum norm. Since $\mathbf v_s+\mathbf r^{\min}_n$ is in the cycle mentioned in the algorithm, then there is an index $i$ such that $\mathbf r^{\min}_n$ is the minimum route in the cycle from $\mathbf v_s$ to $\mathbf c_i$. As $\mathbf r^{\min}$ is minimal, we find that the minimal routing from $\mathbf{c}_{i}$ to $\mathbf{v}_{d}$ does not use the $n$ dimension. Thus, routing in $[\mathcal{G}(B)]_{y}$ gives the minimum. By composing both, the algorithm finds the minimum routing $\mathbf r^{\min}$ and returns it or another one with same norm.
\end{proof}

\begin{remark} In the last step of Algorithm \ref{algo:gen-routing} there can sometimes be
several routing records with the same weight. In this case it is advisable to choose
one of them at random, thus balancing the use of the paths.
\end{remark}

\subsection{Routing in Cubic Crystal Graphs}

We now consider specific algorithms for computing minimum routing records in cubic crystal graphs. Routing in $PC$ is widely known, since as we have previously seen, it is a 3D-torus graph. Hence, we provide routing algorithms for $FCC$ and $BCC$.\par

These new algorithms are also based on the previous idea of a hierarchical routing by using the projections of the graphs.
In general, we will denote a call to a routing algorithm in $\mathcal G(M)$ from
node $\mathbf v_s$ to node $\mathbf v_d$ as $route_M(\mathbf v_s,\mathbf v_d)$. Inside the algorithm for $\mathcal G(M)$ we will employ a nested call to $route_B$, where
$B$ is the projection $M$. This nested call can be to one of our algorithms or to
another one if it is known (as in the case of tori).\par

As we have seen previously, $FCC(a)$ defined as the lattice graph generated by
	$$\begin{pmatrix}2a&a&a\\0&a&0\\0&0&a\end{pmatrix}$$
is isomorphic to the PDTT presented in \cite{Camara}, where a generic graph routing was used.
As can be observed, its projection is the graph with matrix
$\begin{pmatrix}2a&a\\0&a\end{pmatrix}$, denoted as $RTT(a)$ in \cite{Camara}.
It is easy to verify that the order of $\mathbf e_n$ is $2a$, which implies
that the cardinal of the intersection between $\mathbf v_s+\mathcal C$ and $[\mathcal G(B)]_y$ is 2,
that is, we need to do two calls to $route_B$.
Using this mechanism we get Algorithm \ref{algo:FCC-routing} for $FCC(a)$; (for a compact notation in the Algorithm, we define the product by a Boolean as $a\cdot true=a$ and $a\cdot false=0$). An algorithm for routing in the projected 2D graph can be seen in Algorithm \ref{RTT-routing} and it has been introduced in \cite{HPC}.

\begin{algorithm}[t]
\SetLine
\KwIn{$(x,y,z)^t:=\mathbf v_{d}-\mathbf v_{s}\in\L-\L$}
\KwOut{$\mathbf{r}$ minimum routing record from $\mathbf v_s$ to $\mathbf v_d$}
\SetKwFunction{Function}{Function}

$y':=y+a(y<0)$\;
$z':=z+a(z<0)$\;
$\hat x:=x+a\bigl((y<0)\xor(z<0)\bigr)$\;
$x':=\hat x+2a(\hat x<0)-2a(\hat x\geq 2a)$\;
We have $(x',y',z')^t\in\L$\;
$\mathbf r_1^{\mathcal G(B)}:=route_{\left(\begin{smallmatrix}2a&a\\0&a\end{smallmatrix}\right)}
	(\begin{pmatrix}0\\0\end{pmatrix},\begin{pmatrix}x'\\y'\end{pmatrix})$\;
$\mathbf r_2^{\mathcal G(B)}:=route_{\left(\begin{smallmatrix}2a&a\\0&a\end{smallmatrix}\right)}
	(\begin{pmatrix}a\\0\end{pmatrix},\begin{pmatrix}x'\\y'\end{pmatrix})$\;
$\mathbf r:=\mathrm{argmin}(|\mathbf k|\mid \mathbf k\in\{
	\begin{pmatrix}\mathbf r_1^{\mathcal G(B)}\\z'\end{pmatrix},
	\begin{pmatrix}\mathbf r_2^{\mathcal G(B)}\\z'-a\end{pmatrix}
	\})$\;
\caption{Routing in $FCC(a)$}\label{algo:FCC-routing}
\end{algorithm}

\begin{remark}
When $a$ is a power of 2 the starting arithmetic operations are easier to calculate as $rem(y,a)$, $rem(z,a)$ and $rem(\hat x,2a)$.
\end{remark}

\begin{algorithm}[H]
\SetLine
\KwIn{$x,y:=v_{d}-v_{s}$}
\KwOut{$r$ routing record}
\SetKwFunction{Function}{Function}
$p:=\mathrm{rem}(x+y+a,2a)$\;
$q:=\mathrm{rem}(y-x+a,2a)$\;
$x':=(p-q)/2$\;
$y':=(p+q-2a)/2$\;
$r:=(x',y')$\;
\caption{Routing in $RTT(a)$}\label{RTT-routing}
\end{algorithm}

\begin{example} As an example let us consider $FCC(4)$. The labeling of the graph is:
	$$
	\L=\{\begin{pmatrix}x&y&z\end{pmatrix}^t:0\leq x<8,\ 0\leq y,z<4\}.
	$$
If we want to route from
$\mathbf v_s=\begin{pmatrix}1&3&3\end{pmatrix}^t$
to
$\mathbf v_d=\begin{pmatrix}6&0&1\end{pmatrix}^t$,
first we compute
$\mathbf v=\mathbf v_d-\mathbf v_s=\begin{pmatrix}5&-3&-2\end{pmatrix}^t$,
which is in  the set of differences:
	$$
	\mathbf v\in\L-\L=\{\begin{pmatrix}x&y&z\end{pmatrix}^t:-8< x<8,\ -4<y,z<4\}.
	$$
According to Algorithm \ref{algo:FCC-routing}, since we have $y=-3<0$ and $z=-2<0$ these values have to be modified as $y'=-3+4=1$ and $z'=-2+4=2$.
Moreover, since $(-3<0)\xor(-2<0)\equiv false$ we find that $\hat x=x=5$. Finally, as $0\leq 5<8$ this implies $x'=5$ and
$\mathbf v\equiv\begin{pmatrix}5&1&2\end{pmatrix}^t\in\L$.\par

Now, in $RTT(a)$ we find that a minimum route from $\begin{pmatrix}0&0\end{pmatrix}^t$ to $\begin{pmatrix}5&1\end{pmatrix}^t$ is $\begin{pmatrix}1&-3\end{pmatrix}^t$ and a minimum route from $\begin{pmatrix}4&0\end{pmatrix}^t$ to $\begin{pmatrix}5&1\end{pmatrix}^t$ is $\begin{pmatrix}1&1\end{pmatrix}^t$. Consequently,
$\mathbf r_1=\begin{pmatrix}1&-3&2\end{pmatrix}^t$
and
$\mathbf r_2=\begin{pmatrix}1&1&-2\end{pmatrix}^t$.
Finally, after comparing the two norms  $|\mathbf r_1|=6$ and $|\mathbf r_2|=4$, we find that the minimum routing record to reach $\mathbf v_d$ from $\mathbf v_s$ is given by $\mathbf r=\mathbf r_2$.
\end{example}

Similarly, for the network $BCC(a)$, we obtain Algorithm \ref{algo:BCC-routing}. Again, the order $ord(\mathbf e_n)=2a$ which implies 2 calls to the routing of a 2D torus $T(2a, 2a)$.

\begin{algorithm}[t]
\SetLine
\KwIn{$(x,y,z)^t:=\mathbf v_{d}-\mathbf v_{s}\in\L-\L$}
\KwOut{$\mathbf{r}$ minimum routing record from $\mathbf v_s$ to $\mathbf v_d$}
\SetKwFunction{Function}{Function}

$z':=z+a(z<0)$\;
$\hat x:=x+a(z<0)$\;
$\hat y:=x+a(z<0)$\;
$x':=\hat x+2a(\hat x<0)-2a(\hat x\geq 2a)$\;
$y':=\hat x+2a(\hat y<0)-2a(\hat y\geq 2a)$\;
We have $(x',y',z')^t\in\L$\;
$\mathbf r_1^{\mathcal G(B)}:=route_{\left(\begin{smallmatrix}2a&0\\0&2a\end{smallmatrix}\right)}
	(\begin{pmatrix}0\\0\end{pmatrix},\begin{pmatrix}x'\\y'\end{pmatrix})$\;
$\mathbf r_2^{\mathcal G(B)}:=route_{\left(\begin{smallmatrix}2a&0\\0&2a\end{smallmatrix}\right)}
	(\begin{pmatrix}a\\a\end{pmatrix},\begin{pmatrix}x'\\y'\end{pmatrix})$\;
$\mathbf r:=\mathrm{argmin}(|\mathbf k|\mid \mathbf k\in\{
	\begin{pmatrix}\mathbf r_1^{\mathcal G(B)}\\z'\end{pmatrix},
	\begin{pmatrix}\mathbf r_2^{\mathcal G(B)}\\z'-a\end{pmatrix}
	\})$\;

\caption{Routing in $BCC(a)$}\label{algo:BCC-routing}
\end{algorithm}

\subsection{Routing Discussion}

Routing in circulant graphs was first related to the Shortest Vector Problem (SVP) in \cite{Cai}. Later, this fact was used to optimize a routing algorithm for circulants of degree four in \cite{Gomez-Perez}. Following the same ideas, similar complexity for the SVP can be inferred for routing in lattice graphs. However, algorithms for particular graphs can be improved. In this subsection we consider how to appropriately choose the projection of the lattice graph in order to obtain the best routing algorithm among all the possibilities.\par

First, note that following the ideas in the previous section, we can infer the impact of routing complexity for the different lifts of crystal lattice graphs. As we have seen, $\dfrac{ord(\mathbf e_n)}{a}$ determines the number of intersections of the cycle with the destination projection,
which dictates the number of nested routing calls.

\begin{remark}
In fact, we find that:
\begin{itemize}
\item The routing in $nD$-$PC$ can be done immediately with $n$ comparisons in parallel.
\item The hierarchical routing in $nD$-$BCC$ requires 2 calls to the routing algorithm for $(n-1)D$-$PC$.
\item The hierarchical routing in $nD$-$FCC$ requires 2 calls to the $(n-1)D$-$FCC$, which accumulates
	into $2^{n-2}$ calls to $2D$-$FCC$ or RTT. These last routing calls will be performed by Algorithm \ref{RTT-routing}.
\end{itemize}
\end{remark}

Finally, let us consider the case of hybrid graphs. As we have seen in Section \ref{sec:lifting}, hybrid graphs are obtained as common lifts of different lattice graphs. Therefore, given a hybrid graph $\mathcal{G}(M)$ there would be several possible lattice graphs which could be considered as its projection. Since the heaviest computation part in Algorithm \ref{algo:gen-routing} corresponds to the routing calls in the projection, that projection should be carefully chosen. For example, let $\mathcal{G}(M)$ be such that:

$$\begin{pmatrix}2a&0&0&a\\0&2a&0&a\\0&0&2a&0\\0&0&0&a\end{pmatrix}$$

As we have previously seen, this graph is obtained as the common lift of $PC(2a)$ and $BCC(a)$. Clearly, taking $BCC(a)$ as the projection, would complicate the routing function. Hence, we should choose $PC(2a)$ as its projection, in which dependencies among dimensions do not exist and routing will be less laborious.

\section{Practical Issues}\label{sec:practical}

This work has been conceived to study the fundamentals of twisting wrap-around links in multidimensional torus networks. Nevertheless, this research has been motivated by the widespread presence of moderate degree tori in the supercomputing market. Although Fujitsu has recently entered in this terrain with its K system, traditionally Cray and IBM are the two major companies standing out for years in the development of interconnection networks based on torus networks. Hence, this section will be devoted to discuss certain practical aspects. The first one is related to physical network deployment and the second consists of a preliminary performance evaluation.

\subsection{Physical Organization}

It is not difficult to conceive a package hierarchy and a 3D physical organization to deploy systems based on lattice graphs. For illustrating this organization, let us first consider the approach followed by manufacturers. Cray uses a straightforward structure. For example,
an actual configuration, \cite{jaguar}, was a $T(25,32,16)$ packaged on a 200 rack system arranged as an $8 \times25$ rectangle. We can see the system as:
\begin{itemize}
\item System of $25\times 8\times 1$ racks.
\item Racks of $1\times 4 \times 16$ nodes.
\end{itemize}
That is, the third dimension is completely inside the racks and the first dimension is formed entirely joining racks. However the second dimension is partially inside the rack and requires connecting rack columns by rings. Taking into account forthcoming improvements in integration and packaging technologies, it could be expected that a 4D torus would have two dimensions internal to the racks and the other 2 external to the racks.
This idea generalizes to lattice graphs. If $\mathcal G(M)$ is a 4D lattice graph, its 2D projections
would be built inside racks, which would be a torus or a twisted torus.
Then it becomes a matter of completing the lattice by adjusting the offsets
of the cables connecting the racks. Moreover, folding techniques for 3D networks presented in \cite{Camara} can also be of application in our case and easily generalized to higher dimensions.\par

IBM presents a more elaborated organization in the Blue Gene family, \cite{Coteus}. Although the complete network is a torus,
each midplane (half of a rack) has additional hardware which enables the midplane
to disconnect from the remainder of the network and to be itself a small torus.
By arranging several midplanes, this additional hardware enables a multitude of different tori shapes to be connected.
With slight modifications to such hardware it is possible to allow each group to
be a symmetric crystal (or another lattice if desired) instead of a mixed-radix torus.
This hardware changes its configuration only between different application runs. Then, the potentially added functionality would not have
any negative impact on the system.

\subsection{Evaluation compared to currently used topologies}

Most evaluations of big networks have relied on measuring their behavior when managing synthetic traffic loads. Typical experiments are based on simulation. Notwithstanding, the work presented in \cite{Chen} evaluates different routing algorithms reporting maximum achievable loads on a real IBM BlueGene system. They make runs on machines whose topologies are the torus $T(8,8,8,4,2)$ and $T(16,8,8,8,2)$. We shall ignore the last dimension of size 2 and treat them as four dimensional networks; the last small dimension comes from the inside of computing nodes, fixed by computer technology. We have simulated the same tori plus symmetric lattice graphs of the same sizes. We evaluate $4D$-$BCC(4)$ compared to $T(8,8,8,4)$ and $4D$-$FCC(8)$ compared to $T(16,8,8,8)$.

We have used the same synthetic traffic patterns as in \cite{Chen}:

\begin{itemize}
\item \textbf{uniform:} Each node generates packets to any other node, at random with a uniform probability distribution.
\item \textbf{antipodal:} Each node generates traffic to the most distant one.
\item \textbf{centralsymmetric:} Once a center of symmetry is fixed, each node has as its destination the symmetric one.
\item \textbf{randompairings:} The network is divided into pairs in a random uniform way, which then communicate for all the simulation.
\end{itemize}

\begin{table}[b]
	\begin{center}
	\small
	\begin{tabular}{|l|c|}
		\hline
		Injectors & 6   \\
		\hline
		Packet size & 16 phits \\
        \hline
        Queues & 4 packets \\
		\hline
		Deadlock avoidance & Bubble \\
		\hline
		Virtual Channels & 3 \\
        \hline
        flow control & Virtual Cut-through \\
        \hline
        Routing Mechanisms & DOR \\
        \hline
        Arbitration mechanism & random \\
		\hline
	\end{tabular}
	\end{center}
	\caption{Simulation parameters}
	\label{table:datos}
\end{table}

Simulations have been conducted using INSEE (Interconnection Network Simulation and Evaluation Environment) \cite{Navaridas}. Their basic units are the \emph{cycle} for measuring time and the \emph{phit} for measuring information. Each network link (edge of the graph) can send one or zero phits in each cycle. The network \emph{load} is the amount of information received per time. We measure the network load in $\text{phits}/(\text{cycle}\cdot \text{node})$. Nodes (vertices of the graph) generate \emph{packets} composed of an integral number of phits (typically constant) to be sent to other network nodes. For any provided traffic up to load $l$, a packet is injected each cycle in each node with probability $\dfrac{l}{s}$, where $s$ is the size of a packet measured in phits. The accepted traffic or throughput is the total of phits received, divided by the total simulation time and by the number of nodes $N$.
Simulation parameters are shown in Table \ref{table:datos}. We have simulated $10,000$ cycles for statistics, preceded by a network warmup.
At least 5 simulations are averaged for each point. The BlueGene family of supercomputers implements a congestion control mechanism that prioritizes in-transit traffic over new injections, which is also implemented in our router model.

\PEAKBARSLARGE
\PEAKBARSSMALL

Figures \ref{fig:simulation_large_peaks} and \ref{fig:simulation_small_peaks} show results of accepted load in the four networks. Under uniform traffic, results exhibit gains of 26\% in the small case ($4D$-$BCC$) and 50\% in the large one ($4D$-$FCC$).
In random pairings, the throughput is consistently higher, with gains of 16\% and 2\% respectively.
The other two traffic patterns show congestion at high loads for all the networks considered. Nevertheless, the peak load for the antipodal traffic improves by 62\% and 75\% respectively. Under central symmetric traffic, gains are of 45\% in the small case and of 23\% in the large one. Figures \ref{fig:simulation_latency_large} and \ref{fig:simulation_latency_small} show average packet latencies. The different curves demonstrate the superior behavior of lattice topologies. Gain values are coherent with the topological analysis presented in Subsection \ref{sub:crystal_comp}.

\SIMULATIONLATENCYLARGE
\SIMULATIONLATENCYSMALL

\section{Conclusions}\label{sec:conclu}

This research has been focused on the study and proposal of new multidimensional twisted torus interconnection networks. Due to their complex spatial characteristics, their analysis is far from straightforward. Nevertheless, we have taken advantage of an algebraic tool based on integral square matrices presented in \cite{Fiol}. Such matrices define the graph and its topological characteristics. Adequate algebraic manipulations of the matrices enable a better understanding of different network properties. For example, when using the Hermite normal form, matrices reveal the subgraphs naturally embedded in the network.\par

Using this tool, several networks have been proposed and analyzed in this paper. We firstly focus on 3D symmetric networks as alternatives to mixed-radix tori which are not edge-symmetric. Taking the matrices that define cubic crystallographic lattices, we were able to evaluate and compare their associated interconnection networks. If symmetry is desired, the best path when upgrading 3D systems clearly seems to be $PC(a)\to FCC(a)\to BCC(a)\to PC(2a)$, that is, duplicating the machine size on each step and maintaining most of the original connections. In addition, we have introduced a couple of graph lifting methods that allow for higher dimensional networks that embed cubic crystal subnetworks among other graphs. Complementarily, the use of graph projections facilitates the conception of routing algorithms for these networks. Based on this graph operation, minimal routing schemes have been proposed for all the topologies. Although we have focused on typical network configurations derived from powers of two, our results remain valid for any other network size.\par

The paper preliminarily addresses some practical issues. Physical packaging and system organization in racks have been considered, concluding that, for deploying networks based on lattice graphs, very few changes over typical tori would be necessary. In addition to the algebraic analysis carried out through the paper, an empirical evaluation of different interesting topologies has been carried out. Comparisons with current machines have certified that multidimensional twisted tori clearly outperform their orthogonal counterparts. Noticeable gains were exhibited by twisted lattice topologies for both configurations under consideration. These preliminary experiments motivate a thorough network evaluation that will be reported in a forthcoming work.

\appendix
\section{Symmetric Lattice Graphs of dimension 3}\label{apendice}

This Appendix provides a complete characterization of those lattice graphs which are edge-symmetric by linear automorphisms.
In Subsection \ref{sub:linear} some definitions and preliminary lemmas are obtained. In Subsection \ref{sub:characterization},
the complete characterization is done. Finally, some additional comments on the non-linear case are done in Subsection \ref{sub:non-linear}. 

\subsection{About linear automorphisms}
\label{sub:linear}

\begin{definition}A \textsl{signed permutation} of length $n\in\mathbb N$ is a composition of a
sign changing function ($k\to\pm k$, $1\leq k\leq n$) and a permutation $\pi\in\Sigma_n$.

Then, we call \textsl{signed permutation matrix} to a matrix such that when it multiplies a vector
it applies the signed permutation to the vector. Signed permutation matrices are the matrices 
such that in each row and column all entries are zero except exactly one entry with value $\pm 1$.
\end{definition}

In \cite{IWONT10} the two following results were proved.

\begin{lemma}For any linear automorphism $\phi$ of $\mathcal G(M)$ with $\phi(0)=0$ there exists a
signed permutation matrix $P$ such that $\phi(\mathbf x)=P\mathbf x$.
\end{lemma}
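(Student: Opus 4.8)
The plan is to exploit the fact that any graph automorphism must carry the neighbourhood of a vertex onto the neighbourhood of its image, and to combine this with the two hypotheses on $\phi$: that it is linear and that it fixes the origin.

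First I would recall that in $\mathcal G(M)$ the set of neighbours of $0$ is exactly $N(0)=\{\pm\mathbf e_i \pmod M \mid i=1,\dots,n\}$, and that, by the degree count in Section~\ref{sec:lattice}, these are $2n$ distinct vertices. Since $\phi$ is an automorphism with $\phi(0)=0$, it maps $N(0)$ bijectively onto $N(\phi(0))=N(0)$. Hence for each $i$ there is an index $\pi(i)$ and a sign $\epsilon_i\in\{\pm1\}$ with $\phi(\mathbf e_i)\equiv \epsilon_i\mathbf e_{\pi(i)}\pmod M$.

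Next I would use linearity. Because $\phi$ is linear and $\phi(0)=0$, it satisfies $\phi(-\mathbf x)=-\phi(\mathbf x)$, so the images of $\mathbf e_i$ and $-\mathbf e_i$ form the antipodal pair $\pm\epsilon_i\mathbf e_{\pi(i)}$; thus the permutation that $\phi$ induces on the $2n$ points of $N(0)$ commutes with the antipodal involution and therefore descends to a genuine permutation $\pi\in\Sigma_n$ of the $n$ axis directions. Injectivity of $\phi$ on the distinct vertices $\pm\mathbf e_i$ forces $\pi$ to be a bijection. Defining $P$ to be the matrix whose columns are $P\mathbf e_i=\epsilon_i\mathbf e_{\pi(i)}$ then yields precisely a signed permutation matrix, in the sense of the preceding definition.

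Finally, to identify $\phi$ with $P$ on the whole vertex set, I would invoke linearity once more: writing any $\mathbf x\in\mathbb Z^n$ as $\sum_i x_i\mathbf e_i$ gives $\phi(\mathbf x)=\sum_i x_i\,\phi(\mathbf e_i)\equiv\sum_i x_i\,P\mathbf e_i=P\mathbf x\pmod M$, so $\phi(\mathbf x)=P\mathbf x$ as maps on $\mathbb Z^n/M\mathbb Z^n$. I expect the delicate step to be the middle one: one must ensure that the $2n$ vertices $\pm\mathbf e_i$ are pairwise distinct, so that $N(0)$ genuinely has full size $2n$ and $\pi$ is well defined as a permutation, and that the sign assignment is consistent. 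Both of these hinge on the relation $\phi(-\mathbf x)=-\phi(\mathbf x)$ supplied by linearity rather than on mere adjacency preservation, which is exactly why the restriction to linear automorphisms is what makes the statement clean.
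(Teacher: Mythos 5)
Your proof is correct, but note that the paper itself contains no proof to compare against: this lemma is imported verbatim from \cite{IWONT10} (``In \cite{IWONT10} the two following results were proved''), so your argument fills a gap the paper deliberately leaves to a reference. The route you take---$\phi$ fixes $0$, hence permutes $N(0)=\{\pm\mathbf e_i \bmod M\}$; oddness $\phi(-\mathbf x)=-\phi(\mathbf x)$ makes the action descend to the $n$ antipodal pairs; injectivity on the $2n$ distinct neighbours forces the induced $\pi\in\Sigma_n$ to be a bijection (two pairs cannot collapse onto one, since a pair has only two elements); and linearity propagates the agreement on the basis to all of $\mathbb Z^n/M\mathbb Z^n$---is the standard and, as far as one can tell, the intended argument. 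One correction to your closing commentary: the pairwise distinctness of the $2n$ vertices $\pm\mathbf e_i \pmod M$ does \emph{not} ``hinge on'' the relation $\phi(-\mathbf x)=-\phi(\mathbf x)$; it is a property of the matrix $M$ alone, which the paper assumes globally when it asserts that every lattice graph is regular of degree $2n$. Linearity buys you the pair structure and the consistency of the signs $\epsilon_i$, but not the non-degeneracy of $N(0)$; if some $\mathbf e_i\equiv-\mathbf e_i$ or $\mathbf e_i\equiv\pm\mathbf e_j \pmod M$ (which happens for matrices with very small invariant factors), the assignment $(\pi(i),\epsilon_i)$ is no longer uniquely determined and your construction of $P$ needs the paper's standing regularity hypothesis to go through. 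With that attribution repaired, the proof is complete.
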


\begin{lemma} The function $\phi$ defined by $\phi(\mathbf x)=P\mathbf x$ is an automorphism
of $\mathcal G(M)$ if and only if there exists $Q\in\mathbb\Z^{n\times n}$
such that $PM=MQ$.
\end{lemma}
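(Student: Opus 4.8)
The plan is to recognize that the algebraic condition $PM = MQ$ with $Q \in \mathbb{Z}^{n\times n}$ is nothing more than the requirement that $M^{-1}PM$ be integral, and that this is exactly the condition for the linear map $\phi(\mathbf{x}) = P\mathbf{x}$ to descend to a well-defined self-map of the quotient $\mathbb{Z}^{n}/M\mathbb{Z}^{n}$. Concretely, $\phi$ is well-defined on the quotient if and only if $\mathbf{x}\equiv\mathbf{y}\pmod{M}$ forces $P\mathbf{x}\equiv P\mathbf{y}\pmod{M}$; writing $\mathbf{x}-\mathbf{y}=M\mathbf{u}$ this reads $PM\mathbf{u}\in M\mathbb{Z}^{n}$ for every $\mathbf{u}\in\mathbb{Z}^{n}$, i.e.\ every column of $PM$ lies in $M\mathbb{Z}^{n}$, which says precisely $PM=MQ$ for some integral $Q=M^{-1}PM$. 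With this reformulation, the forward implication is immediate: if $\phi$ is a graph automorphism it is in particular a well-defined bijection of the vertex set, so the computation above yields $PM=MQ$ at once.

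For the converse I would assume $PM=MQ$ with $Q$ integral and recover that $\phi$ is a genuine graph automorphism. The preceding lemma guarantees that the only candidates for linear automorphisms are signed permutation matrices, so $P$ is such a matrix and $\det P=\pm1$. Hence $\det Q = \det(M^{-1}PM)=\pm1$, so $Q$ is unimodular and $Q^{-1}=M^{-1}P^{-1}M$ is again integral; this shows the inverse map $\phi^{-1}(\mathbf{x})=P^{-1}\mathbf{x}$ is also well-defined on the quotient, and a well-defined map admitting a well-defined two-sided inverse is a bijection (alternatively, an injective self-map of the finite set $\mathbb{Z}^{n}/M\mathbb{Z}^{n}$ of $|\det M|$ elements is automatically onto). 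Preservation of adjacency then follows from the signed-permutation structure: if $\mathbf{v}-\mathbf{w}\equiv\pm\mathbf{e}_i\pmod{M}$, then $\phi(\mathbf{v})-\phi(\mathbf{w})\equiv\pm P\mathbf{e}_i\pmod{M}$, and the $i$-th column $P\mathbf{e}_i$ of a signed permutation matrix is $\pm\mathbf{e}_j$ for some $j$, so the images remain adjacent; running the same argument for $\phi^{-1}$ keeps non-adjacent pairs non-adjacent, so $\phi$ is an automorphism.

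The individual computations are routine, so I do not expect genuine difficulty; the one point deserving care is the essential use of the signed-permutation hypothesis on $P$, which the statement inherits from the previous lemma. It enters twice and cannot be dropped: it supplies $\det P = \pm1$ (hence unimodularity of $Q$ and invertibility of $\phi$), and it forces $P\mathbf{e}_i=\pm\mathbf{e}_j$ (hence preservation of the orthonormal adjacency set $\{\pm\mathbf{e}_i\}$). A merely well-defined linear map such as $\mathbf{x}\mapsto 2\mathbf{x}$ satisfies $PM=MQ$ yet fails both bijectivity and adjacency preservation, which is why the real content of the lemma is the observation that, for the signed permutation matrices singled out earlier, well-definedness of $\phi$ already suffices to make it an automorphism.
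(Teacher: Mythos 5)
Your proof is correct and complete. Note that the paper itself offers no proof of this lemma --- it is quoted from \cite{IWONT10} --- so there is nothing to diverge from; your argument is the natural one and matches the intended reading, which the paper confirms later when it recalls the result as ``automorphisms are matrices $P$ satisfying the condition that $M^{-1}PM$ is an integer matrix where $P$ is unitary and only has $\pm1$ entries.'' Both halves of your argument are sound: well-definedness on $\mathbb{Z}^n/M\mathbb{Z}^n$ is exactly integrality of $M^{-1}PM$, and for the converse the signed-permutation structure gives $\det Q=\pm1$ (hence a well-defined inverse and bijectivity on the finite vertex set) and $P\mathbf{e}_i=\pm\mathbf{e}_j$ (hence preservation of the adjacency set in both directions). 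One small logical point deserves tightening: in the ``if'' direction you write that the preceding lemma guarantees $P$ is a signed permutation, but at that stage $\phi$ is not yet known to be an automorphism, so the preceding lemma does not apply; the signed-permutation property must instead be carried as a standing hypothesis on $P$, inherited from the surrounding discussion. You effectively concede this in your closing paragraph --- your counterexample $P=2I$, which satisfies $PM=MQ$ yet is no automorphism, correctly shows the hypothesis cannot be dropped --- so the fix is purely a matter of phrasing, not substance.
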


The linear automorphisms of a lattice graph $\mathcal G(M)$ form a group $LAut(\mathcal G(M))$,
which usually coincides with the full automorphism group $Aut(\mathcal G(M))$,
except in a few cases that we consider in the last section of this appendix.
The group of linear automorphisms which fixes 0 will be denoted as $LAut(\mathcal G(M),0)$
(also known as \textsl{stabilizer}).

\begin{definition}We say that $\mathcal G(M)$ is linearly-symmetric if for every $i$
there exist $\phi\in LAut(\mathcal G(M),0)$ such that $\phi(e_1)=\pm e_i$.
\end{definition}

\begin{lemma} A linearly-symmetric lattice graph is symmetric.
\end{lemma}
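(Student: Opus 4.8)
The plan is to prove the two halves of symmetry separately: vertex-transitivity, which is essentially free, and edge-transitivity, which is where the hypothesis does the work.

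For vertex-transitivity, I would recall that $\mathcal{G}(M)$ is a Cayley graph over the Abelian group $\mathbb{Z}^n/M\mathbb{Z}^n$: two vertices are adjacent exactly when their difference is $\pm\mathbf{e}_i$, so every translation $\mathbf{x}\mapsto\mathbf{x}+\mathbf{v}$ preserves adjacency and acts transitively on the vertices. This is precisely the fact already cited in the body of the paper, so vertex-symmetry needs no further argument.

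The core is edge-transitivity. First I would observe that the neighbours of $0$ are exactly the $2n$ classes $\pm\mathbf{e}_1,\dotsc,\pm\mathbf{e}_n$, so every edge incident to $0$ has the form $\{0,\pm\mathbf{e}_i\}$. Next I would note that the negation map $\mathbf{x}\mapsto-\mathbf{x}$ is always a linear automorphism fixing $0$: its matrix is $-I$, a signed permutation matrix, and negating a difference $\pm\mathbf{e}_i$ again yields a difference of the same form. Composing this with the automorphisms supplied by the linearly-symmetric hypothesis — for each $i$ a $\phi\in LAut(\mathcal{G}(M),0)$ with $\phi(\mathbf{e}_1)=\pm\mathbf{e}_i$ — lets me realise \emph{both} signs: for every index $i$ and every choice of sign there is a stabiliser automorphism sending the edge $\{0,\mathbf{e}_1\}$ to $\{0,\pm\mathbf{e}_i\}$.

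Finally I would combine the two ingredients. Given an arbitrary edge $\{\mathbf{v},\mathbf{w}\}$ with $\mathbf{w}-\mathbf{v}\equiv\pm\mathbf{e}_i$, translation by $-\mathbf{v}$ carries it to an edge $\{0,\pm\mathbf{e}_i\}$ incident to $0$, and then the appropriate stabiliser automorphism of the previous step carries that to the fixed edge $\{0,\mathbf{e}_1\}$. Hence every edge lies in the single automorphism orbit of $\{0,\mathbf{e}_1\}$, and routing one edge to $\{0,\mathbf{e}_1\}$ and then, by the inverse automorphism, to any other target edge yields edge-transitivity. I do not expect a serious obstacle; the only point demanding care is the sign bookkeeping among the edges incident to $0$, which is exactly what the always-present negation automorphism $-I$ resolves.
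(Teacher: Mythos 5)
Your proof is correct. The paper states this lemma without proof, treating it as immediate from the definitions, and your argument --- translations of the Cayley graph for vertex-transitivity, then stabiliser automorphisms composed with the negation map $-I$ to carry an arbitrary edge $\{\mathbf v,\mathbf w\}$ first to some $\{0,\pm\mathbf e_i\}$ and then to $\{0,\mathbf e_1\}$ --- is exactly the routine verification the authors left implicit, with the sign bookkeeping via $-I$ (the one detail worth writing down, since the definition of linearly-symmetric only guarantees one of the two signs $\phi(\mathbf e_1)=\pm\mathbf e_i$) handled correctly.
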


We can denote signed permutations as $(1\ -\!2)(-\!3\ -\!4)$, where
$\sigma=(\dots\ \pm\!a\ b\ \dots)$ means that $\sigma(a)=b=-\sigma(-a)$ and
$\sigma=(\dots\ \pm\!a\ -\!b\ \dots)$ means that $\sigma(a)=-b=-\sigma(-a)$.
The number of signed permutations of length $n$ is $n!2^n$.
For $n=3$ this is $3!2^3=48$, which are given in Table \ref{table:signedpermutations}.
\begin{table}
\begin{center}\scriptsize
$
\begin{array}{|c|c|c|c|}
\hline
\rule{0pt}{1em}
\sigma			&\sigma^2			&\sigma^3			&\sigma^4\\
\hline
id=(1)(2)(3)	&id					&					&id\\
(1)(2)(-\!3)		&id					&					&id\\
(1)(-\!2)(3)		&id					&					&id\\
(1)(-\!2)(-\!3)		&id					&					&id\\
(-\!1)(2)(3)		&id					&					&id\\
(-\!1)(2)(-\!3)		&id					&					&id\\
(-\!1)(-\!2)(3)		&id					&					&id\\
(-\!1)(-\!2)(-\!3)	&id					&					&id\\

\hline
(1)(2\ 3)		&id					&					&id\\
(1)(2\ -\!3)		&(1)(-\!2)(-\!3)		&					&id\\
(1)(-\!2\ 3)		&(1)(-\!2)(-\!3)		&					&id\\
(1)(-\!2\ -\!3)		&id					&					&id\\
(-\!1)(2\ 3)		&id					&					&id\\
(-\!1)(2\ -\!3)		&(1)(-\!2)(-\!3)		&					&id\\
(-\!1)(-\!2\ 3)		&(1)(-\!2)(-\!3)		&					&id\\
(-\!1)(-\!2\ -\!3)		&id					&					&id\\

\hline
(1\ 3)(2)		&id					&					&id\\
(1\ -\!3)(2)		&(-\!1)(2)(-\!3)		&					&id\\
(1\ 3)(-\!2)		&id					&					&id\\
(1\ -\!3)(-\!2)		&(-\!1)(2)(-\!3)		&					&id\\
(-\!1\ 3)(2)		&(-\!1)(2)(-\!3)		&					&id\\
(-\!1\ -\!3)(2)		&id					&					&id\\
(-\!1\ 3)(-\!2)		&(-\!1)(2)(-\!3)		&					&id\\
(-\!1\ -\!3)(-\!2)		&id					&					&id\\

\hline
(1\ 2)(3)		&id					&					&id\\
(1\ 2)(-\!3)		&id					&					&id\\
(1\ -\!2)(3)		&(-\!1)(-\!2)(3)		&					&id\\
(1\ -\!2)(-\!3)		&(-\!1)(-\!2)(3)		&					&id\\
(-\!1\ 2)(3)		&(-\!1)(-\!2)(3)		&					&id\\
(-\!1\ 2)(-\!3)		&(-\!1)(-\!2)(3)		&					&id\\
(-\!1\ -\!2)(3)		&id					&					&id\\
(-\!1\ -\!2)(-\!3)		&id					&					&id\\

\hline
(1\ 2\ 3)			&(1\ 3\ 2)			&id					&(1\ 2\ 3)\\
(1\ 2\ -\!3)		&(-\!1\ -\!3\ 2)			&(-\!1)(-\!2)(-\!3)		&(-\!1\ -\!2\ 3)\\
(1\ -\!2\ 3)		&(1\ -\!3\ -\!2)			&(-\!1)(-\!2)(-\!3)		&(-\!1\ 2\ -\!3)\\
(1\ -\!2\ -\!3)		&(-\!1\ 3\ -\!2)			&id					&(1\ -\!2\ -\!3)\\
(-\!1\ 2\ 3)		&(-\!1\ 3\ -\!2)			&(-\!1)(-\!2)(-\!3)		&(1\ -\!2\ -\!3)\\
(-\!1\ 2\ -\!3)		&(1\ -\!3\ -\!2)			&id					&(-\!1\ 2\ -\!3)\\
(-\!1\ -\!2\ 3)		&(-\!1\ -\!3\ 2)			&id					&(-\!1\ -\!2\ 3)\\
(-\!1\ -\!2\ -\!3)		&(1\ 3\ 2)			&(-\!1)(-\!2)(-\!3)		&(1\ 2\ 3)\\

\hline
(1\ 3\ 2)			&(1\ 2\ 3)			&id					&(1\ 3\ 2)\\
(1\ -\!3\ 2)		&(1\ -\!2\ -\!3)			&(-\!1)(-\!2)(-\!3)		&(-\!1\ 3\ -\!2)\\
(1\ 3\ -\!2)		&(-\!1\ -\!2\ 3)			&(-\!1)(-\!2)(-\!3)		&(-\!1\ -\!3\ 2)\\
(1\ -\!3\ -\!2)		&(-\!1\ 2\ -\!3)			&id					&(1\ -\!3\ -\!2)\\
(-\!1\ 3\ 2)		&(-\!1\ 2\ -\!3)			&(-\!1)(-\!2)(-\!3)		&(1\ -\!3\ -\!2)\\
(-\!1\ -\!3\ 2)		&(-\!1\ -\!2\ 3)			&id					&(-\!1\ -\!3\ 2)\\
(-\!1\ 3\ -\!2)		&(1\ -\!2\ -\!3)			&id					&(-\!1\ 3\ -\!2)\\
(-\!1\ -\!3\ -\!2)		&(1\ 2\ 3)			&(-\!1)(-\!2)(-\!3)		&(1\ 3\ 2)\\

\hline
\end{array}
$
\end{center}
	\caption{Signed permutations of 3 elements and their powers}
	\label{table:signedpermutations}
\end{table}

\subsection{Determination of all linearly symmetric lattice graphs for $n=3$}\label{sub:characterization}

\begin{definition}A pair of matrices $A,B\in\mathbb{\Z}^{n\times n}$ are \textsl{similar}
when a unit matrix $U$ exists such that $AU=UB$. This is denoted
by $A\sim B$.
\end{definition}

\begin{lemma}\label{lem:rightsimilar}
%
%
Let $PM=MQ$ and $PM'=M'Q'$ then
	$M\cong M'\Leftrightarrow Q\sim Q'.$
\end{lemma}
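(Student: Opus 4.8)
The plan is to use the nonsingularity of $M$ and $M'$ to rewrite the two hypotheses as $Q=M^{-1}PM$ and $Q'=M'^{-1}PM'$, so that $Q$ and $Q'$ are exactly the integral matrices to which the fixed automorphism $P$ is conjugated by the two generating matrices. Under this reading the statement becomes the assertion that right-equivalence of the generators ($M\cong M'$) corresponds to integral similarity of the conjugates ($Q\sim Q'$), and I would prove the two implications separately.

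For the forward implication I would start from the definition of $\cong$: assume $M=M'U$ for a unit matrix $U\in\Z^{n\times n}$. Substituting into $PM=MQ$ gives $PM'U=M'UQ$, and replacing $PM'$ by $M'Q'$ yields $M'Q'U=M'UQ$. Since $M'$ is nonsingular it cancels on the left, leaving $Q'U=UQ$; rewriting this as $Q\,U^{-1}=U^{-1}Q'$ exhibits the unit matrix $U^{-1}$ witnessing $Q\sim Q'$. This half is a direct computation and I expect no difficulty.

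For the converse I would try to reverse the same manipulation. Assuming $Q\sim Q'$, pick a unit $U$ with $Q'U=UQ$ and observe that $M'U$ satisfies the same intertwining relation as $M$, namely $P(M'U)=M'Q'U=M'UQ=(M'U)Q$. Hence $C:=(M'U)^{-1}M$ is a rational matrix commuting with $Q$, and $M=M'\,(UC)$. Thus $M\cong M'$ will follow precisely once I can show that the connecting matrix $UC$, equivalently $M'^{-1}M$, is an integral unit matrix.

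Proving that $M'^{-1}M$ is integral and unimodular is the real obstacle, and it is where the structure of $P$ must enter: a priori $C$ is only rational, and for a trivial or scalar $P$ the conclusion can fail outright, so the nonsingularity of the generators alone is not enough. I would exploit that the automorphisms relevant to this classification cyclically permute the basis vectors up to sign, so that $P$ (and hence $Q'$) is non-derogatory and its rational commutant is the cyclic algebra $\Q[Q']$; combined with a normalization of $M$ and $M'$ to Hermite normal form, this should constrain $C$ enough to force it into $\Z^{n\times n}$ with determinant $\pm1$. Pinning down this integrality and unimodularity is the step I expect to require the most care, and I would anticipate that the clean converse genuinely relies on the hypotheses controlling $P$ rather than on the bare relations as written.
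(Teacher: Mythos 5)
Your forward implication is correct and is essentially identical to the paper's: substitute $M=M'U$ into $PM=MQ$, use $PM'=M'Q'$, cancel the nonsingular $M'$, and read off $Q'U=UQ$, which is exactly the paper's notion of $Q\sim Q'$.

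The converse is where the proposal has a genuine gap, and your own aside about scalar $P$ already shows why no structural hypothesis on $P$ can close it: if $PM=MQ$, take $M':=2M$; then $Q'=M'^{-1}PM'=M^{-1}PM=Q$, so $Q\sim Q'$ trivially, yet $|\det M'|=2^n|\det M|$ makes $M\cong M'$ impossible. This works for \emph{every} $P$, including the non-derogatory order-3 signed permutations $P_1,\dotsc,P_4$ of the appendix, and note that $2I$ lies in the commutant $\Q[Q']$ anyway, so the cyclic-algebra constraint you planned to exploit cannot force your matrix $C$ to be a unit. The integrality step you flagged as ``the real obstacle'' is thus not merely hard but impossible: the literal biconditional, with $M$ and $M'$ both fixed in advance, is false. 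The paper's one-line proof of the reciprocal in fact establishes a weaker existence statement, and that is all the subsequent classification uses: if $PM=MQ$ and $Q\sim Q'$, say $QU=UQ'$ with $U$ a unit, then $M'':=MU$ satisfies $PM''=M''Q'$ and $M''\cong M$. Combined with your forward direction, the content is that the similarity class of $Q$ is a well-defined invariant of the right-equivalence class of $M$, and every representative $Q'$ of that class is realized by some $M''$ in the class; this is precisely how Lemma \ref{lem:similarclasses} is then applied to reduce to $Q\in\{Q_1,Q_2\}$ and solve $P_iM=MQ$ symbolically for $M$. So the repair is to weaken the reading of the $\Leftarrow$ direction to this existence form, not to strengthen the hypotheses on $P$ as you proposed.
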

\begin{proof}
We see that if we know $PM=MQ$ and $M=M'U$ then $PM'U=M'UQ$ and $PM'=M'(UQU^{-1})=M'Q'$
with $Q'\sim Q$. Reciprocally, we know that if $PM=MQ$ and $Q'=UQU^{-1}$
then $M'=MU$ produces $PM'=M'Q'$ and $M'\cong M$.
\end{proof}
Since right equivalences leave the group invariant (hence the graph is the same),
we know that for a given $P$ we only need to see how many $Q$ there are
modulo similarity. Then, knowing $P$ and $Q$ we can
solve for $M$.


In \cite{Newman} the following useful theorem is stated:
\begin{theorem}\label{thm:newmanreducible}
Given a matrix $A$ we can find a similar matrix, made of blocks,
which is block upper triangular and moreover, that the blocks of the diagonal
all have characteristic polynomial irreducible over $\Q$ (Theorem III.12, page 50).
\end{theorem}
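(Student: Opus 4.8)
The plan is to prove this by induction on $n$, the size of $A$, by repeatedly splitting off a saturated (pure) $A$-invariant sublattice on which $A$ acts irreducibly. I regard $\Z^n$ as a $\Z[x]$-module with $x$ acting as $A$; then an $A$-invariant sublattice is just a submodule, and a similarity $U^{-1}AU$ with $U$ a unit matrix corresponds to a change of $\Z$-basis. The goal is to produce a flag of $A$-invariant saturated sublattices $0=L_0\subset L_1\subset\dots\subset L_k=\Z^n$ such that $A$ induces on each $L_i/L_{i-1}$ an action with characteristic polynomial irreducible over $\Q$; choosing a $\Z$-basis adapted to such a flag then yields exactly the asserted block upper triangular form.

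The heart of the argument is the base construction: producing a single nonzero saturated $A$-invariant sublattice $L$ on which $A$ acts with characteristic polynomial irreducible over $\Q$. First I would work over $\Q$. Let $p(x)\in\Q[x]$ be an irreducible factor of the minimal polynomial of $A$; minimality forces $p(A)$ to be non-injective, so $\ker p(A)\subseteq\Q^n$ is nonzero. Pick $0\neq v\in\ker p(A)$ and form the cyclic subspace $V=\Q[A]v$. Since $p(A)$ annihilates every $A^k v$, the minimal polynomial of $A|_V$ divides the irreducible $p$ and hence equals $p$; being cyclic, $A|_V$ has characteristic polynomial equal to its minimal polynomial, namely $p$, which is irreducible. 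Because $V$ is a $\Q$-subspace of $\Q^n$, the sublattice $L=V\cap\Z^n$ is saturated in $\Z^n$ and satisfies $L\otimes\Q=V$; thus $L$ is $A$-invariant and $A|_L$ has the same irreducible characteristic polynomial $p$.

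With $L$ in hand I would exploit that $\Z$ is a principal ideal domain: since $L$ is saturated, the quotient $\Z^n/L$ is torsion-free and hence free, so any $\Z$-basis of $L$ extends to a $\Z$-basis of $\Z^n$. In that basis $A$ takes the block form $\left(\begin{smallmatrix}A_1&\ast\\0&A_2\end{smallmatrix}\right)$, where $A_1=A|_L$ has irreducible characteristic polynomial and $A_2\in\Z^{(n-\deg p)\times(n-\deg p)}$ is the induced map on $\Z^n/L$. By the induction hypothesis, $A_2$ is similar to a block upper triangular matrix $B_2$ with irreducible diagonal blocks; conjugating the whole matrix by $\mathrm{diag}(I,U_2)$ replaces the lower block by $B_2$, altering only the irrelevant off-diagonal block $\ast$ and leaving $A_1$ untouched, which completes the induction.

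The step I expect to be the main obstacle is guaranteeing integrality throughout. The invariant subspace must be chosen over $\Q$, not over a splitting field of $p$, so that $L=V\cap\Z^n$ has full rank inside $V$ and the resulting diagonal block is a genuine integer matrix whose characteristic polynomial is irreducible \emph{over} $\Q$. Using the cyclic subspace $\Q[A]v$ attached to an irreducible factor $p$ is precisely what forces rationality and irreducibility simultaneously, and the saturation $V\cap\Z^n$ then transfers this cleanly to the integral setting while keeping the quotient free, so that the induction can proceed.
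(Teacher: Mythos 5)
Your proof is correct, but there is nothing in the paper to compare it against: the paper does not prove this statement at all, it imports it verbatim from Newman's \emph{Integral Matrices} (Theorem III.12, page 50), which is what the parenthetical citation in the statement refers to. Your blind reconstruction is sound and essentially the standard argument, close in spirit to Newman's own: the key steps all check out. Indeed, if $p$ is an irreducible factor of the minimal polynomial then $p(A)$ cannot be invertible over $\Q$ (otherwise the complementary factor would annihilate $A$, contradicting minimality), the cyclic subspace $V=\Q[A]v$ for $0\neq v\in\ker p(A)$ has minimal polynomial dividing the irreducible $p$ and hence characteristic polynomial equal to $p$, and $L=V\cap\Z^n$ is a saturated, $A$-invariant, full-rank lattice in $V$ because $V$ is a rational subspace and $A$ is integral. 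The two points where an integral argument could silently go wrong are exactly the ones you flag and handle: saturation of $L$ guarantees $\Z^n/L$ is free, so a basis of $L$ extends to a $\Z$-basis of $\Z^n$ and the change of basis is genuinely unimodular; and $A$-invariance of $L$ makes the induced map $A_2$ on $\Z^n/L$ an integer matrix, so the induction hypothesis applies to it, with the conjugation by $\mathrm{diag}(I,U_2)$ only altering the irrelevant off-diagonal block. One trivial bookkeeping remark: in the inductive step you should note the degenerate case $L=\Z^n$ (when $\deg p=n$), where $A$ itself already has irreducible characteristic polynomial and the recursion terminates; this is implicit in your setup but worth a sentence.
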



One simple case is when $LAut(\mathcal G(M),0)$ is a cyclic group $\langle \phi\rangle$.
In this case the associated matrix will have characteristic polynomial $x^n\pm 1$.
Starting at $n=4$ we can find groups, such as the Klein four-group in which the group
is generated by more than 1 element.

\begin{lemma}\label{lem:symorder3} Given $M\in\Z^{3\times 3}$, $\mathcal G(M)$ is linearly symmetric
if and only if there exists a signed permutation of order 3 in $LAut(\mathcal G(M),0)$.
\end{lemma}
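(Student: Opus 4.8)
The plan is to exploit the earlier lemma guaranteeing that every automorphism fixing $0$ is a signed permutation matrix, so that $LAut(\mathcal G(M),0)$ is a subgroup $G$ of the group of order $48$ listed in Table \ref{table:signedpermutations}. Forgetting signs defines a homomorphism $\pi$ from the signed permutation matrices onto $\Sigma_3$; write $\bar G=\pi(G)$ for the induced permutation group on the three axes $\{1,2,3\}$. The whole argument reduces the geometric condition of linear symmetry to a purely group-theoretic statement about $\bar G$ and about which order-$3$ elements of $\bar G$ lift to order-$3$ elements of $G$.

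For the easy implication, suppose $\sigma\in LAut(\mathcal G(M),0)$ has order $3$. Inspecting Table \ref{table:signedpermutations} (or noting that an order-$3$ signed permutation must have a $3$-cycle as its underlying permutation and an even number of sign changes), $\sigma$ cyclically permutes the three axes. Hence $\sigma$ sends $\mathbf e_1$ to some $\pm\mathbf e_j$ and $\sigma^2$ sends $\mathbf e_1$ to $\pm\mathbf e_k$, where $j,k$ are the two remaining indices; together with the identity, the set $\{\mathrm{id},\sigma,\sigma^2\}$ therefore maps $\mathbf e_1$ onto $\pm\mathbf e_i$ for every $i$, which is exactly linear symmetry.

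For the converse I would first translate linear symmetry into a transitivity statement: for each $i$ there is an automorphism fixing $0$ that sends $\mathbf e_1$ to $\pm\mathbf e_i$, so its underlying permutation sends $1$ to $i$, and the orbit of $1$ under $\bar G$ is all of $\{1,2,3\}$. By the orbit-stabilizer relation this forces $3\mid |\bar G|$, and Cauchy's theorem then yields an element $\bar g\in\bar G$ of order $3$, i.e.\ a $3$-cycle. The remaining task is to lift $\bar g$ to an element of $G$ of order $3$. Pick any $g\in G$ with $\pi(g)=\bar g$. Since $\pi(g^3)=\bar g^3=\mathrm{id}$, the matrix $g^3$ is diagonal with $\pm 1$ entries, so $g^6=(g^3)^2$ is the identity; as the order of $g$ is a multiple of $3$ (the order of $\bar g$) and divides $6$, it is either $3$ or $6$. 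In the first case $g$ itself works, and in the second $g^2$ has order $3$. Either way $G$ contains an order-$3$ signed permutation, completing the proof.

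The only step requiring care is this last lifting: a priori an order-$3$ element of $\bar G$ might arise only from order-$6$ elements of $G$, but squaring disposes of that case, so no genuine obstacle remains. The bookkeeping is light precisely because in dimension $3$ every $3$-cycle cubes to a diagonal sign matrix; the same strategy would be more delicate in higher dimensions, where order-$3$ permutations need not capture the full transitive structure and where the stabilizer can be generated by several elements (as the remark on the Klein four-group at $n=4$ already signals).
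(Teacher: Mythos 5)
Your proof is correct, and the forward direction matches the paper's; but your converse takes a genuinely different route. The paper works by exhaustive case analysis on the possible orders ($1,2,3,4,6$) of the $48$ signed permutations of length $3$: it discards the identity and the pure sign changes, asserts that the remaining elements of orders $2$ and $4$ cannot provide symmetry individually and that composing two of them yields either a sign change or an element of order $3$ or $6$, and finally reduces order $6$ to order $3$ via $\psi=\phi^2$ (using $\phi^3=-\mathrm{id}$). You instead pass to the sign-forgetting homomorphism $\pi\colon LAut(\mathcal G(M),0)\to\Sigma_3$, note that linear symmetry makes the image $\bar G$ transitive on $\{1,2,3\}$, apply orbit-stabilizer and Cauchy to produce a $3$-cycle in $\bar G$, and lift it: any preimage $g$ has $g^3$ in the diagonal-sign kernel, hence order $3$ or $6$, and squaring disposes of order $6$ --- the same closing trick as the paper, reached structurally rather than by table inspection. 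Your route buys rigor and transparency: the paper's claim about compositions of order-$2$/order-$4$ elements is asserted rather than proved (implicitly delegated to Table \ref{table:signedpermutations}), whereas each of your steps is a standard group-theoretic fact; it also isolates exactly where dimension $3$ enters (a signed $3$-cycle cubes to a diagonal sign matrix), which you rightly flag as the obstacle to generalization. What the paper's approach buys is concreteness and self-containedness at the elementary level of the appendix, everything being checkable against the table. One minor remark: your parenthetical that an order-$3$ signed permutation has an even number of sign changes is true (its cube is the product of the signs times the identity) but is not needed in the forward direction, where the $3$-cycle structure of the underlying permutation already gives $\sigma(\mathbf e_1)=\pm\mathbf e_j$ and $\sigma^2(\mathbf e_1)=\pm\mathbf e_k$ with $\{1,j,k\}=\{1,2,3\}$.
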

\begin{proof}

If such a signed permutation exists, it is clear that $\mathcal G(M)$ is linearly symmetric.

For the reciprocal, we begin noting that signed permutations of length 3 can have orders 1, 2, 3, 4 and 6.
The identity is the only signed permutation of order 1 and does not contribute to symmetry.
Moreover, the signed permutations which only change signs (such as $(-1)(2)(-3)$)
do no contribute to symmetry.
Any remaining signed permutation of orders 2 and 4 do not provide symmetry by themselves,
and the composition of two of them generates either a sign change or a permutation of
order 3 or 6.

Hence linear symmetry implies the existence of an automorphism $\phi\in LAut(\mathcal G(M,0))$
with order 3 or 6. If it has order 3, we already have the desired permutation.
Otherwise we have $\phi^3=-id$ and so $\psi=\phi^2$ has order 3.
\end{proof}

Hence, if $\mathcal G(M)$ is linearly symmetric then $LAut(\mathcal G(M),0)$
contains at least one of the next four groups as a subgroup and
there is a matrix $P$ such that $PM=MQ$ for some $Q$.
\begin{itemize}
\item $\langle(1\ 2\ 3)\rangle=\langle(1\ 3\ 2)\rangle$ with $P_1$.
\item $\langle(1\ -\!2\ -\!3)\rangle=\langle(-\!1\ 3\ -\!2)\rangle$ with $P_2$.
\item $\langle(-\!1\ 2\ -\!3)\rangle=\langle(1\ -\!3\ -\!2)\rangle$ with $P_3$.
\item $\langle(-\!1\ -\!2\ 3)\rangle=\langle(-\!1\ -\!3\ 2)\rangle$ with $P_4$.
\end{itemize}
\begin{align*}
	P_1&=\begin{pmatrix}0&0&1\\1&0&0\\0&1&0\end{pmatrix}&
	P_2&=\begin{pmatrix}0&0&1\\-1&0&0\\0&-1&0\end{pmatrix}\\
	P_3&=\begin{pmatrix}0&0&-1\\1&0&0\\0&-1&0\end{pmatrix}&
	P_4&=\begin{pmatrix}0&0&-1\\-1&0&0\\0&1&0\end{pmatrix}
\end{align*}
These signed permutations have characteristic and minimum polynomial $x^3-1$.
We can find some matrices (symbolic over 3 integer parameters)
by taking $Q=P$, that is, we obtain $M_i$ such that $P_iM_i=M_iP_i$. They are:
\begin{align*}
M_1&=\begin{pmatrix}a&c&b\\b&a&c\\c&b&a\end{pmatrix},&
M_2&=\begin{pmatrix}a&-c&-b\\b&a&-c\\c&b&a\end{pmatrix},\\
M_3&=\begin{pmatrix}a&-c&-b\\b&a&c\\c&-b&a\end{pmatrix},&
M_4&=\begin{pmatrix}a&c&b\\b&a&-c\\c&-b&a\end{pmatrix}.
\end{align*}
We now need to find the similar matrices.\\

\begin{lemma}\label{lem:similarclasses}
There are exactly 2 similarity classes with characteristic polynomial $x^3-1$:
$$
Q_1=\begin{pmatrix}1&0&0\\0&-1&1\\0&-1&0\end{pmatrix}\text{ and }
Q_2=\begin{pmatrix}1&0&1\\0&-1&1\\0&-1&0\end{pmatrix}.
$$
\end{lemma}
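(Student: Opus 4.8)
The plan is to reduce $Q$ to a block-triangular normal form using the cited reducibility theorem and then classify the single off-diagonal ``coupling'' block that survives. Since $x^{3}-1=(x-1)(x^{2}+x+1)$ with $x^{2}+x+1$ irreducible over $\Q$, Theorem \ref{thm:newmanreducible} lets me assume $Q\sim\begin{pmatrix}1&\mathbf b\\0&C\end{pmatrix}$, where $C\in\Z^{2\times2}$ has characteristic polynomial $x^{2}+x+1$ and $\mathbf b=(a\ b)$; equivalently, the saturated invariant submodule $\ker(Q-I)\cong\Z$ fits into an exact sequence $0\to\Z\to\Z^{3}\to\Z[\omega]\to0$ of $\Z[x]/(x^{3}-1)$-modules (note $Q^{3}=I$ by Cayley--Hamilton). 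First I would record that any integral $2\times2$ matrix with characteristic polynomial $x^{2}+x+1$ is $\Z$-similar to the companion block $\begin{pmatrix}-1&1\\-1&0\end{pmatrix}$ occurring in $Q_{1},Q_{2}$; this is the classical fact that the order $\Z[\omega]$ has class number one, which forces the quotient module to be $\cong\Z[\omega]$ in a standard basis. Hence the only remaining freedom is the coupling $\mathbf b\in\Z^{2}$.

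Next I would determine which couplings yield similar matrices by exploiting the conjugations that preserve this triangular shape. Conjugating by the unipotent matrix $\begin{pmatrix}1&s&t\\0&1&0\\0&0&1\end{pmatrix}$ sends $(a,b)\mapsto(a-2s-t,\,b+s-t)$, so $\mathbf b$ may be altered by the lattice spanned by $(2,-1)$ and $(1,1)$, whose determinant is $3$. Thus $\mathbf b$ is well-defined only modulo an index-$3$ sublattice of $\Z^{2}$, leaving at most the three representatives $(0,0)$, $(0,1)$, $(0,2)$. A further conjugation by $\mathrm{diag}(-1,1,1)$ sends $(a,b)\mapsto(-a,-b)$ and therefore identifies $(0,1)$ with $(0,2)$. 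This chain shows every such $Q$ is $\Z$-similar to $Q_{1}$ (coupling $(0,0)$) or to $Q_{2}$ (coupling $(0,1)$), giving \emph{at most} two classes.

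The real content is to prove $Q_{1}\not\sim Q_{2}$, since the reductions above only supply an upper bound. For this I would exhibit a similarity invariant detecting whether the extension splits, namely the cokernel of the norm element $N=Q^{2}+Q+I$, which acts as multiplication by $3$ on the trivial part and as $0$ on the $\Z[\omega]$ part. A Smith-normal-form computation gives $N=\mathrm{diag}(3,0,0)$ for $Q_{1}$, so $\operatorname{coker}(N)\cong\Z/3\Z\oplus\Z^{2}$ carries $3$-torsion, whereas for $Q_{2}$ one finds $N=\begin{pmatrix}3&-1&2\\0&0&0\\0&0&0\end{pmatrix}$ with $\gcd(3,-1,2)=1$, so $\operatorname{coker}(N)\cong\Z^{2}$ is torsion-free. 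As conjugate matrices have isomorphic cokernels of $N$, the two are not $\Z$-similar, completing the lower bound. I expect this step to be the crux: the conjugation bookkeeping is routine, but certifying that the nonsplit class is genuinely distinct needs the right invariant, which here is exactly the obstruction to decomposing $\Z^{3}$ as $\Z\oplus\Z[\omega]$ over $\Z[x]/(x^{3}-1)$. (Equivalently, one could compute $\operatorname{Ext}^{1}_{\Z[\langle x\rangle]}(\Z[\omega],\Z)\cong\Z/3\Z$ and observe that the unit automorphisms of the two factors act through $\pm1$, leaving two orbits; I would keep the concrete cokernel argument, as it is self-contained and matches the matrix computations used elsewhere in the paper.)
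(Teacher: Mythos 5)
Your proof is correct, and it shares the paper's skeleton for the upper bound while taking a genuinely different route at the crux. Like the paper, you invoke Theorem~\ref{thm:newmanreducible} to triangularize, then use unipotent conjugation to alter the coupling vector along an index-$3$ sublattice (your $(a,b)\mapsto(a-2s-t,\,b+s-t)$ with lattice $\langle(2,-1),(1,1)\rangle$ is exactly the paper's $(m+2n,\,m-n)$ computation in different notation), leaving the three representatives $(0,0)$, $(0,1)$, $(0,2)$. For the identification $(0,1)\sim(0,2)$ the paper exhibits an ad hoc unimodular matrix; your conjugation by $\mathrm{diag}(-1,1,1)$ followed by lattice reduction of $(0,-1)$ achieves the same thing more systematically (and checks out: $(0,3)=-(2,-1)+2(1,1)$ lies in the lattice). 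The real divergence is the proof that $Q_1\not\sim Q_2$: the paper writes $Q_1U=UQ_2$ entrywise, derives $d=g=0$ and $a=-3b$, and concludes $3\mid\det U$, a contradiction; you instead compute the cokernel of the norm element $N=Q^2+Q+I$, which is a legitimate similarity invariant since $U(Q^2+Q+I)U^{-1}=(Q')^2+Q'+I$, and your matrix computations are right ($N_1=\mathrm{diag}(3,0,0)$ gives $\Z/3\Z\oplus\Z^2$, while $N_2$ has the single nonzero row $(3,-1,2)$ of $\gcd$ $1$, giving torsion-free $\Z^2$). Your invariant is more conceptual --- it exposes the distinction as the (non)splitting of $0\to\Z\to\Z^3\to\Z[\omega]\to0$ and would generalize to higher cyclotomic cases --- whereas the paper's argument is more elementary and needs no module theory. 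A further merit of your write-up is that you justify normalizing the quadratic block to the companion matrix via the class number one property of $\Z[\omega]$ (Latimer--MacDuffee), a step the paper leaves entirely implicit in passing from Newman's theorem to ``at most 3 matrices modulo similarity''; the only residual implicit step, shared equally with the paper, is that the eigenvalue-$1$ block may be placed in the top-left corner of the triangular form, an ordering that Newman's construction does permit.
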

\begin{proof}
For $x^3-1=(x-1)(x(x+1)+1)$ we have the following upper triangular block matrix
which has it as its characteristic polynomial:
$Q=\begin{pmatrix}1&0&0\\0&-1&1\\0&-1&0\end{pmatrix}$.
We know that
\begin{multline*}
\begin{pmatrix}1&n&m\\0&1&0\\0&0&1\end{pmatrix}
\begin{pmatrix}1&m+2n&m-n\\0&-1&1\\0&-1&0\end{pmatrix}
\begin{pmatrix}1&-n&-m\\0&1&0\\0&0&1\end{pmatrix}\\
=
\begin{pmatrix}1&0&0\\0&-1&1\\0&-1&0\end{pmatrix}
\end{multline*}
So
	$
\begin{pmatrix}1&m+2n&m-n\\0&-1&1\\0&-1&0\end{pmatrix}
\sim
\begin{pmatrix}1&0&0\\0&-1&1\\0&-1&0\end{pmatrix}
	$.
And as $|\det(\begin{pmatrix}-1&-2\\-1&1\end{pmatrix})|=3$,
by Theorem \ref{thm:newmanreducible},
we have at most 3 matrices modulo similarity, which are:
$$
\begin{pmatrix}1&0&0\\0&-1&1\\0&-1&0\end{pmatrix},
\begin{pmatrix}1&0&1\\0&-1&1\\0&-1&0\end{pmatrix}\text{ and }
\begin{pmatrix}1&0&2\\0&-1&1\\0&-1&0\end{pmatrix}.
$$
We check that the first two are non-similar.
If
$$
\begin{pmatrix}1&0&0\\0&-1&1\\0&-1&0\end{pmatrix}
\begin{pmatrix}a&b&c\\d&e&f\\g&h&i\end{pmatrix}
=
\begin{pmatrix}a&b&c\\d&e&f\\g&h&i\end{pmatrix}
\begin{pmatrix}1&0&1\\0&-1&1\\0&-1&0\end{pmatrix}
$$
then
$$
\begin{pmatrix}a&b&c\\-d-g&-e+h&-f+i\\-d&-e&-f\end{pmatrix}
=
\begin{pmatrix}a&-b-c&a+b\\d&-e-f&d+e\\g&-h-i&g+h\end{pmatrix}.
$$
Hence $d=g=0$ and $a=-3b$, and $3b$ divides the determinant, which cannot be a unit.
Now we see that the last two are similar.
$$
\begin{pmatrix}1&0&1\\0&-1&1\\0&-1&0\end{pmatrix}
\begin{pmatrix}1&0&1\\0&0&1\\0&-1&1\end{pmatrix}
=
\begin{pmatrix}1&0&1\\0&0&1\\0&-1&1\end{pmatrix}
\begin{pmatrix}1&0&2\\0&-1&1\\0&-1&0\end{pmatrix}
$$
So we have proved that there are exactly 2 similarity classes
with characteristic polynomial $x^3-1$:
$$
Q_1=\begin{pmatrix}1&0&0\\0&-1&1\\0&-1&0\end{pmatrix}\text{ and }
Q_2=\begin{pmatrix}1&0&1\\0&-1&1\\0&-1&0\end{pmatrix}.
$$
\end{proof}

We need to explore the $4\cdot 2=8$ possible matrices from all combinations.

\begin{lemma}\label{lem:similarities}We have $P_1\sim Q_2\sim P_2\sim P_3 \sim P_4$.
\end{lemma}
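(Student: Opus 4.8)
The plan is to split the statement into two parts that are proved by quite different means: the chain $P_1\sim P_2\sim P_3\sim P_4$, which is purely combinatorial, and the single similarity $P_1\sim Q_2$, which uses the classification of Lemma \ref{lem:similarclasses}.

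First I would observe that $P_2,P_3,P_4$ are obtained from $P_1$ by prescribing signs along the \emph{same} underlying $3$-cycle, and that in all four cases the product of the three nonzero entries equals $+1$ (this is exactly the condition that forces order $3$ rather than order $6$, consistent with characteristic polynomial $x^3-1$). This suggests conjugating $P_1$ by a diagonal sign matrix $D=\mathrm{diag}(d_1,d_2,d_3)$ with $d_i=\pm1$, which is a unit matrix and hence witnesses similarity. Since the nonzero entries of $P_1$ sit at positions $(1,3),(2,1),(3,2)$, the matrix $DP_1D^{-1}$ has nonzero entries $d_1d_3,\ d_2d_1,\ d_3d_2$; matching these to the signs of $P_i$ reduces to a small sign system that is solvable precisely because the sign-products agree. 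For instance $D=\mathrm{diag}(1,-1,1)$ gives $P_2=DP_1D^{-1}$, and producing the analogous $\mathrm{diag}(1,1,-1)$ and $\mathrm{diag}(1,-1,-1)$ handles $P_3$ and $P_4$. This yields $P_1\sim P_2\sim P_3\sim P_4$.

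Second, for $P_1\sim Q_2$ I would exhibit an explicit unit matrix $U$ with $P_1U=UQ_2$. Reading the columns of $U=(\,\mathbf u_1\ \mathbf u_2\ \mathbf u_3\,)$ against the columns of $Q_2$ gives the conditions $P_1\mathbf u_1=\mathbf u_1$, $P_1\mathbf u_2=-\mathbf u_2-\mathbf u_3$ and $P_1\mathbf u_3=\mathbf u_1+\mathbf u_2$. The first forces $\mathbf u_1$ to be a fixed vector of the cyclic shift $P_1$, hence a multiple of $(1,1,1)^t$; substituting the remaining relations collapses everything to $(I+P_1+P_1^2)\mathbf u_2=-\mathbf u_1$, which is solvable because $I+P_1+P_1^2$ is the all-ones matrix. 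A convenient solution is $\mathbf u_1=(1,1,1)^t$, $\mathbf u_2=(-1,0,0)^t$, $\mathbf u_3=(1,1,0)^t$; one then checks $\det U=\pm1$, so $U$ is a unit matrix and the similarity holds.

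Alternatively, and more in the spirit of Lemma \ref{lem:similarclasses}, I could avoid constructing $U$: since each $P_i$ has characteristic polynomial $x^3-1$, it is similar to $Q_1$ or $Q_2$, and it suffices to rule out $Q_1$. The cokernel $\Z^3/(A-I)\Z^3$ is a similarity invariant, because $AU=UB$ yields $(A-I)U=U(B-I)$ and a unit $U$ carries $(B-I)\Z^3$ onto $(A-I)\Z^3$; a Smith-normal-form computation shows this cokernel is torsion-free ($\cong\Z$) for every $P_i$, whereas for $Q_1$ it carries a $\Z/3$ summand. The main technical point to get right, in either route, is the unimodularity/torsion check: verifying $\det U=\pm1$ in the explicit construction, or confirming that the gcd of the $2\times2$ minors of $P_i-I$ equals $1$ in the invariant route, since that gcd is exactly what separates the $Q_1$ and $Q_2$ classes.
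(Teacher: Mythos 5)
Your primary argument is correct and is essentially the paper's own proof: the paper also establishes $P_1\sim Q_2$ by exhibiting an explicit unit matrix satisfying $P_1U=UQ_2$ (it uses $U=\left(\begin{smallmatrix}1&0&0\\1&-1&1\\1&0&1\end{smallmatrix}\right)$; your $U$ with columns $(1,1,1)^t,(-1,0,0)^t,(1,1,0)^t$ also satisfies the relation and has $\det U=-1$), and it obtains $P_1\sim P_2\sim P_3\sim P_4$ by exactly your diagonal sign conjugations --- its witnesses $\mathrm{diag}(-1,1,-1)$, $\mathrm{diag}(1,1,-1)$, $\mathrm{diag}(1,-1,-1)$ agree with yours up to a harmless global sign, since conjugating by $D$ and $-D$ is the same. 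Your systematic derivation of $U$ (fixed vector $(1,1,1)^t$ of the cyclic shift, then $(I+P_1+P_1^2)\mathbf u_2=-\mathbf u_1$ with $I+P_1+P_1^2$ the all-ones matrix) is sound and explains what the paper merely verifies by multiplication. Your alternative route is genuinely different from the paper and also checks out: the cokernel $\Z^3/(A-I)\Z^3$ is a similarity invariant for the reason you give; the gcd of the $2\times 2$ minors of each $P_i-I$ is $1$ (e.g.\ the upper-left minor of $P_1-I$ equals $1$), whereas the unique nonzero $2\times 2$ minor of $Q_1-I$ is $\det\left(\begin{smallmatrix}-2&1\\-1&-1\end{smallmatrix}\right)=3$, giving cokernels $\Z$ versus $\Z\oplus\Z/3\Z$; combined with the two-class statement of Lemma \ref{lem:similarclasses} this forces $P_i\sim Q_2$ with no conjugating matrices at all. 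What the invariant route buys is robustness and a conceptual explanation --- it is the same torsion-by-$3$ phenomenon that the paper's proof of Lemma \ref{lem:similarclasses} detects through the relation $a=-3b$ when showing $Q_1\not\sim Q_2$ --- while the explicit-matrix route buys the conjugators themselves, which the paper then reuses to transport the symbolic solutions $M_i$ between the classes.
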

\begin{proof}
First we see that $P_1\sim Q_2$.
$$
\begin{pmatrix}0&0&1\\1&0&0\\0&1&0\end{pmatrix}
\begin{pmatrix}1&0&0\\1&-1&1\\1&0&1\end{pmatrix}
=
\begin{pmatrix}1&0&0\\1&-1&1\\1&0&1\end{pmatrix}
\begin{pmatrix}1&0&1\\0&-1&1\\0&-1&0\end{pmatrix}
$$
And now that $P_1\sim P_2\sim P_3 \sim P_4$.
$$
\begin{pmatrix}0&0&1\\1&0&0\\0&1&0\end{pmatrix}
\begin{pmatrix}-1&0&0\\0&1&0\\0&0&-1\end{pmatrix}
=
\begin{pmatrix}-1&0&0\\0&1&0\\0&0&-1\end{pmatrix}
\begin{pmatrix}0&0&1\\-1&0&0\\0&-1&0\end{pmatrix}
$$
$$
\begin{pmatrix}0&0&1\\1&0&0\\0&1&0\end{pmatrix}
\begin{pmatrix}1&0&0\\0&1&0\\0&0&-1\end{pmatrix}
=
\begin{pmatrix}1&0&0\\0&1&0\\0&0&-1\end{pmatrix}
\begin{pmatrix}0&0&-1\\1&0&0\\0&-1&0\end{pmatrix}
$$
$$
\begin{pmatrix}0&0&1\\1&0&0\\0&1&0\end{pmatrix}
\begin{pmatrix}1&0&0\\0&-1&0\\0&0&-1\end{pmatrix}
=
\begin{pmatrix}1&0&0\\0&-1&0\\0&0&-1\end{pmatrix}
\begin{pmatrix}0&0&-1\\-1&0&0\\0&1&0\end{pmatrix}
$$
\end{proof}
Thus, the first 4 matrices with $P_iM=MQ_2$ are right equivalent to the
previously calculated $M_i$.

Now we find the 4 symbolic matrices $M_i'$ which satisfy $P_iM_i'=M_i'Q_1$.

\begin{align*}
M_1'&=\begin{pmatrix}a&b&c\\a&c&-b-c\\a&-b-c&b\end{pmatrix}\\
M_2'&=\begin{pmatrix}a&b&c\\-a&-c&b+c\\a&-b-c&b\end{pmatrix}\\
M_3'&=\begin{pmatrix}a&b&c\\a&c&-b-c\\-a&b+c&-b\end{pmatrix}\\
M_4'&=\begin{pmatrix}a&b&c\\-a&-c&b+c\\-a&b+c&-b\end{pmatrix}
\end{align*}

The next two lemmas show that the 8 families of matrices modulo similarity
are actually only 2 families modulo graph isomorphism.
\begin{lemma}\label{lem:Miso}
The sets induced by the matrices $M_1$, $M_2$, $M_3$ and $M_4$
are the same modulo graph-isomorphism
when taking the parameters $a,b,c\in\mathbb Z$.
\end{lemma}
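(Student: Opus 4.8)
The plan is to exhibit, for each $i\in\{2,3,4\}$, an explicit graph isomorphism between $\mathcal G(M_i)$ and $\mathcal G(M_1)$ that acts compatibly on the three integer parameters, so that the entire parametrised family $\{\mathcal G(M_i(a,b,c))\mid a,b,c\in\Z\}$ coincides with $\{\mathcal G(M_1(a,b,c))\mid a,b,c\in\Z\}$ modulo isomorphism.

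First I would record two transformations of a generating matrix that leave the isomorphism class of its lattice graph unchanged. The first is right equivalence (Definition \ref{def:rightequivalent}): $\mathcal G(M)\cong\mathcal G(MU)$ whenever $U$ is unimodular, as proved in \cite{Fiol}. The second is left multiplication by a signed permutation matrix $S$: the map $\mathbf x\mapsto S\mathbf x$ descends to a bijection $\mathbb Z^n/M\mathbb Z^n\to\mathbb Z^n/SM\mathbb Z^n$ (since $\mathbf x-\mathbf y=M\mathbf u$ forces $S\mathbf x-S\mathbf y=SM\mathbf u$) and sends each generator $\pm\mathbf e_i$ to $\pm\mathbf e_{\pi(i)}$, so adjacencies are preserved and $\mathcal G(M)\cong\mathcal G(SM)$. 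Combining the two with a diagonal sign matrix $D$, which is simultaneously unimodular and a signed permutation with $D^{-1}=D$, yields $\mathcal G(DMD)\cong\mathcal G(M)$.

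Next I would exploit the similarities established in Lemma \ref{lem:similarities}. Its proof realises each relation $P_1\sim P_i$ through a diagonal sign matrix $D_i$ with $P_1D_i=D_iP_i$, namely $D_2=\mathrm{diag}(-1,1,-1)$, $D_3=\mathrm{diag}(1,1,-1)$, $D_4=\mathrm{diag}(1,-1,-1)$ (and $D_1=I$); each satisfies $D_i^{-1}=D_i$, so $P_i=D_iP_1D_i$. Because $M_1$ commutes with $P_1$, the matrix $D_iM_1D_i$ commutes with $P_i$: indeed $P_i(D_iM_1D_i)=D_iP_1M_1D_i=D_iM_1P_1D_i=(D_iM_1D_i)P_i$. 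Since $P_i$ is non-derogatory (its minimal polynomial is the squarefree $x^3-1$), its integer centraliser has rank $3$ and is exactly the three-parameter family $M_i(a,b,c)$ constructed earlier; hence $D_iM_1(a,b,c)D_i=M_i(\sigma_i(a,b,c))$ for a linear reparametrisation $\sigma_i$ of $\Z^3$ obtained merely by flipping signs of $a,b,c$ (for instance $\sigma_2(a,b,c)=(a,-b,c)$). As $\sigma_i$ is a bijection of $\Z^3$, the set $\{M_i(a,b,c)\}$ equals $\{D_iM_1(a,b,c)D_i\}$, and by the previous paragraph each $\mathcal G(D_iM_1D_i)\cong\mathcal G(M_1)$; therefore the four families coincide modulo graph isomorphism.

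The routine parts are the sign-tracking computation showing that each $D_iM_1D_i$ reparametrises into the $M_i$ pattern and the direct verification of the commutation identities. I expect the only genuinely delicate point to be justifying that the listed $M_i$ already exhaust the full integer centraliser of $P_i$, so that $D_iM_1D_i$ is forced to equal $M_i(\cdot)$ rather than merely commuting with $P_i$; this follows from $P_i$ being non-derogatory together with $\{I,P_i,P_i^2\}$ forming a $\Z$-basis of the centraliser, but it is the step that requires the most care.
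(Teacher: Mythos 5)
Your proof is correct and follows essentially the same route as the paper: the paper's proof consists exactly of exhibiting identities $S\,M_1\,T = M_i(\pm a,\pm b,\pm c)$ with $S,T$ diagonal $\pm 1$ matrices (its third identity is literally your $D_3M_1D_3$, the other two use different sign matrices on the two sides), relying on the same two facts you state—left multiplication by a signed permutation matrix and right multiplication by a unimodular matrix both preserve the isomorphism class of the lattice graph. Your centraliser detour (non-derogatority of $P_i$ plus $\{I,P_i,P_i^2\}$ being a $\mathbb{Z}$-basis of the integral centraliser) is sound but dispensable, since the ``routine'' sign-tracking computation of $D_iM_1D_i$ that you defer already verifies directly that the product has the $M_i$ pattern with sign-flipped parameters, which is all the paper's three displayed computations do.
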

\begin{proof}
$$
\begin{pmatrix}-1&0&0\\0&1&0\\0&0&1\end{pmatrix}
M_1
\begin{pmatrix}1&0&0\\0&-1&0\\0&0&-1\end{pmatrix}
=
\begin{pmatrix}-a&c&b\\b&-a&-c\\c&-b&-a\end{pmatrix}
$$
which is $M_4$ giving $a$ the value $-a$.

$$
\begin{pmatrix}1&0&0\\0&-1&0\\0&0&1\end{pmatrix}
M_1
\begin{pmatrix}-1&0&0\\0&1&0\\0&0&-1\end{pmatrix}
=
\begin{pmatrix}-a&c&-b\\b&-a&c\\-c&b&-a\end{pmatrix}
$$
which is $M_2$ giving $a$ the value $-a$ and $c$ the value $-c$.

$$
\begin{pmatrix}1&0&0\\0&1&0\\0&0&-1\end{pmatrix}
M_1
\begin{pmatrix}1&0&0\\0&1&0\\0&0&-1\end{pmatrix}
=
\begin{pmatrix}a&c&-b\\b&a&-c\\-c&-b&a\end{pmatrix}
$$
which is $M_3$ giving $c$ the value $-c$.
\end{proof}

\begin{lemma}\label{lem:Mprimeiso}
The sets induced by the matrices $M_1'$, $M_2'$, $M_3'$ and $M_4'$
are the same modulo graph-isomorphism
when taking the parameters $a,b,c\in\mathbb Z$.
\end{lemma}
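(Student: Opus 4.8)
The plan is to reuse, in its lightest form, the mechanism already exploited in the proof of Lemma~\ref{lem:Miso}: left multiplication of a generating matrix by a signed permutation matrix $S$ produces an isomorphic lattice graph, because the relabelling $\mathbf x\mapsto S\mathbf x$ carries $\mathcal G(M)$ onto $\mathcal G(SM)$ while permuting the generators $\{\pm\mathbf e_i\}$ among themselves. Here I expect to need only the \emph{diagonal} signed permutations, i.e.\ the sign-changing maps that fix each coordinate axis; these are unitary, so—unlike in Lemma~\ref{lem:Miso}—I anticipate needing neither a right factor nor any reparametrization of $a,b,c$.

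First I would record the structural pattern relating the four matrices: $M_2'$ is obtained from $M_1'$ by negating its second row, $M_3'$ by negating its third row, and $M_4'$ by negating both. Concretely, setting $D_2=\mathrm{diag}(1,-1,1)$, $D_3=\mathrm{diag}(1,1,-1)$ and $D_4=\mathrm{diag}(1,-1,-1)$, I would verify the three one-line identities $M_2'=D_2M_1'$, $M_3'=D_3M_1'$ and $M_4'=D_4M_1'$. Each is a direct multiplication that keeps the first row $(a,b,c)$ fixed and flips the sign of the indicated lower rows, so it matches the displayed target matrices entry by entry; this is the only computation in the proof and it is routine.

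From these identities the conclusion is immediate. For every fixed triple $a,b,c\in\mathbb Z$ the map $\mathbf x\mapsto D_i\mathbf x$ is a graph isomorphism $\mathcal G(M_1')\cong\mathcal G(D_iM_1')=\mathcal G(M_i')$, since each $D_i$ is a signed permutation matrix. Because the parameters are carried through unchanged, the four parametrized families $\{\mathcal G(M_i'(a,b,c))\}$ coincide as sets of graphs modulo isomorphism, which is exactly the statement of the lemma.

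The only point deserving a word of justification, rather than a calculation, is that the $D_i$ genuinely induce isomorphisms of the \emph{graphs} and not merely right-equivalences of the matrices. I would argue it directly from the definition of congruence: a signed permutation $S$ sends $\mathbf v-\mathbf w\equiv\pm\mathbf e_i\pmod M$ to $S\mathbf v-S\mathbf w\equiv\pm S\mathbf e_i=\pm\mathbf e_{\sigma(i)}\pmod{SM}$, so adjacencies and non-adjacencies are preserved and the induced map on $\mathbb Z^n/M\mathbb Z^n$ is a bijection onto $\mathbb Z^n/SM\mathbb Z^n$. The main ``obstacle'' is therefore only conceptual: one must notice that the sign pattern distinguishing $M_1',M_2',M_3',M_4'$ is precisely a pattern of row negations, so that—in contrast with Lemma~\ref{lem:Miso}—no mixed left/right sign matrices and no substitution on $a,b,c$ are required.
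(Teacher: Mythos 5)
Your proof is correct and is essentially identical to the paper's: the paper's entire proof consists of the same three identities, $M_1'=\mathrm{diag}(1,-1,1)\,M_2'=\mathrm{diag}(1,1,-1)\,M_3'=\mathrm{diag}(1,-1,-1)\,M_4'$, which (since these diagonal sign matrices are involutions) are exactly your row-negation identities $M_i'=D_iM_1'$, with no right factor and no reparametrization of $a,b,c$. Your added justification that left multiplication by a signed permutation matrix induces a graph isomorphism is the same principle the paper uses implicitly, so there is nothing to fault.
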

\begin{proof}
\begin{multline*}
M_1'
=\begin{pmatrix}1&0&0\\0&-1&0\\0&0&1\end{pmatrix}M_2'
=\begin{pmatrix}1&0&0\\0&1&0\\0&0&-1\end{pmatrix}M_3'
=\\\begin{pmatrix}1&0&0\\0&-1&0\\0&0&-1\end{pmatrix}M_4'
\end{multline*}
\end{proof}

\begin{theorem}Any linearly symmetric lattice graph of dimension 3 is isomorphic to another generated by one of the matrices
$$
M_1=\begin{pmatrix}a&c&b\\b&a&c\\c&b&a\end{pmatrix}\text{ or }
M_1'=\begin{pmatrix}a&b&c\\a&c&-b-c\\a&-b-c&b\end{pmatrix}
$$
for some $a,b,c\in\mathbb Z$.
\end{theorem}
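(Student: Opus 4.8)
The plan is to assemble the preceding lemmas into a short classification argument, since the real work has already been discharged. First I would appeal to Lemma \ref{lem:symorder3}: in dimension $3$, $\mathcal G(M)$ is linearly symmetric precisely when $LAut(\mathcal G(M),0)$ contains a signed permutation of order $3$. The order-$3$ signed permutations generate, up to the cyclic group they span, exactly the four groups listed immediately before the display of $P_1,\dots,P_4$ (the order-$6$ ones contribute nothing new, since their squares have order $3$). Hence the order-$3$ automorphism is realized by one of the matrices $P_i$, each of which has characteristic and minimal polynomial $x^3-1$.

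Next, for this fixed $P_i$ the automorphism criterion of \cite{IWONT10} supplies an integer matrix $Q$ with $P_iM=MQ$. Over $\mathbb Q$ we have $Q=M^{-1}P_iM$, so $Q$ is conjugate to $P_i$ and therefore shares the characteristic polynomial $x^3-1$. By Lemma \ref{lem:similarclasses} the only integral similarity classes with this polynomial are those of $Q_1$ and $Q_2$, so $Q\sim Q_1$ or $Q\sim Q_2$.

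Now I would invoke Lemma \ref{lem:rightsimilar} with the single fixed conjugating matrix $P=P_i$: two solutions of $P_iM=MQ$ and $P_iM'=M'Q'$ satisfy $M\cong M'\Leftrightarrow Q\sim Q'$. Since $P_i\sim Q_2$ by Lemma \ref{lem:similarities}, the symbolic solution $M_i$ (obtained by taking $Q=P_i$) represents the class $Q\sim Q_2$, while $M_i'$ is the chosen representative of the class $Q\sim Q_1$. Thus every linearly symmetric $\mathcal G(M)$ whose witnessing permutation is $P_i$ has $M$ right-equivalent to $M_i$ or to $M_i'$; as right equivalence yields isomorphic lattice graphs (recalled from \cite{Fiol}), $\mathcal G(M)$ is isomorphic to $\mathcal G(M_i)$ or to $\mathcal G(M_i')$.

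Finally, Lemmas \ref{lem:Miso} and \ref{lem:Mprimeiso} collapse the index $i$: the families $M_1,M_2,M_3,M_4$ coincide modulo graph isomorphism with $M_1$, and likewise $M_1',\dots,M_4'$ coincide with $M_1'$. Combining these, any linearly symmetric three-dimensional lattice graph is isomorphic to $\mathcal G(M_1)$ or $\mathcal G(M_1')$ for suitable $a,b,c\in\mathbb Z$, which is the claim. The one point where I would be most careful is the second step: one must remember that $Q$ is merely conjugate to $P_i$ rather than equal to it, which is exactly why both similarity classes $Q_1$ and $Q_2$ genuinely arise and must be carried along separately; everything else is bookkeeping over the four permutations.
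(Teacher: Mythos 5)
Your proposal is correct and follows essentially the same route as the paper's own proof: Lemma \ref{lem:symorder3} to obtain a witnessing $P\in\{P_1,\dots,P_4\}$, Lemma \ref{lem:similarclasses} to place $Q$ in the class of $Q_1$ or $Q_2$, Lemmas \ref{lem:rightsimilar} and \ref{lem:similarities} to pass to the symbolic families $M_i$ or $M_i'$, and Lemmas \ref{lem:Miso} and \ref{lem:Mprimeiso} to collapse the index $i$ to the representatives $M_1$ and $M_1'$. Your explicit remark that $Q=M^{-1}P_iM$ over $\mathbb Q$, hence has characteristic polynomial $x^3-1$, merely spells out a step the paper leaves implicit, so no further changes are needed.
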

\begin{proof}Let a linearly symmetric lattice graph $\mathcal G(M)$ with $M\in\mathbb Z^{3\times 3}$.
By Lemma \ref{lem:symorder3}, $P$ must exist with $PM=MQ$ with $P\in\{P_1,P_2,P_3,P_4\}$.
By Lemmas \ref{lem:rightsimilar} and \ref{lem:similarclasses} there exist $M'$ and $Q$
with $M\cong M'$, $Q\in\{Q_1,Q_2\}$ and $PM'=M'Q$.
If $Q=Q_2$, then by Lemma \ref{lem:similarities} we know $M''\in\{M_1,M_2,M_3,M_4\}$
with $PM''=M''P$, $M''\cong M$,
by Lemma \ref{lem:Miso}, it follows that $\mathcal G(M)\cong\mathcal G(M_1)$.
If $Q=Q_1$, then by Lemma \ref{lem:similarities} we know $M'\in\{M_1',M_2',M_3',M_4'\}$,
thus by Lemma \ref{lem:Mprimeiso} we obtain that $\mathcal G(M)\cong\mathcal G(M_1')$.
\end{proof}

\subsection{Non-linear automorphisms}\label{sub:non-linear}
In some cases, there are no linear automorphisms which give symmetry, although
some non-linear automorphisms do so.
The following theorem first stated in \cite{IWONT10} analyzes those cases.

\begin{definition} We say that $\mathbf a, \mathbf b, \mathbf c, \mathbf d \in \pm \mathcal B_n$ form a 4-cycle in $\mathcal G(M)$ if
$0=\mathbf a+\mathbf b+\mathbf c+\mathbf d$\footnote{each of $\{(\mathbf v,\mathbf v+\mathbf a,\mathbf v+\mathbf a+\mathbf b,\mathbf v+\mathbf a+\mathbf b+\mathbf c,\mathbf v+\mathbf a+\mathbf b+\mathbf c+\mathbf d):\mathbf v\in \mathbf G(M)\}$ is a cycle of length 4}.
Then, we say that $\mathcal G(M)$ has no nontrivial 4-cycles if  $\mathbf a,\mathbf b,\mathbf c,\mathbf d\in \pm\mathcal B_n$
such that $0=\mathbf a+\mathbf b+\mathbf c+\mathbf d$ which implies $\mathbf a=-\mathbf b$ or $\mathbf a=-\mathbf c$ or $\mathbf a=-\mathbf d$.
\end{definition}

\begin{theorem}\label{theo:linear} If the connected lattice graph $\mathcal G(M)$ has no
nontrivial 4-cycles then any graph automorphism with
$\phi(0)=0$ is a group automorphism of $\Z^n/M\Z^n$.
\end{theorem}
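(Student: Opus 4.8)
The plan is to exploit that $\mathcal G(M)$ is a Cayley graph on the abelian group $\Z^n/M\Z^n$ with connection set $S=\pm\mathcal B_n$, so the neighbours of $0$ are exactly the generators and every translation $\tau_{\mathbf g}:\mathbf x\mapsto\mathbf x+\mathbf g$ is an automorphism. Since $\phi(0)=0$, the map $\phi$ permutes $S$; write $\alpha:=\phi|_S$. The target identity $\phi(\mathbf x+\mathbf y)=\phi(\mathbf x)+\phi(\mathbf y)$ reduces, by a routine induction on the word length of $\mathbf y$ in $S$ (which generates because the graph is connected), to the \emph{key identity} $\phi(\mathbf g+\mathbf s)=\phi(\mathbf g)+\phi(\mathbf s)$ for every vertex $\mathbf g$ and every $\mathbf s\in S$. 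So the whole argument is about establishing this one identity.

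The engine is a \emph{square lemma}: if $\mathbf s,\mathbf t\in S$ with $\mathbf s\neq\pm\mathbf t$, then for any $\mathbf x$ the vertex $\mathbf x+\mathbf s+\mathbf t$ is the \emph{unique} common neighbour of $\mathbf x+\mathbf s$ and $\mathbf x+\mathbf t$ other than $\mathbf x$. I would prove this by writing an arbitrary common neighbour as $\mathbf x+\mathbf s+\mathbf s'=\mathbf x+\mathbf t+\mathbf t'$, so that $\mathbf s+\mathbf s'+(-\mathbf t)+(-\mathbf t')=0$ with all four terms in $\pm\mathcal B_n$, and then invoking the hypothesis that $\mathcal G(M)$ has no nontrivial $4$-cycles: of the three forced coincidences, the one giving $\mathbf s=\mathbf t$ is excluded, and the remaining two force the common neighbour to be $\mathbf x$ or $\mathbf x+\mathbf s+\mathbf t$. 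Because $\phi$ is a graph automorphism it preserves the relation ``common neighbour'', so it carries such squares to squares, which is exactly what transports the additive structure.

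To run the induction cleanly I would introduce, for each vertex $\mathbf g$, the conjugated automorphism $\psi_{\mathbf g}:=\tau_{-\phi(\mathbf g)}\circ\phi\circ\tau_{\mathbf g}$, which again fixes $0$, together with $\alpha_{\mathbf g}:=\psi_{\mathbf g}|_S$; then $\alpha_{\mathbf 0}=\alpha$ and the key identity is equivalent to $\alpha_{\mathbf g}=\alpha$ for all $\mathbf g$. By connectivity it suffices to show $\alpha_{\mathbf g}=\alpha_{\mathbf g+\mathbf t}$ across every edge $\{\mathbf g,\mathbf g+\mathbf t\}$. For a direction $\mathbf s\neq\pm\mathbf t$ this follows by applying the square lemma in both the domain and the image, the image square being genuine because $\alpha_{\mathbf g}$ is a sign-respecting bijection of $S$, so $\alpha_{\mathbf g}(\mathbf s)\neq\pm\alpha_{\mathbf g}(\mathbf t)$; reading off the fourth corner yields $\alpha_{\mathbf g+\mathbf t}(\mathbf s)=\alpha_{\mathbf g}(\mathbf s)$.

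The delicate step, which I expect to be the main obstacle, is the \emph{antipodal lemma}: any automorphism fixing $0$ satisfies $\psi(-\mathbf s)=-\psi(\mathbf s)$. It is needed both to settle the borderline directions $\mathbf s=\pm\mathbf t$ and to justify the sign-respecting bijectivity used above. The plan is to characterise antipodal pairs graph-theoretically: for distinct $\mathbf u,\mathbf v\in S$, the square lemma shows that if $\mathbf v\neq\pm\mathbf u$ then $\mathbf u+\mathbf v$ is a common neighbour of $\mathbf u$ and $\mathbf v$ distinct from $0$, whereas the no-nontrivial-$4$-cycle hypothesis forces an antipodal pair $\{\mathbf u,-\mathbf u\}$ to have $0$ as its \emph{only} common neighbour. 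Hence ``$\{\mathbf u,\mathbf v\}$ is antipodal'' is equivalent to ``$\mathbf u,\mathbf v$ are distinct neighbours of $0$ with no common neighbour besides $0$'', a property preserved by $\phi$, and injectivity then pins down the sign. The genuine wrinkle is a generator with $2\mathbf s\equiv 0\pmod M$, where the pair degenerates to a single vertex; this case I would dispatch separately, noting that $\phi$ maps involutive generators to involutive generators so that $\psi(-\mathbf s)=\psi(\mathbf s)=-\psi(\mathbf s)$ holds trivially. With the square and antipodal lemmas in hand, local constancy of $\alpha_{\mathbf g}$ gives the key identity and hence additivity of $\phi$; being a bijective additive self-map of $\Z^n/M\Z^n$, $\phi$ is a group automorphism.
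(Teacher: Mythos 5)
The paper itself contains no proof of Theorem~\ref{theo:linear}: it is quoted from \cite{IWONT10} (``first stated in \cite{IWONT10}''), so there is no in-paper argument to compare yours against, and your proposal must be judged on its own. On its own it is correct. The reduction to the local identity $\phi(\mathbf g+\mathbf s)=\phi(\mathbf g)+\alpha(\mathbf s)$, the square lemma (an arbitrary common neighbour of $\mathbf x+\mathbf s$ and $\mathbf x+\mathbf t$ gives $\mathbf s+\mathbf s'+(-\mathbf t)+(-\mathbf t')=0$, and the no-nontrivial-4-cycle hypothesis leaves only $\mathbf x$ and $\mathbf x+\mathbf s+\mathbf t$ once $\mathbf s=\mathbf t$ is excluded), the conjugated stabilizers $\psi_{\mathbf g}=\tau_{-\phi(\mathbf g)}\circ\phi\circ\tau_{\mathbf g}$, and the graph-theoretic characterisation of antipodal pairs (distinct $\mathbf u,\mathbf v\in S$ with $\mathbf v\neq\pm\mathbf u$ share the common neighbour $\mathbf u+\mathbf v\neq 0$, while $\{\mathbf u,-\mathbf u\}$ shares only $0$) all check out, and there is no circularity: the antipodal lemma uses only the square lemma, while local constancy uses both. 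Two spots in your sketch deserve one explicit line each, though neither is a real gap. First, your assertion that $\phi$ maps involutive generators to involutive generators is not automatic --- a graph automorphism has no a priori access to the group structure --- but it follows from your own characterisation: if $2\mathbf u\equiv 0 \pmod M$ then every $\mathbf v\in S\setminus\{\mathbf u\}$ satisfies $\mathbf v\neq\pm\mathbf u$, hence shares the common neighbour $\mathbf u+\mathbf v\neq 0$ with $\mathbf u$; so involutive generators are exactly the neighbours of $0$ belonging to no antipodal pair, a property $\phi$ preserves. Second, the borderline direction $\mathbf s=\mathbf t$ in the local-constancy step is not reached by the square lemma, but it falls to the antipodal lemma applied at $\mathbf g+\mathbf t$ combined with the direct computation $\alpha_{\mathbf g+\mathbf t}(-\mathbf t)=\phi(\mathbf g)-\phi(\mathbf g+\mathbf t)=-\alpha_{\mathbf g}(\mathbf t)$, giving
\begin{equation*}
\alpha_{\mathbf g+\mathbf t}(\mathbf t)=-\alpha_{\mathbf g+\mathbf t}(-\mathbf t)=\alpha_{\mathbf g}(\mathbf t),
\end{equation*}
after which connectivity, the word-length induction, and bijectivity deliver the group automorphism exactly as you say. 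In effect you have supplied a complete, self-contained proof of a result the paper only cites, via the standard rigidity argument for Cayley graphs over abelian groups without nontrivial $4$-cycles: unique squares transport the local additive structure, and the automorphism is forced to be affine, hence (fixing $0$) additive.
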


For $n=2$ all symmetric lattice graphs which are not linearly symmetric were determined, which are:
\begin{itemize}
\item The ones which had two linearly independent nontrivial 4-cycles.
\item The ones with exactly one nontrivial 4-cycle.
\end{itemize}
The first item directly produces the matrices (plus their divisors), since
they are combinations of $(4,0),(3,1),(2,2)$ with the appropriate changes.
For the second item, it was seen that the only ones were the family
$\begin{pmatrix}m&2\\n&2\end{pmatrix}$, which are the only ones which fail Adam-isomorphy \cite{Zerovnik}.\\

For more dimensions, first we note all the possible nontrivial 4-cycles (up to adding zeroes and sign permuting):
\begin{itemize}
\item $(4)$, first appearing at dimension $n=1$
\item $(3,1)$, $(2,2)$ first appearing at $n=2$
\item $(2,1,1)$ first appearing at $n=3$
\item $(1,1,1,1)$ first appearing at $n=4$
\end{itemize}
Symmetric graphs which are not linearly symmetric lattice graphs can be obtained
by using one of the 4-cycles as a column, completing the matrix
and checking if the matrix or one of its divisors generates a symmetric lattice graph.
Here we will not perform the complete characterization of the symmetric lattice graph
of dimension 3 having nonlinear automorphisms, since it does not contribute any insight
to the discussion in the main paper.

\bibliographystyle{plain}
\bibliography{main}

%
%

\end{document}